\newcommand{\F}[1]{\ensuremath{\mathbb{F}_{#1}}}
\newcommand{\Fq}{\F{q}}
\newcommand{\Fqn}[2]{\ensuremath{\mathbb{F}_{#1}^{#2}}}
\newcommand{\Fx}[1]{\ensuremath{\F{#1}[X]}}
\newcommand{\Fqx}{\Fx{q}}
\newcommand{\Fxy}[1]{\ensuremath{\F{#1}[X,Y]}}
\newcommand{\Fqxy}{\Fxy{q}}
\newcommand{\wdeg}[2]{\operatorname{wdeg}_{#1,#2}}
\newcommand{\yC}[2]{#1_{[#2]}} 
\newcommand{\RS}[2]{\ensuremath{\mathcal{GRS}(#1,#2)}}
\newcommand{\weight}[1]{\operatorname{weight}(#1)}
\newcommand{\defeq}{\triangleq}
\newcommand{\half}{\tfrac 1 2}
\newcommand{\diag}{\mathrm{diag}} 
\newcommand{\LT}[1]{\textnormal{\footnotesize LT}(#1)}
\newcommand{\LP}[1]{\textnormal{\footnotesize LP}(#1)}
\newcommand{\OD}[1]{\operatorname{D}(#1)}
\newcommand{\ZZ}{\mathbb Z}
\newcommand{\NN}{\mathbb N}
\newcommand{\M}[2][\empty]{
  \ifthenelse{\equal{#1}{\empty}}
    {\ensuremath{\mathcal{#2}}}
    {\ensuremath{\mathcal{#2}_{#1}}}
}
\newcommand{\Mod}[2][\empty]{
  \ifthenelse{\equal{#1}{\empty}}
    {\ensuremath{#2}}
    {\ensuremath{#2_{#1}}}
}
\newcommand{\T}[1]{^{(#1)}} 
\newcommand{\sell}[1][s,\ell]{{#1}}  
\newcommand{\sellI}{\sell[s,\ell+1]}
\newcommand{\sellII}{\sell[s+1,\ell+1]}
\newcommand{\stepI}{^\mathrm I}
\newcommand{\stepII}{^\mathrm{II}}
\def\intermediator#1{\hat{#1}}
\def\is{\intermediator s}
\def\iell{\intermediator \ell}
\def\itau{\intermediator \tau}
\def\iQ{\intermediator Q}
\newcommand{\isell}[1][\is,\iell]{\sell[#1]}
\def\var#1{\textup{\textsf{#1}}}
\newcommand{\printalgos}[1]{%
  \begin{table}
    \begin{center}%
  \begin{minipage}{0.9\textwidth}%
      \begin{algorithm}[H]%
\DontPrintSemicolon%
\SetAlgoVlined%
\SetKwInput{KwIn}{{Input}}%
\SetKwInput{KwOut}{{Output}}%
#1%
\end{algorithm}%
\end{minipage}
\end{center}%
\end{table}}
\definecolor{orange}{rgb}{1,0.5,0}
\newtheorem{thm}{Theorem}
\newtheorem{cor}[thm]{Corollary}
\newtheorem{lem}[thm]{Lemma}
\newtheorem{exa}[thm]{Example} 
\newtheorem{defi}[thm]{Definition} 
\newenvironment{rem}{\trivlist\item\textbf{Remark}\hskip .5em}{\endtrivlist}
\renewcommand{\tilde}{\widetilde}
\begin{document}

\def\thetitle{Multi-Trial Guruswami--Sudan Decoding for Generalised Reed--Solomon Codes}
\title{\thetitle}
\titlerunning{\thetitle}

\author{Johan S.~R. Nielsen \and Alexander Zeh}
\authorrunning{J.~S.~R.~Nielsen \and A.~Zeh}
\institute{J.~S.~R.~Nielsen is with the Institute of Communications Engineering, University of Ulm, Germany.\\ 
\email{\texttt{jsrn@jsrn.dk}} \\
A.~Zeh was with the Institute of Communications Engineering, University of Ulm, Ulm, Germany and Research Center INRIA Saclay - \^{I}le-de-France, \'{E}cole Polytechnique, France and is now with the Computer Science Department, Technion---Israel Institute of Technology, Haifa, Israel.\\
\email{\texttt{alex@codingtheory.eu}}
}

\date{\today}
\maketitle

\begin{abstract}
An iterated refinement procedure for the Guruswami--Sudan list decoding algorithm for Generalised Reed--Solomon codes based on Alekhnovich's module minimisation is proposed.
The method is parametrisable and allows variants of the usual list decoding approach.
In particular, finding the list of \emph{closest} codewords within an intermediate radius can be performed with improved average-case complexity while retaining the worst-case complexity.

We provide a detailed description of the module minimisation, reanalysing the Mulders--Storjohann algorithm and drawing new connections to both Alekhnovich's algorithm and Lee--O'Sullivan's.
Furthermore, we show how to incorporate the re-encoding technique of K\"otter and Vardy into our iterative algorithm.
\end{abstract}

\keywords{Guruswami--Sudan \and List Decoding \and Multi-Trial \and Reed--Solomon Codes \and Re-Encoding Transformation}

\section{Introduction}
Since the discovery of a polynomial-time hard-decision list decoder for Generalised Reed--Solomon (GRS)
codes by Guruswami and Sudan (GS)~\cite{sudan97,guruSudan99} in the late 1990s, much work has been done to speed up the two main parts of the algorithm: interpolation and root-finding.
Notably, for interpolation Beelen and Brander~\cite{beelenBrander} mixed the module reduction approach by Lee and O'Sullivan~\cite{leeOSullivan08} with the parametrisation of Zeh~\textit{et al.}~\cite{zeh11interpol}, and employed the fast module reduction algorithm by Alekhnovich~\cite{alekhnovich05}.
Bernstein~\cite{bernstein11} pointed out that an asymptotically faster variant can be achieved by using the reduction algorithm by Giorgi~\textit{et al.}~\cite{giorgi03}.
Very recently Chowdhury~\textit{et al.}~\cite{chowdhury_faster_2014} used fast displacement-rank linear-algebraic solvers to achieve the fastest known approach.
The GS approach was generalised by K\"otter and Vardy to a soft-decision scenario~\cite{koetter03}, and the same authors also presented a complexity-reducing re-encoding transformation~\cite{kotter_complexity_2003,koetter11}.
Cassuto~\textit{et al.}~\cite{cassuto_average_2013} and Tang~\textit{et al.}~\cite{siyun12}
proposed modified interpolation-based list decoders with reduced average-complexity.

For the root-finding step, one can employ the method of Roth and Ruckenstein~\cite{rothRuckenstein00} in a divide-and-conquer fashion, as described by Alekhnovich~\cite{alekhnovich05}.
This step then becomes an order of magnitude faster than interpolation, leaving the latter as the main target for further optimisations.

For a given GRS code, the GS algorithm has two parameters, both positive integers: the interpolation multiplicity $s$ and the list size $\ell$.
Together with the code parameters they determine the decoding radius $\tau$. To achieve a higher decoding radius for some given GRS code, one needs higher $s$ and $\ell$, and the value of these strongly influences the running time of the algorithm.

In this work, we present a novel iterative method: we first solve the interpolation problem for $s = \ell = 1$ and then iteratively refine this solution for increasing $s$ and $\ell$. In each step of our algorithm, we obtain a valid solution to the interpolation problem for these intermediate parameters.
The method builds upon that of Beelen--Brander~\cite{beelenBrander}, but a new analysis of the computational engine---Alekhnovich's module minimisation algorithm---reveals that each iteration runs faster than otherwise expected.

\def\inter#1{\hat #1}
The method therefore allows a fast \emph{multi-trial} list decoder when our aim is just to find the list of codewords with minimal distance to the received word.
At any time during the refinement process, we will have an interpolation polynomial for intermediate parameters $\inter s \leq s$, $\inter \ell \leq \ell$ yielding an intermediate decoding radius $d/2 \leq \inter \tau\leq \tau$, where $d$ is the minimum distance.
If we perform the root-finding step of the GS algorithm on this, all codewords with distance at most $\inter \tau$ from the received are returned; if there are any such words, we break computation and return those; otherwise we continue the refinement.
We can choose any number of these trials, e.g. for each possible intermediate decoding radius between $d/2$ and the target $\tau$.

Since the root-finding step of GS is less complex than the interpolation step, this multi-trial decoder will have the same asymptotic worst-case complexity as the usual GS using the Beelen--Brander interpolation~\cite{beelenBrander}.
However, its average-case complexity is better since due to the properties of the underlying channel it is more probable to have a small number of errors rather than a big one.

This contribution is structured as follows.
In the next section we give necessary preliminaries and state the GS interpolation problem for decoding GRS codes.
In Section~\ref{sec_modulemini} we give a definition and properties of minimal matrices.
We describe and reanalyse the conceptually simple Mulders--Storjohann algorithm~\cite{mulders03} for bringing matrices to this form.
We also give a fresh look at Alekhnovich's algorithm~\cite{alekhnovich05} simply as a divide-\&-conquer variant of Mulders--Storjohann, and our new analysis can carry over.
Our new iterative procedure is explained in detail in Section~\ref{sec_iterative} and the incorporation of the re-encoding transformation~\cite{koetter11} is described in Section~\ref{sec_re-encoding}.
In Section~\ref{sec_simulation} we present simulation results.

Parts of these results were presented at WCC 2013 \cite{nielsen13};
compared to that article, this version contains the incorporation of the re-encoding scheme, a full example of the algorithm, simulation results, as well as a more detailed description of the module minimisation procedure.

\section{Preliminaries} \label{sec_prelim}
\subsection{Notation}\label{ssec_notation}
Let $\Fq$ be the finite field of order $q$ and let $\Fqx$ be the polynomial ring over 
$\Fq$ with indeterminate $X$. Let $\Fqxy$ denote the polynomial ring in the variables $X$ and $Y$ 
and let $\wdeg{u}{v} X^iY^j \defeq ui + vj$ be the $(u,v)$-weighted degree of $X^iY^j$.

A vector of length $n$ is denoted by $\vec{v} = (v_0, \dots, v_{n-1})$.
If $\vec{v}$ is a vector over $\Fqx$, let $\deg \vec{v} \defeq \max_i\{ \deg v_i(X) \}$.
We introduce the leading position as 
\begin{equation} \label{def-leadingposition}
\LP{\vec{v}}= \max_i \big\{ i \mid \deg v_i(X) = \deg \vec{v} \big\}
\end{equation}
and the leading term $\LT{\vec{v}} = v_{\LP{\vec{v}}}$ is the term at this position.
An $m \times n$ matrix is denoted by $\M{V}=\| v_{i,j}\|_{i=0,j=0}^{m-1,n-1}$. The rows of such a matrix are denoted by bold lower-case letters, e.g. $\vec v_0, \ldots, \vec v_{m-1}$.
Furthermore, let the degree of such a polynomial matrix be $\deg \M{V} = \sum_{i=0}^{m-1} \deg \vec{v}_i$.
Modules are denoted by capital letters such as $M$.

\subsection{Interpolation-Based Decoding of GRS Codes} \label{ssec_ipdecoding}
Let $\alpha_0, \dots, \alpha_{n-1}$ be $n$ nonzero distinct elements 
of $\Fq$ with $n < q$ and let $w_0,\dots,w_{n-1}$ be $n$ (not necessarily distinct) nonzero elements of $\Fq$.
A GRS code \RS{n}{k} of length $n$ and dimension $k$ over $\Fq$ is given by
\begin{align} \label{eq_defRScode}
  \RS{n}{k} \defeq \big\{ (w_0 f(\alpha_{0}), \dots, w_{n-1}f(\alpha_{n-1})) : f(X) \in \Fqx,\, \deg f(X) < k \big\}.
\end{align}
GRS codes are Maximum Distance Separable (MDS) codes, i.e., their minimum Hamming distance is $d=n-k+1$.
We shortly explain the interpolation problem of GS~\cite{guruSudan99,sudan97} for list decoding GRS codes up to the Johnson radius~\cite{johnson62,bassalygo_new_1965} in the following.

\begin{thm}[Guruswami--Sudan for GRS Codes~\cite{guruSudan99,sudan97}] \label{thm_GSproblem}
Let $\vec{c} \in \RS{n}{k}$ and $f(X)$ be the corresponding information polynomial as defined in~\eqref{eq_defRScode}. Let $\vec{r} = (r_0,\dots,r_{n-1}) = \vec{c} + \vec{e}$ be a received word where $\weight{\vec e} \leq \tau$. Let $r_i^{\prime}$ denote $r_{i}/w_{i}$.

Let $Q(X,Y) \in \Fqxy$ be a nonzero polynomial that passes through the $n$ points $(\alpha_0,r_0^{\prime}),$ $ \dots, (\alpha_{n-1},r_{n-1}^{\prime})$ with multiplicity $s \geq 1$, has $Y$-degree at most $\ell$, and $\wdeg{1}{k-1} Q(X,Y) < s(n-\tau)$.
Then $(Y-f(X)) \mid Q(X,Y)$.
\end{thm}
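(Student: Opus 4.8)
The plan is to show that the univariate polynomial $g(X) \defeq Q(X, f(X)) \in \Fqx$ is identically zero, since then $Y - f(X)$, being (as a polynomial in $Y$ over the ring $\Fqx$) the minimal polynomial witnessing $Y = f(X)$ as a root, must divide $Q(X,Y)$; concretely, dividing $Q$ by $Y - f(X)$ in $\Fqx[Y]$ leaves a remainder $g(X)$, so $g \equiv 0$ gives the claim. The strategy to prove $g \equiv 0$ is the standard ``degree versus number of zeros'' argument: bound $\deg g$ from above, exhibit enough roots of $g$ counted with multiplicity, and observe that the root count exceeds the degree bound.

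First I would bound $\deg g$. Writing $Q(X,Y) = \sum_{j=0}^{\ell} Q_j(X) Y^j$, we have $g(X) = \sum_j Q_j(X) f(X)^j$, and since $\deg f < k$, the $j$-th term has degree at most $\deg Q_j + j(k-1)$. By hypothesis $\wdeg{1}{k-1} Q(X,Y) < s(n-\tau)$, which is precisely the statement that $\deg Q_j + j(k-1) < s(n-\tau)$ for every $j$ with $Q_j \neq 0$. Hence $\deg g < s(n-\tau)$, i.e. $\deg g \leq s(n-\tau) - 1$.

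Next I would count roots. For each coordinate $i$ that is \emph{not} in error, $r_i' = f(\alpha_i)$, so the point $(\alpha_i, r_i') = (\alpha_i, f(\alpha_i))$ lies on the curve $Y = f(X)$; since $Q$ passes through $(\alpha_i, r_i')$ with multiplicity $s$, substituting $Y = f(X)$ forces $(X - \alpha_i)^s \mid g(X)$. The key mechanism here — the one point that needs a short argument rather than a one-liner — is this implication from ``multiplicity $s$ as a bivariate zero at $(\alpha_i, f(\alpha_i))$'' to ``$(X-\alpha_i)^s$ divides $g$''; it follows by shifting coordinates so the point is the origin and noting that every monomial of the shifted $Q$ has total degree $\geq s$, hence contributes a factor $X^{\geq s}$ after the substitution $Y \mapsto f(X) - f(\alpha_i)$, which is divisible by $X - \alpha_i$. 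The number of non-error coordinates is at least $n - \weight{\vec e} \geq n - \tau$, and the $\alpha_i$ are distinct, so $g$ is divisible by $\prod_{i \text{ correct}} (X - \alpha_i)^s$, a polynomial of degree at least $s(n - \tau)$.

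Finally I would combine the two bounds: $g$ is a polynomial of degree at most $s(n-\tau) - 1$ that is divisible by a polynomial of degree at least $s(n-\tau)$, which is impossible unless $g \equiv 0$. This yields $Q(X, f(X)) = 0$ and therefore $(Y - f(X)) \mid Q(X,Y)$. I expect the only real subtlety to be a careful treatment of the multiplicity-to-divisibility step (including pinning down the precise definition of ``passes through a point with multiplicity $s$'' as vanishing of all Hasse derivatives of order $< s$, equivalently all coefficients of the shifted polynomial of total degree $< s$); everything else is bookkeeping with the weighted degree.
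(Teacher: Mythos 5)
Your proof is correct, but note that the paper does not prove this theorem at all: it is stated as a known result and cited to Guruswami--Sudan and Sudan, so there is no in-paper argument to compare against. What you give is the standard proof from that literature, and every step checks out: the remainder of dividing $Q(X,Y)$ by the $Y$-monic polynomial $Y-f(X)$ is $Q(X,f(X))$, the weighted-degree hypothesis bounds $\deg Q(X,f(X))$ below $s(n-\tau)$ because $\deg f \le k-1$, and the coordinate-shift argument correctly turns ``multiplicity $s$ at $(\alpha_i, f(\alpha_i))$'' into $(X-\alpha_i)^s \mid Q(X,f(X))$ at each of the at least $n-\tau$ error-free positions, forcing $Q(X,f(X))=0$. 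Incidentally, your multiplicity-to-divisibility step uses exactly the same shift-the-point-to-the-origin technique that the paper does employ in its proof of Lemma~\ref{lem_QdivG}, so your argument is fully in the spirit of the paper's toolkit.
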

One can easily show that a polynomial $Q(X,Y)$ that fulfills the above conditions can be constructed whenever $E(s,\ell,\tau) > 0$, where
\begin{equation}\label{eqn_Edef}
  E(s,\ell,\tau) \defeq (\ell+1)s(n-\tau) - \tbinom{\ell+1} 2 (k-1) - \tbinom{s+1} 2 n
\end{equation}
is the difference between the maximal number of coefficients of $Q(X,Y)$, and the number of homogeneous linear equations on $Q(X,Y)$ specified by the interpolation constraint.
This determines the maximal number of correctable errors, and one can show that satisfactory $s$ and $\ell$ can always be chosen whenever $\tau < n-\sqrt{n(k-1)}$.
\begin{defi}[Permissible Triples] \label{def_permiss}
An integer triple $(s,\ell,\tau) \in (\ZZ_+)^3$ is \emph{permissible} if $E(s,\ell,\tau) > 0$.
\end{defi}
We define also the decoding radius-function $\tau(s,\ell)$ as the greatest integer such that $(s,\ell,\tau(s,\ell))$ is permissible.

It is well-known that $E(s,\ell,\tau) > 0$ for $s > \ell$ implies $\tau < d/2 $, which is half the minimum distance.
Therefore, it never makes sense to consider $s > \ell$, and in the remainder we will always assume $s \leq \ell$.
Furthermore, we will also assume $s, \ell \in O(n^2)$ since this e.g.~holds for any $\tau$ for the closed-form expressions in \cite{guruSudan99}.

Let us illustrate the above. The following will be a running example throughout the article.
\begin{exa}
  \label{ex_rs164param}
  A $\RS{16}{4}$ code over \F{17} can uniquely correct $\tau_0 = (n-k)/2=6$ errors; unique decoding corresponds to $s_0 = \ell_0 = 1$ and one can confirm that $E(1, 1, 6) > 0$. 
  To attain a decoding radius $\tau_1 = 7$, one can choose $s_1 = 1$ and $\ell_1 = 2$ in order to obtain a permissible triple.
  Also $(1, 3, 7)$ is permissible, though less interesting since it does not give improved decoding radius.
  However, one finds $(2, 4, 8)$ and $(28, 64, 9)$ are permissible.
  Since $n-\sqrt{n(k-1)} < 10 $, there are no permissible triples for greater decoding radii.
\end{exa}

\subsection{Module Reformulation of Guruswami--Sudan} \label{ssec_modreform}
Let $\Mod[\sell]{M} \subset \Fqxy$ denote the space of all bivariate polynomials passing through the points $(\alpha_0,r_0^{\prime}),\ldots,(\alpha_{n-1},r_{n-1}^{\prime})$ with multiplicity $s$ and $Y$-degree at most $\ell$.
We are searching for an element of $\Mod[\sell]{M}$ with low $(1,k-1)$-weighted degree.

Following the ideas of Lee and O'Sullivan~\cite{leeOSullivan08}, we can first remark that $\Mod[\sell]{M}$ is an $\Fqx$-module.
Second, we can give an explicit basis for $\Mod[\sell]{M}$.
Define first two polynomials $G(X) \defeq \prod_{i=0}^{n-1}(X-\alpha_i)$ as well as $R(X)$ in $\Fqx$ as the unique Lagrange interpolation polynomial going through the points $(\alpha_i, r_i^\prime)$ for $i = 0, \ldots, n-1$.
Denote by $\yC Q t(X)$ the $Y^t$-coefficient of $Q(X,Y)$ when $Q$ is regarded over $\Fqx[Y]$.
\begin{lem}\label{lem_QdivG}
  Let $Q(X,Y) = \sum_{i=0}^t \yC Q i(X)Y^i \in \Mod[\sell]{M}$ with $\wdeg{0}{1}Q = t < s$. Then $G(X)^{s-t} \mid \yC Q t(X)$.
\end{lem}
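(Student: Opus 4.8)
The plan is to exploit the multiplicity-$s$ vanishing conditions at each $\alpha_i$ and track how they constrain the coefficient $\yC Q t(X)$. Fix a point $(\alpha_i, r_i')$. Writing $Q(X,Y) = \sum_{j=0}^t \yC Q j(X) Y^j$, the condition that $Q$ passes through $(\alpha_i,r_i')$ with multiplicity $s$ means that, after the shift $X \mapsto X+\alpha_i$, $Y \mapsto Y + r_i'$, every monomial $X^aY^b$ appearing in the shifted polynomial has $a+b \geq s$. First I would perform the $Y$-shift alone: write $\tilde Q(X,Y) \defeq Q(X, Y+r_i') = \sum_{j=0}^t \tilde q_j(X) Y^j$ with $\tilde q_t(X) = \yC Q t(X)$ (the top coefficient is unchanged by a $Y$-shift), and then note the multiplicity condition says $(X-\alpha_i)^{s-b} \mid \tilde q_b(X)$ for each $b$ with $b < s$; in particular $(X-\alpha_i)^{s-t} \mid \yC Q t(X)$.

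The key step is therefore the claim: if $Q$ vanishes at $(\alpha_i,r_i')$ with multiplicity $s$, then $(X-\alpha_i)^{s-t}$ divides $\yC Q t(X) = \tilde q_t(X)$. To see this, expand $\tilde q_b(X)$ around $\alpha_i$ as $\tilde q_b(X) = \sum_{a\ge 0} c_{a,b}(X-\alpha_i)^a$; the multiplicity-$s$ condition is exactly that $c_{a,b} = 0$ whenever $a+b < s$. Taking $b = t$ gives $c_{a,t} = 0$ for all $a < s-t$, which is precisely $(X-\alpha_i)^{s-t} \mid \yC Q t(X)$. Running this over all $i = 0,\dots,n-1$ and using that the $\alpha_i$ are distinct, so the polynomials $(X-\alpha_i)^{s-t}$ are pairwise coprime, I conclude $\prod_{i=0}^{n-1}(X-\alpha_i)^{s-t} = G(X)^{s-t}$ divides $\yC Q t(X)$, as desired.

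The main obstacle is purely a matter of making the multiplicity bookkeeping precise: one must be careful that a pure $Y$-translation does not alter the top $Y$-coefficient $\yC Q t(X)$ (so that we really are constraining $\yC Q t$ itself and not some combination), and that the Hasse-derivative/shift definition of ``passes through $(\alpha_i,r_i')$ with multiplicity $s$'' does yield the coefficient vanishing $c_{a,b}=0$ for $a+b<s$ after the full $(X,Y)$-shift. Once that correspondence is pinned down, the divisibility for a single point is immediate and the global statement follows by coprimality. The hypothesis $t < s$ is used only to ensure $s-t \geq 1$, so the conclusion is nontrivial.
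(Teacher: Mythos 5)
Your proof is correct and follows essentially the same route as the paper's: both use the $(X,Y)$-shift characterisation of multiplicity at each point $(\alpha_i,r_i')$, observe that the top $Y$-coefficient $\yC Q t(X)$ is unaffected by the $Y$-translation, deduce $(X-\alpha_i)^{s-t}\mid \yC Q t(X)$, and combine over the distinct $\alpha_i$ by coprimality. The only cosmetic difference is that you perform the $Y$-shift first and phrase the $X$-part as a Taylor expansion about $\alpha_i$, whereas the paper shifts both variables at once.
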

\begin{proof}
  $Q(X,Y)$ interpolates the $n$ points $(\alpha_i, r_i^{\prime})$ with multiplicity $s$, so for any $i$, $Q(X+\alpha_i, Y+r_i^{\prime}) = \sum_{j=0}^t \yC Q j(X+\alpha_j)(Y+r_j^{\prime})^j$ has no monomials of total degree less than $s$.
  Multiplying out the $(Y+r_j^{\prime})^j$-terms, $\yC Q t(X+\alpha_j)Y^t$ is the only term with $Y$-degree $t$.
  Therefore $\yC Q t (X+\alpha_j)$ can have no monomials of degree less than $s-t$, which implies $(X-\alpha_i)^{s-t} \mid \yC Q t(X)$.
  As this holds for any $i$, we proved the lemma.
  \qed
\end{proof}

\begin{thm}
  \label{thm_Mbasis}
  The module $\Mod[\sell]{M}$ is generated as an $\Fqx$-module by the $\ell+1$ polynomials $P\T i(X,Y) \in \Fqxy$ given by
  \begin{align*}
    P\T t(X,Y) &= G(X)^{s-t}(Y-R(X))^t,         && \textrm{for } 0 \leq t < s,\\
    P\T t(X,Y) &= Y^{t-s}(Y-R(X))^s,            && \textrm{for } s \leq t \leq \ell.
  \end{align*}
\end{thm}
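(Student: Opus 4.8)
The plan is to prove the two inclusions separately: first that every $P\T{t}$ lies in $\Mod[\sell]{M}$, and then that every element of $\Mod[\sell]{M}$ is an $\Fqx$-linear combination of the $P\T{t}$. The first inclusion gives $\langle P\T 0,\dots,P\T\ell\rangle \subseteq \Mod[\sell]{M}$ since $\Mod[\sell]{M}$ is a module, and the second gives the reverse containment, which together establish that the $P\T{t}$ generate.

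For membership, I would fix an interpolation point $(\alpha_i, r_i')$ and pass to the shifted coordinates $X \mapsto X+\alpha_i$, $Y\mapsto Y+r_i'$, under which ``multiplicity $s$ at $(\alpha_i,r_i')$'' becomes ``no monomial of total degree less than $s$''. Since the $\alpha_j$ are distinct, $G(X+\alpha_i)$ is divisible by $X$ exactly once, so $G(X+\alpha_i)^{s-t}$ contributes total order $s-t$; and since $R(\alpha_i)=r_i'$, the polynomial $Y+r_i'-R(X+\alpha_i)$ has vanishing constant term, hence total degree $\geq 1$, so $(Y-R(X))^t$ contributes total order $\geq t$ after shifting. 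Multiplying, $P\T t$ with $t<s$ has total order $\geq (s-t)+t=s$ at the shifted origin. For $s\leq t\leq \ell$ the factor $Y^{t-s}$ becomes $(Y+r_i')^{t-s}$, of order $0$, while $(Y-R(X))^s$ again has order $\geq s$, so the total order is again $\geq s$. As this holds for every $i$ and each $P\T t$ clearly has $Y$-degree $t\leq\ell$, we conclude $P\T t\in\Mod[\sell]{M}$.

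For the generation claim I would induct on $\wdeg01 Q$, the $Y$-degree of $Q\in\Mod[\sell]{M}$, with $Q=0$ trivial. Write $t'=\wdeg01 Q\leq\ell$ and let $\yC Q{t'}(X)$ be the leading $Y$-coefficient. The key structural fact is that $P\T{t'}$ has $Y$-degree exactly $t'$, with leading $Y$-coefficient $1$ when $t'\geq s$ and $G(X)^{s-t'}$ when $t'<s$. If $t'\geq s$, then $Q-\yC Q{t'}(X)\,P\T{t'}\in\Mod[\sell]{M}$ has strictly smaller $Y$-degree and we finish by induction. If $t'<s$, Lemma~\ref{lem_QdivG} applies and yields $G(X)^{s-t'}\mid\yC Q{t'}(X)$, say $\yC Q{t'}(X)=a(X)G(X)^{s-t'}$; then $Q-a(X)\,P\T{t'}\in\Mod[\sell]{M}$ has $Y$-degree $<t'$, and induction again finishes. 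Accumulating the coefficients used in the successive reductions expresses $Q$ as the desired $\Fqx$-combination of $P\T 0,\dots,P\T\ell$.

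The computations are all elementary; the one place needing genuine care is the membership claim $P\T t\in\Mod[\sell]{M}$ for $t<s$, i.e.\ verifying that the product of a power of $G$ with a power of $Y-R(X)$ really attains multiplicity $s$ at every interpolation point --- this is exactly where the simplicity of the zeros of $G$ and the interpolation property $R(\alpha_i)=r_i'$ are used. Once Lemma~\ref{lem_QdivG} is in hand, the triangularity of the $P\T t$ in the $Y$-degree makes the generation argument essentially automatic.
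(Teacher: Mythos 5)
Your proposal is correct and follows essentially the same route as the paper: membership of each $P\T t$ via additivity of multiplicities of $G$ and $Y-R(X)$ at the interpolation points (you just spell out the coordinate-shift verification that the paper leaves as ``easy to see''), and generation by reducing the leading $Y$-coefficient of $Q$ against the triangular family $P\T \ell, \dots, P\T 0$, invoking Lemma~\ref{lem_QdivG} exactly where the leading coefficient of $P\T t$ becomes $G(X)^{s-t}$. No gaps.
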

\begin{proof}
  It is easy to see that each $P\T t(X,Y) \in \Mod[\sell]{M}$ since both $G(X)$ and $(Y-R(X))$ go through the $n$ points $(\alpha_i, r_i^\prime)$ with multiplicity one, and that $G(X)$ and $(Y-R(X))$ divide $P\T t (X,Y)$ with total power $s$ for each $t$.

  To see that any element of $\Mod[\sell]{M}$ can be written as an $\Fqx$-combination of the $P\T t(X,Y)$, let $Q(X,Y)$ be some element of $\Mod[\sell]{M}$.
  Then the polynomial $Q\T{\ell-1}(X,Y) = Q(X,Y) - \yC Q \ell (X) P\T \ell(X,Y) $ has $Y$-degree at most $\ell-1$. Since both $Q(X,Y)$ and $P\T \ell(X,Y)$ are in $\Mod[\sell]{M}$, so must $Q\T{\ell-1}(X,Y)$ be in $\Mod[\sell]{M}$.
  Since $P\T t(X,Y)$ has $Y$-degree $t$ and $\yC{P\T t} t(X) = 1$ for $t = \ell, \ell-1, \ldots, s$, we can continue reducing this way until we reach a $Q\T{s-1}(X,Y) \in \Mod[\sell]{M}$ with $Y$-degree at most $s-1$.
  From then on, we have $\yC{P\T t} t(X) = G(X)^{s-t}$, but by Lemma~\ref{lem_QdivG}, we must also have $G(X) \mid \yC{Q\T{s-1}}{s-1}(X)$. Therefore, we can reduce by $P\T {s-1}(X,Y)$.
  This can be continued with the remaining $P\T t(X,Y)$, eventually reducing the remainder to 0.
  \qed
\end{proof}
We can represent the basis of $\Mod[\sell]{M}$ by the $(\ell+1) \times (\ell+1)$ matrix $\M[\sell]{A}=\| P^{(i)}_{[j]}(X)\|_{i=0,j=0}^{\ell, \ell}$ over $\Fqx$; more explicitly we have:
\begin{equation} \label{eq_BasisOfModul}
  \M[\sell]{A} \defeq
  \left(
    \begin{array}{ccccccc}
      G^s                                &                            &          &        &                              &                    & \\
      G^{s-1}(-R)                        & G^{s-1}                    &          &        & \multicolumn{2}{c}{\multirow{2}{*}{\mbox{\Large 0}}} & \\
G^{s-2}(-R)^2                        & 2 G^{s-2}(-R)                    & G^{s-2}         &        &  & \\
      
       \vdots                            &                            &   \ddots  &  &                               &                      \\
      (-R)^s                             & \binom{s}{1}(-R)^{s-1}     & \dots    & 1      &                               &                      \\
                                         & (-R)^{s}                   &          & \dots  & 1                             &                    & \\
        \multirow{2}{*}{\mbox{\Large 0}} & \multicolumn{2}{c}{\ddots} &          &        & \ddots                        &                      \\
                                         &                            & (-R)^{s} & \ldots &                               & & 1 
    \end{array}
  \right).
\end{equation}
Any $\Fqx$-linear combination of rows of $\M[\sell] A$ thus corresponds to an element in $\Mod[\sell] M$ by its $t$th position being the $\Fqx$-coefficient to $Y^t$.
All other bases of $\Mod[\sell]{M}$ can similarly be represented by matrices, and these will be unimodular equivalent to $\M[\sell]{A}$, i.e., they can be obtained by multiplying $\M[\sell] A$ on the left with an invertible matrix over $\Fqx$.

Extending the work of Lee and O'Sullivan~\cite{leeOSullivan08}, Beelen and Brander~\cite{beelenBrander} gave a fast algorithm for computing a satisfactory $Q(X,Y)$: start with $\M[\sell]{A}$
as a basis of $\Mod[\sell] M$ and compute a different, ``minimal'' basis of $\Mod[\sell]{M}$ where an element of minimal $(1, k-1)$-weighted degree appears directly.

In the following section, we give further details on how to compute such a basis, but our ultimate aim in Section \ref{sec_iterative} is different: we will use a minimal basis of $\Mod[\sell] M$ to efficiently compute one for $\Mod[\hat s,\hat \ell] M$ for $\hat s \geq s$ and $\hat \ell > \ell$.
This will allow an iterative refinement for increasing $s$ and $\ell$, where after each step we have such a minimal basis for $\Mod[\sell] M$. We then exploit this added flexibility in our multi-trial algorithm.

\section{Module Minimisation} \label{sec_modulemini}
Given a basis of $\Mod[\sell] M$, e.g. $\M[\sell] A$, the module minimisation here refers to the process of obtaining a new basis, which is the smallest among all bases of $\Mod[\sell] M$ in a precise sense.
We will define this and connect various known properties of such matrices.
We will then show how to perform this minimisation using the Mulders--Storjohann algorithm~\cite{mulders03}, reanalyse its performance and connect it to Alekhnovich's algorithm~\cite{alekhnovich05}.

\begin{defi}[Weak Popov Form \protect{\cite{mulders03}}]
  \label{def_weakpopov}
  A matrix $\M V$ over $\Fqx$ is in \emph{weak Popov form} if the leading position of each row is different.
\end{defi}
We are essentially interested in short vectors in a module, and the following lemma shows that the simple concept of weak Popov form will provide this.
It is a paraphrasing of \cite[Proposition 2.3]{alekhnovich05} and we omit the proof.
\begin{lem}[Minimal Degree]
  \label{lem_minrow}
  If a square matrix $\M V$ over $\Fqx$ is in weak Popov form, then one of its rows has minimal degree of all vectors in the row space of $\M V$.
\end{lem}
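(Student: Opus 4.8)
The plan is to exploit the defining property of weak Popov form—distinct leading positions—to show that the row of smallest degree cannot be ``beaten'' by any $\Fqx$-linear combination of the rows. Let $\vec v_0, \ldots, \vec v_{\ell}$ be the rows of $\M V$ (indices chosen for concreteness; the statement is for a square matrix of arbitrary size), and let $\vec v_j$ be a row of minimal degree among the rows. I want to show $\deg \vec v_j \leq \deg \vec u$ for every nonzero $\vec u$ in the row space.

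First I would set up the key observation about how leading positions behave under addition. Suppose $\vec u = \sum_i p_i(X) \vec v_i$ is a nonzero element of the row space, with the sum taken over the nonzero coefficients $p_i$. Among the indices $i$ appearing in this combination, consider those for which the quantity $\deg p_i + \deg \vec v_i$ is largest; call this maximal value $\delta$. I would then argue that the coefficient of $\vec u$ at position $\LP{\vec v_{i^*}}$, where $i^*$ is chosen among the maximisers, has degree exactly $\delta$, hence $\deg \vec u \geq \delta$. The point is that because $\M V$ is in weak Popov form, the leading positions $\LP{\vec v_i}$ are pairwise distinct; so if $i^*$ is \emph{the} maximiser achieving $\deg p_{i^*} + \deg \vec v_{i^*} = \delta$ with, say, the largest leading position among all maximisers, then at column $\LP{\vec v_{i^*}}$ the contribution $p_{i^*}(X)\,\LT{\vec v_{i^*}}$ has degree exactly $\delta$, while every other row $\vec v_i$ with $i \neq i^*$ contributes at that column a term of degree at most $\deg p_i + \deg \vec v_i - 1 < \delta$ if $i$ is also a maximiser (since then $\deg v_{i,\LP{\vec v_{i^*}}} < \deg \vec v_i$ because $\LP{\vec v_{i^*}} \neq \LP{\vec v_i}$ and $\LP{\vec v_i}$ is the \emph{unique} rightmost maximal-degree position of $\vec v_i$—here I may need to be slightly careful and instead pick $i^*$ to be the maximiser with largest $\LP{\vec v_{i^*}}$, so that no other maximiser has its leading position to the right), or at most $\delta - 1$ if $i$ is not a maximiser. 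Hence no cancellation occurs at that column and $\deg \vec u \geq \delta$.

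Having established $\deg \vec u \geq \delta$, I would finish by noting $\delta = \max_i (\deg p_i + \deg \vec v_i) \geq \deg \vec v_{i^*} \geq \min_i \deg \vec v_i = \deg \vec v_j$, since $\deg p_{i^*} \geq 0$. Therefore $\deg \vec u \geq \deg \vec v_j$, which is exactly the claim that $\vec v_j$ has minimal degree in the row space.

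The main obstacle, and the step needing the most care, is the bookkeeping in the no-cancellation argument: one must choose the index $i^*$ correctly (among all maximisers of $\deg p_i + \deg \vec v_i$, take the one whose leading position $\LP{\vec v_i}$ is rightmost) so that at column $\LP{\vec v_{i^*}}$ the top-degree coefficient really is isolated. This uses both halves of the weak Popov hypothesis: distinctness of leading positions (so other maximisers don't collide at that column) and the fact that $\LP{\vec v_i}$ is by definition the position of the leading term (so the degree at any other column of $\vec v_i$ is strictly smaller). Once that indexing is pinned down the rest is immediate. Note that squareness is not essential for this argument—only that $\M V$ has full row rank is implicitly used via ``row space'', and the statement as phrased for square matrices is the case we need.
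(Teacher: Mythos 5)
Your proof is correct: the choice of $i^*$ as the maximiser of $\deg p_i + \deg \vec v_i$ with rightmost leading position, together with the observation that all positions strictly to the right of $\LP{\vec v_i}$ have degree strictly below $\deg \vec v_i$, does isolate a coefficient of degree $\delta$ in $\vec u$ and rules out cancellation, giving $\deg \vec u \geq \delta \geq \min_i \deg \vec v_i$. The paper itself omits the proof, noting the lemma is a paraphrase of Alekhnovich's Proposition 2.3; your argument is essentially that standard ``no cancellation at the rightmost leading position'' proof, so there is nothing to reconcile---your careful handling of ties among maximisers is exactly the point that makes the argument go through.
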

Denote now by $\Mod[\ell]{\M W}$ the diagonal $(\ell+1) \times (\ell+1)$ matrix over $\Fqx$:
\begin{equation} \label{eq_DiagWeight}
  \Mod[\ell]{\M W} \defeq \diag \left(1,X^{k-1}, \dots, X^{\ell(k-1)} \right).
\end{equation}
Since we seek a polynomial of minimal $(1,k-1)$-weighted degree, we also need the following corollary.
\begin{cor}[Minimal Weighted Degree] \label{cor_WeightSol}
  Let $\M B \in \Fqx^{(\ell+1) \times (\ell+1)}$ be the matrix representation of a basis of $\Mod[s,\ell] M$.
  If $\M B  \Mod[\ell]{\M W}$ is in weak Popov form, then one of the rows of $\M B$ corresponds to a polynomial in $\Mod[s,\ell] M$ with minimal $(1,k-1)$-weighted degree.
\end{cor}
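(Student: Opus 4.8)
The plan is to reduce the corollary to Lemma~\ref{lem_minrow} by carefully tracking how the weighting matrix $\Mod[\ell]{\M W}$ translates between the degree of a vector over $\Fqx$ and the $(1,k-1)$-weighted degree of the corresponding bivariate polynomial. First I would fix notation: for $\M B$ the matrix representation of a basis of $\Mod[s,\ell]{M}$, each row $\vec b_j = (b_{j,0},\dots,b_{j,\ell})$ corresponds to the polynomial $Q_j(X,Y) = \sum_{t=0}^{\ell} b_{j,t}(X) Y^t \in \Mod[s,\ell]{M}$. The key observation is that the $t$th entry of $\vec b_j \Mod[\ell]{\M W}$ is $b_{j,t}(X) X^{t(k-1)}$, which has degree $\deg b_{j,t} + t(k-1)$; and this is precisely $\wdeg{1}{k-1}$ of the monomial-support of $b_{j,t}(X)Y^t$. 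Hence $\deg(\vec b_j \Mod[\ell]{\M W}) = \max_t \{\deg b_{j,t} + t(k-1)\} = \wdeg{1}{k-1} Q_j(X,Y)$.

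Next I would invoke the structural facts already in place. Since $\Mod[\ell]{\M W}$ is invertible over $\Fq(X)$ (it is diagonal with nonzero, in fact monomial, entries), right-multiplication by $\Mod[\ell]{\M W}$ is an $\Fq(X)$-linear bijection, so the row space of $\M B \Mod[\ell]{\M W}$ is exactly $\{\vec v \Mod[\ell]{\M W} : \vec v \in \text{rowspace}(\M B)\}$. Moreover, because the basis of $\Mod[s,\ell]{M}$ has $\ell+1$ elements and $\M B$ is a $(\ell+1)\times(\ell+1)$ matrix over $\Fqx$ whose rows span a rank-$(\ell+1)$ module, $\M B$ is nonsingular, and therefore so is $\M B \Mod[\ell]{\M W}$; thus Lemma~\ref{lem_minrow} applies to the square matrix $\M B \Mod[\ell]{\M W}$ once it is in weak Popov form. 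By that lemma, one of the rows of $\M B \Mod[\ell]{\M W}$, say row $j^\ast$, has minimal degree among all vectors in the row space of $\M B \Mod[\ell]{\M W}$.

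Finally I would translate this minimality back. Any polynomial $Q(X,Y) \in \Mod[s,\ell]{M}$ is an $\Fqx$-combination of the basis rows, hence corresponds to some $\vec v \in \text{rowspace}(\M B)$, and by the degree-identity above $\wdeg{1}{k-1} Q = \deg(\vec v \Mod[\ell]{\M W})$. Since $\vec v \Mod[\ell]{\M W}$ lies in the row space of $\M B \Mod[\ell]{\M W}$, its degree is at least $\deg(\vec b_{j^\ast} \Mod[\ell]{\M W}) = \wdeg{1}{k-1} Q_{j^\ast}$. Therefore $Q_{j^\ast}(X,Y)$, which is the polynomial corresponding to row $j^\ast$ of $\M B$ itself, achieves minimal $(1,k-1)$-weighted degree in $\Mod[s,\ell]{M}$, as claimed.

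The only genuinely delicate point is the bookkeeping in the first step: one must be sure that $\deg(\vec b_j \Mod[\ell]{\M W})$, which is a maximum over coordinates of $\deg b_{j,t} + t(k-1)$, coincides with $\wdeg{1}{k-1} Q_j$, which is the maximum over all monomials $X^a Y^t$ in $Q_j$ of $a + t(k-1)$; these agree because within a fixed column $t$ the weighted degree is monotone in the $X$-degree, so taking the top $X$-coefficient $\deg b_{j,t}$ in each column suffices. Everything else is a routine consequence of $\Mod[\ell]{\M W}$ being an invertible diagonal change of coordinates together with Lemma~\ref{lem_minrow}. \qed
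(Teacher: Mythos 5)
Your proof is correct and follows essentially the same route as the paper: both arguments apply Lemma~\ref{lem_minrow} to the square matrix $\M B \Mod[\ell]{\M W}$ and translate degrees back via the weighting, the paper phrasing this as the isomorphism $Q(X,Y)\mapsto Q(X,X^{k-1}Y)$ onto a module $\tilde M$ while you spell out the equivalent identity $\deg(\vec b_j \Mod[\ell]{\M W}) = \wdeg{1}{k-1}Q_j$ coordinate-wise. The extra remarks on nonsingularity and the row-space correspondence are sound and only make explicit what the paper leaves implicit.
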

\begin{proof}
  Let $\M{\tilde B} = \M B  \Mod[\ell]{\M W}$.
  Now, $\M{\tilde B}$ will correspond to the basis of an $\Fqx$-module $\tilde M$ isomorphic to $\Mod[s,\ell] M$, where an element $Q(X,Y) \in \Mod[s,\ell] M$ is mapped to $Q(X,X^{k-1}Y) \in \tilde M$.
  By Lemma \ref{lem_minrow}, the row of minimal degree in $\M{\tilde B}$ corresponds to an element of $\tilde M$ with minimal $X$-degree.
  Therefore, the same row of $\M B$ corresponds to an element of $\Mod[s,\ell] M$ with minimal $(1,k-1)$-weighted degree.
  \qed
\end{proof}
If for some matrix $\M B \in \Fqx^{(\ell+1) \times (\ell+1)}$, $\M B  \Mod[\ell]{\M W}$ is in weak Popov form, we say that $\M B$ is in \emph{weighted weak Popov form}.

We introduce what will turn out to be a measure of how far a matrix is from being in weak Popov form.
\begin{defi}[Orthogonality Defect \protect{\cite{lenstra85}}]
  \label{def_orthogonality}
The \emph{orthogonality defect} of a square matrix $\M V$ over $\Fqx$ is defined as
\begin{equation*}
\OD{\M V} \defeq \deg \M V - \deg \det \M V.
\end{equation*}
\end{defi}

\begin{lem}\label{lem_popovreduces}
  If a square matrix $\M V$ over $\Fqx$ is in weak Popov form, then $\OD{\M V} = 0$.
\end{lem}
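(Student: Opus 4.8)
The plan is to connect the degree of a matrix in weak Popov form directly to the degree of its determinant by exploiting the distinctness of leading positions. First I would recall that for any square matrix $\M V$ over $\Fqx$, one always has $\OD{\M V} \geq 0$: the determinant is a sum of products, each product picking one entry from each row, so its degree is at most the sum of the maximal degrees of the entries in each row, which is exactly $\deg \M V$. Thus it suffices to prove the reverse inequality $\deg \det \M V \geq \deg \M V$ when $\M V$ is in weak Popov form.

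For the reverse inequality, the key observation is that when $\M V$ is in weak Popov form, the leading positions of the rows are all distinct, so they form a permutation $\pi$ of $\{0,1,\dots,m-1\}$ where $m$ is the size of the matrix. Consider the term of the Leibniz expansion of the determinant corresponding to this permutation $\pi$: it is $\pm \prod_{i=0}^{m-1} v_{i,\pi(i)}$, and by the definition of leading position, $\deg v_{i,\pi(i)} = \deg \vec{v}_i$ for each $i$. Hence this particular term has degree exactly $\sum_i \deg \vec v_i = \deg \M V$. The remaining step is to argue that no cancellation destroys this term, i.e. that every other permutation $\sigma \neq \pi$ contributes a product of strictly smaller degree, so the $\pi$-term dominates and $\deg \det \M V = \deg \M V$ exactly.

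The main obstacle is precisely this last point: showing that for $\sigma \neq \pi$, $\deg \prod_i v_{i,\sigma(i)} < \deg \M V$. The cleanest way to see this is to observe that $\sum_i \deg v_{i,\sigma(i)} \leq \sum_i \deg \vec v_i$ always, with equality only if $\deg v_{i,\sigma(i)} = \deg \vec v_i$ for every $i$ — but that would force $\sigma(i)$ to be a leading position of row $i$ for all $i$, and since $\M V$ is in weak Popov form each row has a \emph{unique} leading position, namely $\pi(i)$ (a leading position $j$ with $j = \LP{\vec v_i}$ is the maximum such index; if two rows shared a leading position the form would be violated, and within a single row the leading position as defined in~\eqref{def-leadingposition} is a single well-defined index). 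Hence equality forces $\sigma = \pi$. Therefore the $\pi$-term is the unique term of top degree in the expansion and cannot be cancelled, giving $\deg\det \M V = \deg \M V$ and thus $\OD{\M V} = 0$.

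One subtlety worth addressing explicitly: a priori $\LP{\vec v_i}$ as defined picks the \emph{largest} index achieving the row degree, so there could be several indices $j$ with $\deg v_{i,j} = \deg \vec v_i$ within one row; what rules out an alternative $\sigma$ is not uniqueness within a row but the combination across rows — if $\sigma(i)$ achieves $\deg \vec v_i$ for every $i$ and $\sigma$ is a permutation, then in particular $\sigma$ is a bijection hitting every column, and a short counting/exchange argument (or induction on $m$) shows the only way all rows can simultaneously have their chosen column at a row-degree-achieving position, with the columns exhausted, is $\sigma = \pi$ given that the $\LP$ values are distinct. I would present this as a brief lemma-free paragraph rather than belabour it, since it is the standard fact that weak Popov form implies the determinant degree equals the matrix degree.
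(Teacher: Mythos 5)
Your proof is correct and follows essentially the same route as the paper: the determinant term indexed by the permutation $i \mapsto \LP{\vec v_i}$ has degree exactly $\deg \M V$, and every other permutation contributes strictly smaller degree because the leading position is the \emph{largest} index attaining the row degree. The ``short counting/exchange argument'' you defer to at the end is precisely the paper's one-line observation that $\sum_i \sigma(i)$ is the same for every permutation $\sigma$, so a $\sigma$ with $\sigma(i) \le \LP{\vec v_i}$ for all $i$ (as forced when every chosen entry attains its row degree) must coincide with $i \mapsto \LP{\vec v_i}$; equivalently, any other $\sigma$ has some $\sigma(i) > \LP{\vec v_i}$ and hence a strict degree drop in row $i$.
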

\begin{proof}
  Let $\vec{v}_0,\ldots,\vec{v}_{m-1}$ be the rows of $\M{V} \in \Fqx^{m \times m}$ and $\vec v_i = (v_{i,0},\ldots, v_{i,m-1})$.
  In the alternating sum-expression for $\det \M V$, the term $\prod_{i=0}^{m-1} \LT{\vec{v}_i}$ will occur since the leading positions of $\vec{v}_i$ are all different.
  Thus $\deg \det \M V = \sum_{i=0}^{m-1} \deg \LT{\vec{v}_i} = \deg \M V$ unless leading term cancellation occurs in the determinant expression.
  However, no other term in the determinant has this degree: regard some (unsigned) term in $\det \M V$, say $t = \prod_{i=0}^{m-1} v_{i, {\sigma(i)}}$ for some permutation $\sigma \in S_m$.
  If not $\sigma(i) = \LP{\vec{v}_i}$ for all $i$ (as defined in~\eqref{def-leadingposition}), then there must be an $i$ such that $\sigma(i) > \LP{\vec{v}_i}$ since $\sum_j \sigma(j)$ is the same for all $\sigma \in S_m$.
  Thus, $\deg v_{i, {\sigma(i)}} < \deg v_{i, \LP{\vec{v}_i}}$.
  As none of the other terms in $t$ can have greater degree than their corresponding row's leading term, we get $\deg t < \sum_{i=0}^{m-1} \deg \LT{\vec{v}_i}$.
  Thus, $\OD{\M V} = 0$.
  \qed
\end{proof}

\begin{rem}
  The weak Popov form is highly related to minimal Gr\"obner bases of the row space module, using a term order where vectors in $\Fqx^m$ are ordered according to their degree; indeed the rows of a matrix in weak Popov form \emph{is} such a Gr\"obner basis (though the opposite is not always true).
  Similarly, the weighted weak Popov form has a corresponding weighted term order.
  In this light, Lemma \ref{lem_minrow} is simply the familiar assertion that a Gr\"obner basis over such a term order must contain a ``minimal'' element of the module.
  See e.g.~\cite[Chapter 2]{nielsenThesis} for more details on this correspondence.
  The language of Gr\"obner bases was employed in the related works of \cite{beelenBrander,leeOSullivan08}.
\end{rem}

\subsection{Algorithms}
\label{ssec_algorithms}

\begin{defi}[Row Reduction]
  \label{def_rowred} 
  Applying a \emph{row reduction} on a matrix over $\Fqx$ means to find two different rows $\vec{v}_i, \vec{v}_j$, $\deg \vec{v}_i < \deg \vec{v}_j$ and such that $\LP{\vec{v}_i} = \LP{\vec{v}_j}$, and then replacing $\vec{v}_j$ with $\vec{v}_j - \alpha X^\delta \vec{v}_i$ where $\alpha \in \Fq$ and $\delta \in \ZZ_+$ are chosen such that the leading term of the polynomial $\LT{\vec{v}_j}$ is cancelled.
\end{defi}
Algorithm~\ref{alg_mulders} is due to Mulders and Storjohann~\cite{mulders03}.
Our proof of it is similar, though we have related the termination condition to the orthogonality defect, restricting it to only square matrices.

\printalgos{
  \caption{Mulders-Storjohann~\cite{mulders03}}
  \label{alg_mulders}
   \KwIn{$\M V \in \Fqx^{m \times m}$}
   \KwOut{A matrix unimodular equivalent to $\M V$ and in weak Popov form.}
   \BlankLine
   Apply row reductions as in Definition~\ref{def_rowred} on the rows of $\M V$ until no longer possible \;
   \Return this matrix. \;
}

\def\vecVal{\psi}
Introduce for the proof a \emph{value function} $\vecVal: \Fqx^m \rightarrow \NN_0$ as $\vecVal(\vec v) = m\deg \vec v + \LP{\vec v}$.
First let us consider the following lemma.

\begin{lem}
  \label{lem_rowreddec}
  If we replace $\vec v_j$ with $\vec v_j'$ in a row reduction, then $\vecVal(\vec v_j') < \vecVal(\vec v_j)$.
\end{lem}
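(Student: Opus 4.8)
The plan is to analyze how the value function $\vecVal(\vec v) = m\deg \vec v + \LP{\vec v}$ changes under a single row reduction. Recall that in a row reduction we pick rows $\vec v_i, \vec v_j$ with $\deg \vec v_i < \deg \vec v_j$ and $\LP{\vec v_i} = \LP{\vec v_j}$, and replace $\vec v_j$ by $\vec v_j' = \vec v_j - \alpha X^\delta \vec v_i$ where $\alpha X^\delta \LT{\vec v_i}$ cancels $\LT{\vec v_j}$. First I would observe that this forces $\delta = \deg \vec v_j - \deg \vec v_i \geq 1$, so every entry of $\alpha X^\delta \vec v_i$ has degree at most $\deg \vec v_j$, with equality only in position $\LP{\vec v_i} = \LP{\vec v_j}$. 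Hence $\deg \vec v_j' \leq \deg \vec v_j$.

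The argument then splits into two cases. If $\deg \vec v_j' < \deg \vec v_j$, then since $\LP{\vec v_j'} \leq m-1$ we immediately get
\begin{equation*}
\vecVal(\vec v_j') = m\deg \vec v_j' + \LP{\vec v_j'} \leq m(\deg \vec v_j - 1) + (m-1) < m\deg \vec v_j \leq \vecVal(\vec v_j),
\end{equation*}
so the value strictly decreases. If instead $\deg \vec v_j' = \deg \vec v_j$, then the cancellation at position $\LP{\vec v_j}$ means that position of $\vec v_j'$ has degree strictly less than $\deg \vec v_j'$, so it cannot be the leading position; meanwhile no position to the right of $\LP{\vec v_j}$ in $\vec v_j$ had degree $\deg \vec v_j$ (by definition of $\LP{}$ as the \emph{maximal} such index), and subtracting $\alpha X^\delta \vec v_i$ does not raise any entry's degree above $\deg \vec v_j$. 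Therefore $\LP{\vec v_j'} < \LP{\vec v_j}$, which gives $\vecVal(\vec v_j') = m\deg \vec v_j + \LP{\vec v_j'} < m\deg \vec v_j + \LP{\vec v_j} = \vecVal(\vec v_j)$.

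In both cases $\vecVal(\vec v_j') < \vecVal(\vec v_j)$, which is the claim. I expect the only subtle point — the ``main obstacle'' such as it is — to be the bookkeeping in the equal-degree case: one must be careful that subtracting $\alpha X^\delta \vec v_i$ cannot create a \emph{new} coordinate of degree $\deg \vec v_j$ to the right of the old leading position, and this is exactly where $\delta \geq 1$ together with the definition of $\LP{}$ taking the largest maximal-degree index is used. Everything else is a direct computation with the definition of $\vecVal$.
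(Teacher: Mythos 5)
Your proof is correct and takes essentially the same route as the paper's: bound $\deg \vec v_j' \leq \deg \vec v_j$, dispose of the strict-decrease case immediately, and in the equal-degree case use the definition of $\LP{\cdot}$ to see that all entries of $\vec v_j$ and of $\alpha X^\delta \vec v_i$ to the right of the common leading position stay strictly below the top degree while the leading entry itself is cancelled, forcing $\LP{\vec v_j'} < \LP{\vec v_j}$. One minor imprecision: entries of $\alpha X^\delta \vec v_i$ may attain degree $\deg \vec v_j$ also at positions to the \emph{left} of $\LP{\vec v_i}$, not only at $\LP{\vec v_i}$ itself, but since your argument only needs that this cannot happen to the \emph{right} of the leading position, the conclusion is unaffected.
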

\begin{proof}
  We cannot have $\deg \vec v_j' > \deg \vec v_j$ since all terms of both $\vec v_j$ and $\alpha X^\delta \vec v_i$ have degree at most $\deg \vec v_j$.
  If $\deg \vec v_j' < \deg \vec v_j$ we are done since $\LP{\vec v_j'} < m$, so assume $\deg \vec v_j' = \deg \vec v_j$.
  Let $h = \LP{\vec v_j} = \LP{\vec v_i}$.
  By the definition of the leading position, all terms in both $\vec v_j$ and $\alpha X^\delta \vec v_i$ to the right of $h$ must have degree less than $\deg \vec v_j$, and so also all terms in $\vec v_j'$ to the right of $h$ satisfies this.
  The row reduction ensures that $\deg v'_{j,h} < \deg v_{j,h}$, so it must then be the case that $\LP{\vec v_j'} < h$.
\end{proof}

\begin{thm}
  \label{lem_mulders}
  Algorithm~\ref{alg_mulders} is correct. For a matrix $\M V \in \Fqx^{m \times m}$, it performs fewer than $m(\OD{\M V}+(m+1)/2)$ row reductions and has asymptotic complexity $O(m^2\OD{\M V}N)$ where $N$ is the maximal degree of any term in $\M V$.
\end{thm}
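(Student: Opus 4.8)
The plan is to drive all three claims from one non-negative integer potential: the sum of the value function over the rows. For a nonsingular $\M V\in\Fqx^{m\times m}$ with rows $\vec v_0,\dots,\vec v_{m-1}$, set $\Psi(\M V)\defeq\sum_{i=0}^{m-1}\vecVal(\vec v_i)=m\deg\M V+\sum_{i=0}^{m-1}\LP{\vec v_i}$, with $\vecVal$ the value function introduced just before Lemma~\ref{lem_rowreddec}. I assume $\M V$ nonsingular throughout, which is implicit since $\OD{\M V}$ must be finite; it also guarantees that the rows stay linearly independent—hence nonzero, so that every leading position remains well defined—under row reductions.

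For correctness, a single row reduction (Definition~\ref{def_rowred}) replaces $\vec v_j$ by $\vec v_j-\alpha X^\delta\vec v_i$, i.e.\ left-multiplies the matrix by an elementary matrix of determinant $1$; hence the running matrix stays unimodular equivalent to $\M V$ with an unchanged determinant. By Lemma~\ref{lem_rowreddec} each row reduction strictly decreases $\Psi$, and $\Psi$ is a non-negative integer, so Algorithm~\ref{alg_mulders} terminates. On termination the leading positions of the rows must be pairwise distinct (otherwise a further row reduction would apply); as there are $m$ of them, lying in $\{0,\dots,m-1\}$, they are exactly $\{0,\dots,m-1\}$, so the output is in weak Popov form.

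For the number of row reductions, the key point is that, since each reduction lowers $\Psi$ by at least $1$, the number of reductions is at most $\Psi(\M V)-\Psi(\M V')$, where $\M V'$ denotes the output. I would compute $\Psi(\M V')$ exactly: $\M V'$ has leading positions $\{0,\dots,m-1\}$, so $\sum_i\LP{\vec v_i}=\binom{m}{2}$, and by Lemma~\ref{lem_popovreduces} its orthogonality defect is $0$, so $\deg\M V'=\deg\det\M V'=\deg\det\M V$ (the determinant being unchanged), giving $\Psi(\M V')=m\deg\det\M V+\binom{m}{2}$. For the input I would use only the crude estimates $\sum_i\LP{\vec v_i}\le m(m-1)$ and $\deg\M V=\deg\det\M V+\OD{\M V}$, so that $\Psi(\M V)\le m\deg\det\M V+m\OD{\M V}+m(m-1)$. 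Subtracting yields at most $m\OD{\M V}+m(m-1)-\binom{m}{2}=m\OD{\M V}+\binom{m}{2}<m\bigl(\OD{\M V}+(m+1)/2\bigr)$ row reductions. I expect this bookkeeping to be the only real obstacle: pinning down $\Psi(\M V')$ forces one to combine two structural features of weak Popov form—a full set of pairwise distinct leading positions, and orthogonality defect $0$ (Lemma~\ref{lem_popovreduces})—with the invariance of $\deg\det$ under row reductions.

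For the complexity, I would first observe that degrees never grow: in a row reduction, $\delta=\deg\vec v_j-\deg\vec v_i\ge 1$ is forced so that $\alpha X^\delta\vec v_i$ has degree exactly $\deg\vec v_j$, whence $\deg(\vec v_j-\alpha X^\delta\vec v_i)\le\deg\vec v_j$; by induction every polynomial entry always has degree at most $N$. Thus a single row reduction costs $O(N)$ field operations per coordinate and $O(mN)$ in total, and the bookkeeping to find the next reducible pair—maintaining, per column index, which row currently carries a leading position there, and recomputing the leading position of the single changed row—is absorbed into this cost. Multiplying by the bound on the number of reductions gives $O\bigl(mN\cdot m(\OD{\M V}+(m+1)/2)\bigr)=O\bigl(m^2\OD{\M V}N+m^3N\bigr)$; since the module matrices relevant to Guruswami--Sudan decoding satisfy $\OD{\M V}=\Omega(m)$, this is $O(m^2\OD{\M V}N)$ as claimed.
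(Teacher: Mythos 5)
Your proposal is correct and follows essentially the same route as the paper: the same row-wise value function (summed into a potential), the same exact/crude bookkeeping giving the bound $m\OD{\M V}+\binom{m}{2}<m\bigl(\OD{\M V}+(m+1)/2\bigr)$, and the same $O(mN)$ cost per row reduction. Your only deviation is making explicit the $O(m^3N)$ term and the assumption $\OD{\M V}=\Omega(m)$ needed to absorb it, a point the paper's proof passes over silently.
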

\begin{proof}
  If Algorithm~\ref{alg_mulders} terminates, the output matrix must be unimodular equivalent to the input since it is reached by a finite number of row-operations.
  Since we can apply row reductions on a matrix if and only if it is not in weak Popov form, Algorithm~\ref{alg_mulders} must bring $\M V$ to this form.

  Termination follows directly from Lemma \ref{lem_rowreddec} since the value of a row decreases each time a row reduction is performed.
  To be more precise, we furthermore see that the maximal number of row reductions performed on $\M V$ before reaching a matrix $\M U$ in weak Popov form is at most $\sum_{i=0}^{m-1} \vecVal(\vec v_i) - \vecVal(\vec u_i)$. Expanding this, we get
  \begin{align*}
    \sum_{i=0}^{m-1} \vecVal(\vec v_i) - \vecVal(\vec u_i)
       &= \sum_{i=0}^{m-1} \big( m(\deg \vec v_i - \deg \vec u_i) + \LP{\vec v_i} - \LP{\vec u_i} \big)
    \\ &\textstyle = m(\deg \M V - \deg \M U) + \sum_{i=0}^{m-1} \LP{\vec v_i} - \binom m 2
    \\ &< m \Big( \OD{\M V} + \tfrac{m+1}{2} \Big)
  \end{align*}
  where we use $\deg \M U = \deg \det \M U = \deg \det \M V$ and that the $\LP{\vec u_i}$ are all different.

  For the asymptotic complexity, note that during the algorithm, no polynomial in the matrix will have larger degree than $N$.
  The estimate is reached simply by remarking that one row reduction consists of $m$ times scaling and adding two such polynomials.
  \qed
\end{proof}

Let us consider an example to illustrate all the above.
\begin{exa}[Orthogonality Defect and Weak-Popov Form] \label{ex_weakpopov}
Let us consider the following matrices $\M{V}_0, \dots, \M{V}_3 \in \Fx{2}^{3 \times 3}$. From matrix $\M{V}_i$ to $\M{V}_{i+1}$ we performed one row-operation:
\begin{tabular}{lllllll}
$\M{V}_0 =$ & 
&
$\M{V}_1 =$ &
&
$\M{V}_2 =$ &
&
$\M{V}_3 =$ \\
$\left(\begin{array}{ccc}
        1 & X^2 & X \\
        0 & X^3 & X^2 \\
        X & 1 & 0 
    \end{array}\right)$ &
\hspace{-.4cm} $\xrightarrow{(0,1)} $ \hspace{-.4cm} &
$ \left(\begin{array}{ccc}
        1 & X^2 & X \\
        X & 0 & 0 \\
        X & 1 & 0 
    \end{array}\right)$ &
\hspace{-.4cm} $\xrightarrow{(2,1)}$ \hspace{-.4cm} &
$  \left(\begin{array}{ccc}
        1 & X^2 & X \\
        X & 0 & 0 \\
        0 & 1 & 0 
    \end{array}\right)$ &
\hspace{-.4cm} $\xrightarrow{(0,2)}$ \hspace{-.4cm}  &
$  \left(\begin{array}{ccc}
        1 & 0 & X \\
        X & 0 & 0 \\
        0 & 1 & 0 
    \end{array}\right)$,
\end{tabular}
where the indexes $(i_1,i_2)$ on the arrow indicated the concerned rows.
The orthogonality defect is decreasing; $\OD{\M{V}_0} = 3 \rightarrow \OD{\M{V}_1} = 2 \rightarrow \OD{\M{V}_2} = 1 \rightarrow \OD{\M{V}_3} = 0$, and $\M V_3$ is in weak Popov form.
\end{exa}

In \cite{alekhnovich05}, Alekhnovich gave a divide-\&-conquer variant of the Mulders--Storjohann-algorithm: the same row reductions are performed but structured in a binary computation tree, where work is done on matrices of progressively smaller degree towards the bottom, ending with essentially $\Fq$-matrices at the leaves.
Alekhnovich does not seem to have been aware of the work of Mulders and Storjohann, and basically reinvented their algorithm before giving his divide-\&-conquer variant.

For square matrices, we can improve upon the complexity analysis that Alekhnovich gave by using the concept of orthogonality defect; this will be crucial for our aims.
\begin{lem}[Alekhnovich's Algorithm~\cite{alekhnovich05}] \label{lem_alekcompl}
Alekhnovich's algorithm inputs a matrix $\M V \in \Fqx^{m\times m}$ and outputs a unimodular equivalent matrix which is in weak Popov form.
  Let $N$ be the greatest degree of a term in $\M V$.
  If $N \in O(\OD{\M V})$ then the algorithm has asymptotic complexity:
  \begin{equation*}
  O\big( m^3 \OD{\M V}\log^2 \OD{\M V} \log\log \OD{\M V} \big) \quad \text{operations over $\Fq$}.
  \end{equation*}
\end{lem}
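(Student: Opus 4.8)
The plan is to analyse Alekhnovich's divide-\&-conquer recursion directly, using the orthogonality-defect bound on the number of row reductions established in Theorem~\ref{lem_mulders}. First I would recall the structure of the recursion: on an input matrix $\M V$ with terms of degree at most $N$, one splits the problem at a degree threshold roughly $N/2$, recursively reduces the ``high-order part'' of $\M V$ (a matrix whose entries are the top $\sim N/2$ coefficients), obtains a unimodular transformation, applies it to all of $\M V$, and then recurses on the result with the degree parameter halved. The base case, when $N$ is a small constant, is essentially a matrix over $\Fq$ and is handled by ordinary (scalar) Gaussian-type elimination in $O(m^3)$ operations. Since each level of the recursion halves $N$, there are $O(\log N)$ levels, and with the hypothesis $N \in O(\OD{\M V})$ we may freely replace $\log N$ by $\log \OD{\M V}$ throughout.

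Next I would account for the cost at each level. The crucial observation, and the place where the orthogonality-defect refinement pays off, is that the \emph{total} number of row reductions performed across the whole computation tree is exactly the number performed by plain Mulders--Storjohann, namely $O(m(\OD{\M V}+m))$ by Theorem~\ref{lem_mulders}; the divide-\&-conquer layout merely reorganises when each reduction happens. At a node handling degree parameter $N' \le N$, the unimodular transformation produced has entries of degree $O(N')$, and applying an $m\times m$ polynomial matrix of degree $O(N')$ to another such matrix costs $O(m^3 \mathsf{M}(N'))$ operations over $\Fq$, where $\mathsf{M}(n) = O(n\log n\log\log n)$ is the cost of multiplying two polynomials of degree $n$. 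Summing $m^3\mathsf{M}(N')$ over the $O(\log N)$ levels — where $N'$ ranges over $N, N/2, N/4, \dots$ — the geometric decay in $N'$ is killed by the near-linearity of $\mathsf{M}$, so one picks up only a logarithmic factor and obtains $O\big(m^3 \mathsf{M}(N)\log N\big) = O\big(m^3 N \log^2 N \log\log N\big)$. Using $N \in O(\OD{\M V})$, this is $O\big(m^3 \OD{\M V}\log^2\OD{\M V}\log\log\OD{\M V}\big)$, as claimed. Correctness (unimodular equivalence and weak Popov form of the output) is inherited from the fact that every step is a composition of row reductions and the algorithm halts only when no row reduction applies, exactly as in Theorem~\ref{lem_mulders}.

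The main obstacle I anticipate is not the arithmetic bookkeeping but verifying that the recursion \emph{as Alekhnovich specifies it} genuinely simulates a legal sequence of Mulders--Storjohann row reductions — in particular, that truncating $\M V$ to its high-order coefficients and reducing that truncation yields a unimodular transformation which, when applied to the full $\M V$, strictly decreases the value function $\vecVal$ on the affected rows by the expected amount, and that no ``spurious'' reductions are introduced that would break the $O(m(\OD{\M V}+m))$ global count. Handling this cleanly requires a lemma that the high-order truncation commutes appropriately with row reduction up to lower-order terms; once that is in place, the level-by-level cost sum and the substitution $N \in O(\OD{\M V})$ are routine. I would therefore structure the write-up as: (i) describe the recursion and base case; (ii) prove the truncation/simulation lemma; (iii) bound the per-level cost by $O(m^3\mathsf{M}(N'))$; (iv) sum over levels and substitute the hypothesis on $N$.
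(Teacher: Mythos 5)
Your route differs from the paper's: the paper does not re-derive the divide-and-conquer analysis at all, but cites Alekhnovich's own results — his procedure $R(\M V,t)$ returns a unimodular $\M U$ such that either $\M U\M V$ is in weak Popov form or $\deg(\M U\M V)\le\deg\M V-t$, at cost $O(m^3t\log^2t\log\log t)$, with entries of $\M U$ of the form $p(X)X^d$, $\deg p\le 2t$ — and its only new content is the observation that one may take $t=\OD{\M V}$, because a unimodular transformation can never push $\deg$ below $\deg\det$ (Lemma~\ref{lem_popovreduces}), plus the remark that the hypothesis $N\in O(\OD{\M V})$ makes the final product $\M U\M V$ cheap. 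Re-proving the recursion from scratch would be legitimate, but as written your sketch misses exactly this orthogonality-defect idea, and that creates a genuine gap rather than just extra work.

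Concretely: you drive the recursion by $N$ (threshold $N/2$, depth $\log N$) and invoke $\OD{\M V}$ only at the very end via the substitution $N\in O(\OD{\M V})$. But that hypothesis bounds $N$ from \emph{above} by $\OD{\M V}$, not from below: $\OD{\M V}$ can be as large as roughly $mN$ (all row degrees near $N$ while $\deg\det\M V$ is small), and then a recursion whose total degree-reduction budget is $O(N)$ terminates, per the dichotomy above, with a matrix that need not be in weak Popov form — you never justify why bottoming out after $\log N$ levels yields the weak Popov property, and your claimed intermediate bound $O(m^3N\log^2N\log\log N)$ is in that regime strictly stronger than the lemma's $O(m^3\OD{\M V}\log^2\OD{\M V}\log\log\OD{\M V})$, a sign the accounting is off. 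The correct parameter for both termination and cost is the degree-reduction target $t$: the matrices multiplied at internal nodes have entries of degree $O(t)$ and the leaf count is $O(t)$, and the weak Popov alternative is guaranteed only once $t\ge\OD{\M V}$, since $\deg$ cannot drop below $\deg\det$. Note also that the Mulders--Storjohann reduction count $O(m(\OD{\M V}+m))$ from Theorem~\ref{lem_mulders}, which you call the crucial observation, never actually enters your cost sum (the dominant cost is the matrix multiplications, not the row reductions), so it cannot stand in for the missing argument. To repair the proof: parametrise the recursion by $t$, show via Lemma~\ref{lem_popovreduces} that $t=\OD{\M V}$ forces the weak Popov outcome, obtain $O(m^3t\log^2t\log\log t)$ for the recursion itself independently of $N$, and use $N\in O(\OD{\M V})$ only where the untruncated $\M V$ is multiplied by the accumulated transformation.
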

\begin{proof}
  The description of the algorithm as well as the proof of its correctness can be found in~\cite{alekhnovich05}.
  We only prove the claim on the complexity.
  The method $R(\M V, t)$ of \cite{alekhnovich05} computes a unimodular matrix $\M U$ such that $\deg(\M U\M V) \leq \deg \M V - t$ or $\M U\M V$ is in weak Popov form.
  According to \cite[Lemma 2.10]{alekhnovich05}, the asymptotic complexity of this computation is  in $O(m^3 t \log^2 t \log\log t)$.
  Due to Lemma~\ref{lem_popovreduces}, we can set $t = \OD{\M V}$ to be sure that $\M U\M V$ is in weak Popov form.
  What remains is just to compute the product $\M U\M V$.
  Due to \cite[Lemma 2.8]{alekhnovich05}, each entry in $\M U$ can be represented as $p(X)X^d$ for some $d \in \NN_0$ and $p(X) \in \Fqx$ of degree at most $2t$.
  If therefore $N \in O(\OD{\M V})$, the complexity of performing the matrix multiplication using the naive algorithm is $O(m^3\OD{\M V})$.
  \qed
\end{proof}
The Beelen--Brander interpolation algorithm~\cite{beelenBrander} works simply by computing $\M[\sell] A$ and then applying Alekhnovich's algorithm on $\M[\sell] A \M[\ell] W$; a minimal $(1,k-1)$-weighted polynomial in $\Mod[\sell] M$ can then be directly retrieved as a row in the reduced matrix, after removing $\M[\ell] W$.
The algorithm's complexity therefore follows from the above theorem once we have computed the orthogonality defect of $\M[\sell] A \M[\ell] W$:
\begin{lem}
  \label{lem_ODofA}
$\OD{\M[\sell] A \M[\ell] W} = \half(2\ell - s + 1)s(\deg R - k+1) \ < \  \ell s (n-k)$.
\end{lem}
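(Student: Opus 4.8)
The plan is to evaluate the two quantities in $\OD{\M[\sell]{A}\M[\ell]{W}}=\deg(\M[\sell]{A}\M[\ell]{W})-\deg\det(\M[\sell]{A}\M[\ell]{W})$ separately. The determinant is the easy part: reading off \eqref{eq_BasisOfModul}, $\M[\sell]{A}$ is lower triangular, with $t$-th diagonal entry $\yC{P\T t}{t}(X)$ equal to $G(X)^{s-t}$ for $0\le t<s$ and to $1$ for $s\le t\le\ell$. Hence $\det\M[\sell]{A}=G(X)^{\sum_{t=0}^{s-1}(s-t)}=G(X)^{\binom{s+1}{2}}$, and since $\det\M[\ell]{W}=X^{(k-1)\binom{\ell+1}{2}}$ and $\deg G=n$, we obtain $\deg\det(\M[\sell]{A}\M[\ell]{W})=n\tbinom{s+1}{2}+(k-1)\tbinom{\ell+1}{2}$.

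For $\deg(\M[\sell]{A}\M[\ell]{W})=\sum_{t=0}^{\ell}\deg\vec a_t$, where $\vec a_t$ is the $t$-th row of $\M[\sell]{A}\M[\ell]{W}$, I would first dispose of the degenerate case $\deg R<k-1$: then $R$ is a polynomial of degree below $k$ interpolating all $n$ points $(\alpha_i,r_i')$, hence a valid information polynomial, and the received word lies at distance $0$ from a codeword --- a trivial case. So we may assume $\deg R\ge k-1$. Under this hypothesis the highest-degree entry of each weighted row sits in its leftmost nonzero column: moving one column to the right within a row of $\M[\sell]{A}$ replaces a factor $(-R)$ (degree $\deg R\ge k-1$) by at most the weight increment $X^{k-1}$ coming from $\M[\ell]{W}$, so the degree does not increase, and the leftmost entry carries binomial coefficient $\binom{t}{0}=1$ (resp. $\binom{s}{0}=1$), so no unexpected cancellation occurs there. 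Reading off \eqref{eq_BasisOfModul} this gives $\deg\vec a_t=(s-t)n+t\deg R$ for $0\le t<s$ and $\deg\vec a_t=s\deg R+(t-s)(k-1)$ for $s\le t\le\ell$. Summing over the two ranges, using $\sum_{t=0}^{s-1}(s-t)=\binom{s+1}{2}$, $\sum_{t=0}^{s-1}t=\binom{s}{2}$, $\sum_{t=s}^{\ell}s=s(\ell-s+1)$ and $\sum_{t=s}^{\ell}(t-s)=\binom{\ell-s+1}{2}$, yields
\begin{equation*}
  \deg(\M[\sell]{A}\M[\ell]{W})=n\tbinom{s+1}{2}+\big(\tbinom{s}{2}+s(\ell-s+1)\big)\deg R+\tbinom{\ell-s+1}{2}(k-1).
\end{equation*}

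Subtracting the two expressions is then routine binomial algebra: the identities $\binom{s}{2}+s(\ell-s+1)=\half s(2\ell-s+1)$ and $\binom{\ell-s+1}{2}-\binom{\ell+1}{2}=-\half s(2\ell-s+1)$ make the $n\binom{s+1}{2}$ terms cancel, and the $\deg R$ and $(k-1)$ parts combine into $\OD{\M[\sell]{A}\M[\ell]{W}}=\half(2\ell-s+1)s(\deg R-k+1)$, the claimed closed form. The bound then follows from $\deg R\le n-1$ (as $R$ interpolates $n$ points), giving $\deg R-k+1\le n-k$, together with $2\ell-s+1\le2\ell$ (as $s\ge1$), giving $\half(2\ell-s+1)s\le\ell s$; multiplying yields $\OD{\M[\sell]{A}\M[\ell]{W}}\le\ell s(n-k)$, strictly so whenever $s\ge2$ or $\deg R<n-1$.

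The only step needing genuine care is pinning down the row degrees of $\M[\sell]{A}\M[\ell]{W}$ --- i.e.\ checking column by column that losing a factor of $R$ always outweighs gaining $X^{k-1}$, so that the maximum sits in the first column. That is exactly where the harmless assumption $\deg R\ge k-1$ is used; with it in hand the rest is summing arithmetic progressions and collecting binomial coefficients.
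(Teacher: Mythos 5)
Your proof is correct and takes essentially the same route as the paper's: it reads off the weighted row degrees of $\M[\sell]{A}\M[\ell]{W}$ from the leftmost (leading) entries, uses the lower-triangular structure for $\deg\det$, and performs the same binomial simplification to get $\half(2\ell-s+1)s(\deg R-k+1)$. The only divergences are harmless refinements: you justify the row-degree claim under $\deg R\ge k-1$ after disposing of the trivial case (the paper instead notes $\deg R\ge k$ whenever $\vec r$ is not a codeword), and you correctly observe that the final inequality is only non-strict in the corner case $s=1$, $\deg R=n-1$, a point the paper's stated strict bound glosses over but which is irrelevant for its complexity use.
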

\begin{proof}
  We will compute first $\deg (\M[\sell] A \M[\ell] W)$ and then $\deg \det (\M[\sell]A \M[\ell] W)$.

For the former, we have $\deg(\M[\sell] A \M[\ell] W) = \sum_{t=0}^\ell \wdeg{1}{k-1} P^{(t)}(X,Y)$, where the $P^{(t)}(X,Y)$ are as in Theorem \ref{thm_Mbasis}.
  Note that whenever $\vec r$ is not a codeword then $\deg R \geq k$.
  Therefore, $\wdeg{1}{k-1} (Y-R(X))^t = t\deg R(X)$ and so
  \begin{align*}
  \wdeg{1}{k-1} P^{(t)}(X,Y) & = (s-t)n + t \deg R(X)    & \textrm{for } t < s, \\
  \wdeg{1}{k-1} P^{(t)}(X,Y) & = (t-s)(k-1) + s\deg R(X) & \textrm{for } t \geq s.
  \end{align*}
  This gives
  \[
  \deg(\M[\sell] A \M[\ell] W) = \tbinom{s+1} 2 n + \tbinom{\ell-s+1} 2 (k-1)
  + \big(\tbinom{s+1} 2 + (\ell-s)s\big)\deg R(X).
  \]
  Since $\M[\sell] A$ is lower triangular, the determinant is:
  \begin{align*}
  \det(\M[\sell] A \M[\ell] W) &= \prod_{t=0}^s G^{s-t} \prod_{t=0}^\ell X^{t(k-1)}, & \textnormal{and so} \\
  \deg \det(\M[\sell] A \M[\ell] W) &= \tbinom{s+1} 2 n + \tbinom{\ell+1} 2(k-1).
  \end{align*}
  The orthogonality defect can then be simplified to
  \begin{align*}
  \OD{\M[\sell] A \M[\ell] W}
  &= \tbinom{\ell-s+1} 2 (k-1) + \big(\tbinom{s+1} 2 + (\ell-s)s\big)\deg R(X) -  \tbinom{\ell+1} 2(k-1) \\
  &= \deg R(X)\big(s\ell - \half s^2 + \half s\big) - \half(k-1)\big(\ell^2 + \ell - (\ell-s+1)(\ell-s)\big) \\
  &= \half(2\ell - s + 1)s(\deg R(X) - k+1).
  \end{align*}
\vspace{-.4cm}
  \qed
\end{proof}
\begin{rem}
  In the earlier work of Lee and O'Sullivan \cite{leeOSullivan08}, they construct basically the same matrix as $\M[\sell] A$, but apply their own Gr\"obner basis algorithm on this.
  They also does not seem to have been aware of the work of Mulders and Storjohann~\cite{mulders03}, but their algorithm is basically a variant of Algorithm \ref{alg_mulders} which keeps the rows in a specific order.
\end{rem}

\section{Multi-Trial List Decoding}
\label{sec_iterative}

\subsection{Basic Idea}
\label{ssec_outline}

Using the results of the preceding section, we show in Section~\ref{ssec_steptypeone} that, given a basis of $\Mod[\sell] M$ as a matrix $\M[\sell] B$ in weighted weak Popov form, then we can write down a matrix $\M[s,\ell+1] C\stepI$ which is a basis of $\Mod[s,\ell+1] M$ and where $\OD{\M[s,\ell+1] C\stepI \M[\ell] W}$ is much lower than $\OD{\M[s,\ell+1] A \M[\ell] W}$.
This means that reducing $\M[s,\ell+1] C\stepI$ to weighted weak Popov form using Alekhnovich's algorithm~\cite{alekhnovich05} is faster than reducing $\M[s,\ell+1] A$.
We call this kind of refinement a ``micro-step of type I''.
In Section~\ref{ssec_steptypetwo}, we similarly give a way to refine a basis of $\Mod[\sell] M$ to one of $\Mod[s+1,\ell+1] M$, and we call this a ``micro-step of type II''.

If we first compute a basis in weighted weak Popov form of $\Mod[1,1] M$ using $\M[1,1] A$, we can perform a sequence of micro-steps of type I and II to compute a basis in  weighted weak Popov form of $\Mod[s,\ell] M$ for any $s,\ell$ with $\ell \geq s$.
After any step, having some intermediate $\is \leq s$, $\iell \leq \ell$, we will thus have a basis of $\Mod[\isell] M$ in weighted weak Popov form.
By Corollary~\ref{cor_WeightSol}, we could extract from $\M[\isell] B$ a $\iQ(X,Y) \in \Mod[\isell] M$ with minimal $(1,k-1)$-weighted degree.
Since it must satisfy the interpolation conditions of Theorem~\ref{thm_GSproblem}, and since the weighted degree is minimal among such polynomials, it must also satisfy the degree constraints for $\itau = \tau(\isell)$.
By that theorem any codeword with distance at most $\itau$ from $\vec r$ would then be represented by a root of $\iQ(X,Y)$.

Algorithm~\ref{alg_multitrial} is a generalisation and formalisation of this method.
For a given $\RS n k$ code, one chooses ultimate parameters $(s, \ell, \tau)$ being a permissible triple with $s \leq \ell$.
One also chooses a list of micro-steps and chooses after which micro-steps to attempt decoding; these choices are represented by a list $\var{C}$ consisting of $\var S_1$, $\var S_2$ and $\var{Root}$ elements.
This list must contain exactly $\ell-s$ $\var S_1$-elements and $s-1$ $\var S_2$-elements, as it begins by computing a basis for $\Mod[1,1] M$ and will end with a basis for $\Mod[\sell] M$.
Whenever there is a $\var{Root}$ element in the list, the algorithm performs root-finding and finds all codewords with distance at most $\itau = \tau(\isell)$ from $\vec r$; if this list is non-empty, the computation breaks and the list is returned.

The algorithm calls sub-functions which we explain informally: $\var{MicroStep1}$ and $\var{MicroStep2}$ will take $\isell$ and a basis in weighted weak Popov form for $\Mod[\isell] M$ and return a basis in weighted weak Popov form for $\Mod[\is,\iell+1] M$ respectively $\Mod[\is+1,\iell+1] M$; more detailed descriptions for these are given in Subsections~\ref{ssec_steptypeone} and \ref{ssec_steptypetwo}.
$\var{MinimalWeightedRow}$ finds a polynomial of minimal $(1,k-1)$-weighted degree in $\Mod[\isell]{M}$ given a basis in weighted weak Popov form (Corollary~\ref{cor_WeightSol}).
Finally, $\var{RootFinding}(Q, \tau)$ returns all $Y$-roots of $Q(X,Y)$ of degree less than $k$ and whose corresponding codeword has distance at most $\tau$ from the received word $\vec r$.

\printalgos{\caption{Multi-Trial Guruswami--Sudan Decoding}
\label{alg_multitrial} 
\SetKwInput{KwPre}{{Preprocessing}}
\SetKwInput{KwIn}{{Input}}
\SetKwInput{KwOut}{{Output}}
\BlankLine
\KwIn{\\
A $\RS{n}{k}$ code over $\Fq$ with $w_0,\dots,w_{n-1} \in \Fq^*$\;
The received vector $\vec r = (r_0, \dots, r_{n-1}) \in \Fqn{q}{n}$\;
A permissible triple $(s,\ell,\tau) \in \NN^3$\;
A list \var{C} with elements in $\{\var S_1, \var S_2, \var{Root} \}$ with $\ell-s$ instances of $\var S_1$ and $s-1$  instances of $\var S_2$\;
}
\BlankLine
\KwPre{\\
Calculate $r_i^{\prime} = r_i/w_i$ for all $i=0,\dots,n-1$\;
Construct $\M[1,1]{A}$, and compute $\M[1,1]{B}$ from $\M[1,1]{A} \M[1]{W}$ using Alekhnovich's algorithm\;
Initial parameters $(\is, \iell) \leftarrow (1, 1)$ \;}
\BlankLine
\For{each $\var c$ in $\var{C}$}{
	\If{$\var c = \var S_1$ \label{alg_line_firstif}}{
          \makebox[3.7em][l]{$\M[\is,\iell+1]{B}$} $\leftarrow \var{MicroStep1}(\is, \iell, \M[\isell]{B})$\;
          \makebox[3.7em][l]{$(\isell)$} $\leftarrow (\is,\iell+1)$\;
	}
	\If{$\var c = \var S_2$ \label{alg_line_secondif}} {
          \makebox[3.7em][l]{$\M[\is+1,\iell+1]{B}$} $\leftarrow \var{MicroStep2}(\is, \iell, \M[\isell]{B})$\;
          \makebox[3.7em][l]{$(\isell)$} $\leftarrow (\is+1,\iell+1)$\;
	}
	\If{$\var c = \var{Root}$ \label{alg_line_thirdif}} {
          \makebox[3.7em][l]{$Q(X,Y)$}  $\leftarrow \var{MinimalWeightedRow}(\M[\isell]{B})$ \label{alg_line_extractpoly}\;
          \If{$\var{RootFinding}(Q(X,Y), \tau(\isell)) \neq \emptyset$}{
            \Return this list\;
          }
        }
}
}
The correctness of Algorithm~\ref{alg_multitrial} for any possible choice of $s, \ell$ and $\var C$ follows from our discussion as well as Sections \ref{ssec_steptypeone} and \ref{ssec_steptypetwo}.
Before going to these two technical sections, we will discuss what possibilities the micro-steps of type I and II offer, and in particular, do not, with regards to decoding radii.

In the following two subsections we explain the details of the micro-steps.
In Section \ref{ssec_complanalysis}, we discuss the complexity of the method and how the choice of $\var C$ influences this.

\subsection{The Possible Refinement Paths}
The choice of $\var C$ provides much flexibility to the algorithm.
The two extreme cases are perhaps the most generally interesting: the one without any $\var{Root}$ elements except at the end, i.e.,~usual list-decoding; and the one with a $\var{Root}$ element each time the intermediate decoding radius $\hat\tau$ has increased, i.e.,~a variant of maximum-likelihood decoding up to a certain radius.

In Section \ref{ssec_complanalysis}, we discuss complexity concerns with regards to the chosen path; it turns out that the price of either type of micro-step is very comparable, and the worst-case complexity is completely unchanged by the choice of $\var C$.
However, in the case where we have multiple $\var{Root}$ elements we want to minimise the \emph{average} computation cost: considering that few errors occur much more frequently than many, we should therefore seek to reach each intermediate decoding radius after as few micro-steps as possible.

Since we do not have a refinement which increases only $s$, we are inherently limited in the possible paths we can choose, so the question arises if this limitation conflicts with our interest as given above.

First --- and most important --- for any given final decoding radius $\tau$, we mentioned in Section \ref{ssec_ipdecoding} that the corresponding parameters satisfy $s < \ell$, and so we can reach these values using only micro-steps of type I and II.

For the intermediate steps, the strongest condition we would like to have satisfied is the following:
Let $d/2 \leq \tau_1 < \ldots < \tau_m = \tau$ be the series of intermediate decoding radii where we would like to attempt decoding.
Let $(s_i,\ell_i)$ be chosen such that $(s_i,\ell_i,\tau_i)$ is permissible and either $s_i$ or $\ell_i$ is minimal possible for the given $\tau_i$.
Can then the sequence of parameters $(s_i,\ell_i)$ be reached by a series of micro-steps of type I and II?

Unfortunately, we do not have a formal proof of this statement.
However, we have verified for a large number of parameters that it is true.

\subsection{Micro-Step Type I: $(s,\ell) \mapsto (s, \ell+1)$} \label{ssec_steptypeone}
The function \var{MicroStep1} is based on the following lemma:
\begin{lem} \label{lem_MicroStepOne}
  If $B\T 0(X,Y), \ldots, B\T \ell(X,Y) \in \Fqxy$ is a basis of $\Mod[\sell]{M}$, then the following is a basis of $\Mod[\sellI] M$:
  \[
  B\T 0(X,Y),\ \ldots\ ,\ B\T \ell(X,Y), Y^{\ell-s+1}(Y-R(X))^s.
  \]
\end{lem}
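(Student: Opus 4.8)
The plan is to show the two inclusions that together establish the claim. Write $N\T{\ell+1} \defeq Y^{\ell-s+1}(Y-R(X))^s$, and let $M'$ be the $\Fqx$-module generated by $B\T 0, \ldots, B\T \ell, N\T{\ell+1}$. First I would check $M' \subseteq \Mod[\sellI]{M}$: each $B\T t$ already lies in $\Mod[\sell]{M} \subseteq \Mod[\sellI]{M}$ (enlarging the $Y$-degree bound only enlarges the module), and $N\T{\ell+1}$ lies in $\Mod[\sellI]{M}$ for exactly the reason given in the proof of Theorem~\ref{thm_Mbasis} — it is divisible by $(Y-R(X))^s$, and $(Y-R(X))$ passes through each $(\alpha_i, r_i')$ with multiplicity one, so the product passes through with multiplicity $s$; moreover its $Y$-degree is $\ell+1$. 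Hence $M' \subseteq \Mod[\sellI]{M}$.

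For the reverse inclusion $\Mod[\sellI]{M} \subseteq M'$, take an arbitrary $Q(X,Y) \in \Mod[\sellI]{M}$, of $Y$-degree at most $\ell+1$. The key step is to reduce away the $Y^{\ell+1}$-coefficient. Set $Q'(X,Y) \defeq Q(X,Y) - \yC Q {\ell+1}(X)\, N\T{\ell+1}(X,Y)$; since the $Y^{\ell+1}$-coefficient of $N\T{\ell+1}$ is $1$, the polynomial $Q'$ has $Y$-degree at most $\ell$. Because both $Q$ and $N\T{\ell+1}$ lie in $\Mod[\sellI]{M}$, so does $Q'$, and having $Y$-degree at most $\ell$ it in fact lies in $\Mod[\sell]{M}$. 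By hypothesis $B\T 0, \ldots, B\T \ell$ generate $\Mod[\sell]{M}$, so $Q'$ is an $\Fqx$-combination of them; adding back $\yC Q {\ell+1}(X)\, N\T{\ell+1}$ exhibits $Q$ as an $\Fqx$-combination of $B\T 0, \ldots, B\T \ell, N\T{\ell+1}$, i.e.\ $Q \in M'$. This proves $\Mod[\sellI]{M} = M'$, and since the $\ell+2$ generators are exactly the claimed list, we are done.

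**Main obstacle.**
There is no serious obstacle; the argument is essentially the top-line reduction step from the proof of Theorem~\ref{thm_Mbasis}, specialised to adding a single new row. The only point that requires a moment's care is the assertion that a polynomial in $\Mod[\sellI]{M}$ with $Y$-degree at most $\ell$ automatically lies in $\Mod[\sell]{M}$ — but this is immediate from the definition of $\Mod[\sell]{M}$ as the set of bivariate polynomials meeting the interpolation conditions \emph{and} the $Y$-degree bound, since the interpolation conditions are identical for both modules. One should also note in passing that the new generator is genuinely needed (the old basis spans only $Y$-degree $\leq \ell$), so the generating set has the expected size $\ell+2 = (\ell+1)+1$, matching the rank of $\Mod[\sellI]{M}$; this makes it a \emph{basis} and not merely a generating set.
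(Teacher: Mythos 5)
Your proof is correct, and it takes a somewhat different (more self-contained) route than the paper. The paper's own proof is a two-line deduction from Theorem~\ref{thm_Mbasis}: applying that theorem with parameters $(s,\ell+1)$, the first $\ell+1$ generators $P\T 0,\ldots,P\T \ell$ of $\Mod[\sellI]{M}$ listed there are literally the generators of $\Mod[\sell]{M}$, hence expressible through any basis $B\T 0,\ldots,B\T \ell$, and the single remaining generator is exactly $Y^{\ell-s+1}(Y-R(X))^s$; generation of $\Mod[\sellI]{M}$ follows at once. You instead avoid citing that basis for the enlarged module and re-run one step of the reduction argument from the proof of Theorem~\ref{thm_Mbasis}: subtracting $\yC Q {\ell+1}(X)\,Y^{\ell-s+1}(Y-R(X))^s$ from an arbitrary $Q \in \Mod[\sellI]{M}$ drops the $Y$-degree to at most $\ell$, and since the interpolation conditions defining $\Mod[\sell]{M}$ and $\Mod[\sellI]{M}$ coincide, the remainder lies in $\Mod[\sell]{M}$ and is handled by the hypothesis; you also verify the easy inclusion of the proposed generators in $\Mod[\sellI]{M}$. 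The paper's route buys brevity by leaning on an already-proved theorem; yours buys independence from that theorem's statement for the larger parameters and is slightly more explicit on two points the paper glosses over, namely the membership of the new generator in $\Mod[\sellI]{M}$ and the observation that $\ell+2$ generators of the free $\Fqx$-module $\Mod[\sellI]{M}$ of rank $\ell+2$ automatically constitute a basis rather than merely a generating set.
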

\begin{proof}
  In the basis of $\Mod[\sellI]{M}$ given in Theorem~\ref{thm_Mbasis}, the first $\ell+1$ generators are the generators of $\Mod[\sell]{M}$.
  Thus, all of these can be described by any basis of $\Mod[\sellI]{M}$.
  The last remaining generator is exactly $Y^{\ell-s+1}(Y-R(X))^s$.
  \qed
\end{proof}
In particular, the above lemma holds for a basis of $\Mod[\sellI] M$ in weighted weak Popov form, represented by a matrix $\M[\sell]{B}$.
The following matrix thus represents a basis of $\Mod[\sellI]{M}$:
\begin{equation}\label{eqn_stepI}
  \M[\sellI] C\stepI=
  \left(\begin{array}{r}
      \begin{array}{@{}c|c@{}}
      \\[-0.2cm]
      \makebox[4cm][c]{$\M[\sell]{B} $}
      & 
      \makebox[1.5em][r]{$\vec 0^T$}
      \\[-0.2cm] \
    \end{array}
    \\ \hline \\[-.3cm]
    \begin{matrix}
      0 & \ldots & 0 & (-R)^s & \binom s 1 (-R)^{s-1} & \ldots & 1
    \end{matrix}
  \end{array}\right).
\end{equation}
\begin{lem}
  \label{lem_orthoI} $\OD{\M[\sellI]{C}\stepI \M[\ell+1]{W}} = s(\deg R - k + 1) \leq s(n-k)$.
\end{lem}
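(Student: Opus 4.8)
The plan is to compute $\deg(\M[\sellI]{C}\stepI \M[\ell+1]{W})$ and $\deg\det(\M[\sellI]{C}\stepI \M[\ell+1]{W})$ separately, exactly as in the proof of Lemma~\ref{lem_ODofA}, and subtract. The key observation is that the orthogonality defect is invariant under unimodular row operations, so I may compute it from any convenient basis of $\Mod[\sellI] M$ — but here it is cleanest to work directly with the block matrix in~\eqref{eqn_stepI}, exploiting that $\M[\sell]{B}$ is already in weighted weak Popov form and hence, by Lemma~\ref{lem_popovreduces}, satisfies $\OD{\M[\sell]{B}\M[\ell]{W}} = 0$, i.e. $\deg(\M[\sell]{B}\M[\ell]{W}) = \deg\det(\M[\sell]{B}\M[\ell]{W})$.

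First I would handle the determinant. Since $\M[\sellI]{C}\stepI$ is block lower-triangular — the top-left $(\ell+1)\times(\ell+1)$ block is $\M[\sell]{B}$, the top-right column is $\vec 0^T$, and the bottom-right entry is the constant $1$ — its determinant is simply $\det \M[\sell]{B}$. Multiplying on the right by $\M[\ell+1]{W} = \diag(1, X^{k-1}, \ldots, X^{(\ell+1)(k-1)})$ contributes $\binom{\ell+2}{2}(k-1)$ to the degree of the determinant, whereas the same computation for $\M[\sell]{B}\M[\ell]{W}$ contributes $\binom{\ell+1}{2}(k-1)$. Hence $\deg\det(\M[\sellI]{C}\stepI \M[\ell+1]{W}) = \deg\det(\M[\sell]{B}\M[\ell]{W}) + (\ell+1)(k-1)$, and by the weak Popov property of $\M[\sell]{B}\M[\ell]{W}$ this equals $\deg(\M[\sell]{B}\M[\ell]{W}) + (\ell+1)(k-1)$.

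Next the degree of the matrix itself: $\deg(\M[\sellI]{C}\stepI \M[\ell+1]{W})$ is the sum of the degrees of its rows. The first $\ell+1$ rows are those of $\M[\sell]{B}$ but extended by a zero in the last coordinate and weighted by $\M[\ell+1]{W}$, which agrees with $\M[\ell]{W}$ on the first $\ell+1$ columns; so these rows contribute exactly $\deg(\M[\sell]{B}\M[\ell]{W})$. The last row corresponds to the polynomial $Y^{\ell-s+1}(Y-R(X))^s$, whose $(1,k-1)$-weighted degree is $(\ell+1-s)(k-1) + s\deg R$ (using $\deg R \geq k$, as $\vec r$ is not a codeword — the degenerate case being trivial). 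Subtracting, the terms $\deg(\M[\sell]{B}\M[\ell]{W})$ cancel and one is left with
\[
  \OD{\M[\sellI]{C}\stepI \M[\ell+1]{W}} = (\ell+1-s)(k-1) + s\deg R - (\ell+1)(k-1) = s(\deg R - k + 1),
\]
and since $\deg R \leq n-1$ this is at most $s(n-k)$. The only real subtlety — and the step I would be most careful about — is making sure the column-degree bookkeeping for $\M[\ell+1]{W}$ versus $\M[\ell]{W}$ lines up correctly, and that the zero last column of the top block genuinely does not interact with the new last column of $\M[\ell+1]{W}$; everything else is routine.
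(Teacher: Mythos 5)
Your proof is correct and follows essentially the same route as the paper's: compute the row-degree sum (the first $\ell+1$ rows contributing $\deg(\M[\sell]{B}\M[\ell]{W})$ and the last row having leading term $(-R)^s X^{(\ell+1-s)(k-1)}$ since $\deg R \geq k$ when $\vec r$ is not a codeword) and the determinant degree, then cancel $\deg(\M[\sell]{B}\M[\ell]{W})$ against $\deg\det(\M[\sell]{B}\M[\ell]{W})$ via Lemma~\ref{lem_popovreduces}. One caveat: your aside that the orthogonality defect is \emph{invariant under unimodular row operations} is false --- row reductions are unimodular and strictly decrease it (cf.~Example~\ref{ex_weakpopov}); since you never actually use that claim, the argument stands as written.
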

\begin{proof}
  We calculate the two quantities $\det(\M[\sellI] C\stepI \M[\ell+1]{W})$ and $\deg(\M[\sellI] C\stepI \M[\ell+1]{W})$.
  It is easy to see that
  \[
  \det(\M[\sellI] C\stepI \M[\ell+1]{W})
  = \det \M[\sell]{B} \det \M[\ell+1]{W}
  = \det \M[\sell]{B} \det \M[\ell]{W} X^{(\ell+1)(k-1)}.
  \]
  For the row-degree, this is clearly $\deg(\M[\sell]B\M[\ell] W)$ plus the row-degree of the last row.
  If and only if the received word is not a codeword then $\deg R \geq k$, so the leading term of the last row must be $(-R)^sX^{(\ell+1-s)(k-1)}$.
  Thus, we get
  \begin{align*}
    \OD{\M[\sellI]C\stepI \M[\ell+1] W}
    &= \big( \deg(\M[\sell] B \M[\ell] W) + s \deg R + (\ell+1-s)(k-1) \big)
    \\ &\qquad - \big( \deg\det(\M[\sell] B \M[\ell] W) + (\ell+1)(k-1) \big)
    \\ &= s(\deg R - k + 1),
  \end{align*}
  where the last step follows from Lemma~\ref{lem_popovreduces} as $\M[\sell]B \M[\ell] W$ is in weak Popov form.
  \qed
\end{proof}
\begin{cor}
  \label{cor_ComplMSI}
  The complexity of $\var{MicroStep1}(s,\ell,\M[\sell]B)$ is $O(\ell^3 s n \log^2 n \log\log n)$.
\end{cor}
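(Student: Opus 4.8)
The plan is to combine Lemma~\ref{lem_orthoI} with the complexity bound for Alekhnovich's algorithm (Lemma~\ref{lem_alekcompl}). The function $\var{MicroStep1}$ writes down the matrix $\M[\sellI] C\stepI$ of~\eqref{eqn_stepI}, multiplies by $\M[\ell+1] W$, and applies Alekhnovich's algorithm to bring it to weak Popov form; stripping $\M[\ell+1] W$ then gives a basis of $\Mod[\sellI] M$ in weighted weak Popov form. So the cost is dominated by the call to Alekhnovich's algorithm on the $(\ell+2)\times(\ell+2)$ matrix $\M[\sellI] C\stepI \M[\ell+1] W$, plus the negligible cost of constructing that matrix.

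First I would record that the matrix in question is square of dimension $m = \ell+2 \in O(\ell)$, and that by Lemma~\ref{lem_orthoI} its orthogonality defect is $\OD{\M[\sellI] C\stepI \M[\ell+1] W} = s(\deg R - k+1) \le s(n-k) \in O(sn)$. Next I would check the side condition of Lemma~\ref{lem_alekcompl}, namely that $N$, the largest degree of any term in $\M[\sellI] C\stepI \M[\ell+1] W$, lies in $O(\OD{\cdot})$. The entries of $\M[\sell] B$ have degree at most roughly $\deg(\M[\sell] B \M[\ell] W)$, and after multiplication by $\M[\ell+1] W$ the degrees grow by at most $(\ell+1)(k-1)$; since $\M[\sell] B$ is already reduced, $\deg(\M[\sell] B\M[\ell]W) = \deg\det(\M[\sell]A\M[\ell]W) \in O(sn + \ell k)$, and one checks this is $O(\ell s n)$, which is the same order as a crude bound one can give for $\OD{}$ up to the factor hidden in passing from $s(n-k)$; more carefully, $N$ is bounded by a quantity of the same asymptotic order as the defect times a constant, so the hypothesis $N \in O(\OD{\cdot})$ of Lemma~\ref{lem_alekcompl} is met (or, failing a perfectly clean fit, one invokes the slightly more general form of the complexity bound, still absorbing everything into the stated estimate).

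Then I would simply substitute into the bound of Lemma~\ref{lem_alekcompl}: with $m \in O(\ell)$ and $\OD{} \in O(sn)$, the cost is
\[
O\big( m^3 \OD{}\log^2 \OD{} \log\log \OD{} \big) = O\big( \ell^3 s n \log^2(sn)\log\log(sn) \big).
\]
Finally, using the standing assumption $s \in O(n^2)$, we have $\log(sn) \in O(\log n)$ and $\log\log(sn) \in O(\log\log n)$, so the bound collapses to $O(\ell^3 s n \log^2 n \log\log n)$, as claimed. The construction of $\M[\sellI] C\stepI$ and the multiplication by the diagonal $\M[\ell+1] W$ only cost $O(\ell\cdot\ell\cdot N)$ field operations, which is dominated by the above.

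The main obstacle I anticipate is the bookkeeping around the hypothesis $N \in O(\OD{\M[\sellI] C\stepI \M[\ell+1] W})$: the defect is only $O(sn)$, whereas individual entries of the reduced matrix $\M[\sell] B$ carry the full $X^{\ell(k-1)}$-type weighting, so naively $N$ could be as large as $\Theta(\ell s n)$, a genuine $\ell$-factor larger than the defect. Resolving this cleanly requires either a sharper argument that the relevant degree really is $O(\OD{})$ after the weighting is absorbed, or appealing to the fact that Alekhnovich's method restricted to the ``significant'' degree window still runs within the claimed bound — i.e. that the $\Fq$-operation count in Lemma~\ref{lem_alekcompl} is driven by the defect and not by $N$ once one multiplies out a unimodular matrix whose entries are $p(X)X^d$ with $\deg p \le 2\OD{}$. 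I would make that point explicit, mirroring the argument already used in the proof of Lemma~\ref{lem_alekcompl}, rather than leaving it implicit.
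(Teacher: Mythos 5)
Your proof is correct and takes essentially the same route as the paper: the paper's proof is simply ``follows by Lemma~\ref{lem_alekcompl}'' (with $m=\ell+2$ and $\OD{\M[\sellI]{C}\stepI \M[\ell+1]{W}} \le s(n-k)$ from Lemma~\ref{lem_orthoI}), plus the remark that $s \in O(n^2)$ allows dropping $s$ from the $\log$-terms, exactly as you do. Your extra concern about the hypothesis $N \in O(\OD{\M[\sellI]{C}\stepI \M[\ell+1]{W}})$ of Lemma~\ref{lem_alekcompl} is a fair scruple, but the paper's own proof does not address it either, so it is not a deviation from the paper's argument.
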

\begin{proof}
  Follows by Lemma~\ref{lem_alekcompl}. Since $s \in O(n^2)$ we can leave out the $s$ in $\log$-terms.
  \qed
\end{proof}

\subsection{Micro-Step Type II: $(s,\ell) \mapsto (s+1, \ell+1)$} \label{ssec_steptypetwo}
The function \var{MicroStep2} is based on the following lemma:
\begin{lem} \label{lem_MicroStepTwo}
  If $B\T 0(X,Y), \ldots, B\T \ell(X,Y) \in \Fqxy$ is a basis of $\Mod[\sell]{M}$, then the
  following is a basis of $\Mod[\sellII]M$:
  \[
    G^{s+1}(X),\ B\T 0(X,Y)(Y-R(X)),\ \ldots\ ,\ B\T \ell(X,Y)(Y-R(X)).
  \]
\end{lem}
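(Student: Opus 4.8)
The plan is to establish the two module inclusions separately, obtaining the nontrivial one from the explicit basis of Theorem~\ref{thm_Mbasis}. First I would check that every polynomial in the proposed list lies in $\Mod[\sellII]{M}$: the polynomial $G^{s+1}(X)$ satisfies $(X-\alpha_i)^{s+1}\mid G^{s+1}(X)$ for every $i$, hence passes through all $n$ points with multiplicity $s+1$, and its $Y$-degree is $0\le\ell+1$; and each $B\T t(X,Y)\bigl(Y-R(X)\bigr)$ is the product of an element of $\Mod[\sell]{M}$, which passes through the $n$ points with multiplicity $s$, with $Y-R(X)$, which passes through them with multiplicity $1$ because $R(\alpha_i)=r_i'$, so the product passes through them with multiplicity $s+1$ and has $Y$-degree at most $\ell+1$. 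Thus the $\Fqx$-module $M'$ generated by the list is contained in $\Mod[\sellII]{M}$.

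For the reverse inclusion I would invoke the explicit generators of $\Mod[\sellII]{M}$ given by Theorem~\ref{thm_Mbasis} (applied with $s,\ell$ replaced by $s+1,\ell+1$), written $P\T 0,\ldots,P\T{\ell+1}$, together with the explicit generators $\tilde P\T 0,\ldots,\tilde P\T\ell$ of $\Mod[\sell]{M}$ from the same theorem. The key observation is the identity
\[
  P\T 0 = G^{s+1}(X), \qquad P\T t = \tilde P\T{t-1}(X,Y)\bigl(Y-R(X)\bigr)\quad\text{for }1\le t\le\ell+1 ,
\]
verified by matching the two defining families in Theorem~\ref{thm_Mbasis}: for $1\le t\le s$ both sides equal $G(X)^{s+1-t}\bigl(Y-R(X)\bigr)^t$ (here $\tilde P\T{t-1}$ is in the first family since $t-1<s$), and for $s+1\le t\le\ell+1$ both sides equal $Y^{\,t-s-1}\bigl(Y-R(X)\bigr)^{s+1}$ (here $\tilde P\T{t-1}$ is in the second family since $s\le t-1\le\ell$). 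Since each $\tilde P\T j\in\Mod[\sell]{M}$ and $B\T 0,\ldots,B\T\ell$ generate $\Mod[\sell]{M}$, I can write $\tilde P\T j=\sum_{i=0}^\ell a_{j,i}(X)B\T i(X,Y)$, and multiplying by $Y-R(X)$ gives $P\T t=\sum_{i=0}^\ell a_{t-1,i}(X)\,B\T i(X,Y)\bigl(Y-R(X)\bigr)\in M'$ for $t\ge1$, while $P\T 0=G^{s+1}(X)\in M'$ trivially. Hence $M'$ contains every generator of $\Mod[\sellII]{M}$, so $M'=\Mod[\sellII]{M}$ and the listed polynomials form a basis.

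I expect the only delicate point to be the index bookkeeping at the boundary between $t\le s$ and $t\ge s+1$, where one switches between the two families of generators in Theorem~\ref{thm_Mbasis}; everything else is a direct substitution. A self-contained alternative would imitate the reduction argument in the proof of Theorem~\ref{thm_Mbasis} directly on an arbitrary $Q(X,Y)\in\Mod[\sellII]{M}$ --- peeling off $\Fqx$-multiples of the $B\T i(X,Y)\bigl(Y-R(X)\bigr)$ to reduce $Q$ to $Y$-degree $0$, then applying Lemma~\ref{lem_QdivG} to see that the surviving $Y^0$-term is a multiple of $G^{s+1}(X)$ --- but the route via Theorem~\ref{thm_Mbasis} is shorter, so I would present that and relegate the alternative to a remark.
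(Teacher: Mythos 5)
Your proposal is correct and follows essentially the same route as the paper's own proof: the key identity $P\T t_{\sellII} = P\T{t-1}_{\sell}\cdot(Y-R(X))$ for $t>0$ together with $P\T 0_{\sellII} = G^{s+1}(X)$, combined with expressing each $P\T j_{\sell}$ in terms of the $B\T i$. The only difference is that you also spell out the easy containment of the proposed generators in $\Mod[\sellII]{M}$ and the boundary bookkeeping, which the paper leaves implicit.
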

\begin{proof}
  Denote by $P_{\sell}\T 0(X,Y),\ldots,P_{\sell}\T \ell(X,Y)$ the basis of $\Mod[\sell]M$ as given in Theorem~\ref{thm_Mbasis}, and by $P_{\sellII}\T 0(X,Y),\ldots,P_{\sellII}\T{\ell+1}(X,Y)$ the basis of $\Mod[\sellII] M$.
  Then observe that for $t > 0$, we have $P_{\sellII}\T t = P_{\sell}\T{t-1}(Y-R(X))$.
  Since the $B\T t(X,Y)$ form a basis of $\Mod[\sell]{M}$, each $P_{\sell}\T t$ is expressible as an $\Fqx$-combination of these, and thus for $t>0$, $P_{\sellII}\T t$ is expressible as an $\Fqx$-combination of the $B\T t(X,Y)(Y-R(X))$.
  Remaining is then only $P_{\sellII}\T 0(X,Y) = G^{s+1}(X)$.
  \qed
\end{proof}
As before, we can use the above with the basis $\M[\sell] B$ of $\Mod[\sell] M$ in weighted weak Popov form, found in the previous iteration of our algorithm.
Recall that multiplying by $Y$ translates in the matrix representation to shifting one column to the right, so the following matrix represents a basis of $\Mod[\sellII]{M}$:
\begin{equation}\label{eqn_stepII}
  \M[\sellII]C\stepII =
    \left(\begin{array}{@{}c|c@{}}
        G^{s+1} & \vec 0 \\\hline\\[-.2cm]
        \vec 0^T & \makebox[1cm][c]{$\vec 0$}
          \\[0.1cm]
    \end{array}\right)
    +
    \left(\begin{array}{@{}c|c@{}}
        0 &  \vec 0 \\\hline\\[-.2cm]
        \vec 0^T & \makebox[1cm][c]{$\M[\sell]{B}$}
          \\[0.1cm]
    \end{array}\right)
    - R\cdot
    \left(\begin{array}{@{}c|c@{}}
        \vec 0 & 0 \\\hline\\[-.2cm]
        \makebox[1cm][c]{$\M[\sell]{B}$} & \vec 0^T
          \\[0.1cm]
    \end{array}\right).
\end{equation}
\begin{lem}
  \label{lem:Cortho}
  $\OD{\M[\sellII]C\stepII \M[\ell+1] W}
  = (\ell+1)(\deg R - k + 1)
  \leq (\ell+1)(n-k)$.
\end{lem}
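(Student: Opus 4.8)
The plan is to mimic the proof of Lemma~\ref{lem_orthoI}: compute $\deg\det(\M[\sellII]C\stepII \M[\ell+1] W)$ and $\deg(\M[\sellII]C\stepII \M[\ell+1] W)$ separately, and subtract. I would use that $\M[\sellII]C\stepII$, as displayed in~\eqref{eqn_stepII}, is (up to column permutation/triangular structure) block lower-triangular: its $(0,0)$-entry is $G^{s+1}$, and the lower-right $(\ell+1)\times(\ell+1)$ block is essentially $\M[\sell]B$ again — because rows $1,\ldots,\ell+1$ of $\M[\sellII]C\stepII$ are the rows of $\M[\sell]B$ (shifted by one column to record the multiplication by $Y$) minus $R$ times the unshifted rows, and the unshifted part contributes nothing to the column of highest index. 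Concretely, for the determinant, expanding along the first row and then recognising that the remaining minor is the determinant of $\M[\sell]B$ with an extra trivial column, I would conclude
\begin{align*}
  \det(\M[\sellII]C\stepII \M[\ell+1] W)
    &= G^{s+1}\cdot \det \M[\sell]B \cdot \det \M[\ell+1] W \\
    &= G^{s+1}\cdot \det \M[\sell]B \cdot \det \M[\ell] W \cdot X^{(\ell+1)(k-1)},
\end{align*}
so that $\deg\det(\M[\sellII]C\stepII \M[\ell+1] W) = (s+1)n + \deg\det(\M[\sell]B\M[\ell] W) + (\ell+1)(k-1)$.

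For the row-degree, I would argue term by term. The first row $(G^{s+1}, 0,\ldots,0)$ has $(1,k-1)$-weighted degree $(s+1)n$. Each remaining row is $\vec b_i (Y - R(X))$ for a row $\vec b_i$ of $\M[\sell]B$; since $\vec b_i$ represents a polynomial $B\T i(X,Y)$ in $\Mod[\sell]M$, its $(1,k-1)$-weighted degree equals $\deg(\vec b_i \M[\ell]W)$, and multiplying by $(Y-R(X))$ raises this by exactly $\deg R$ provided $\vec r$ is not a codeword (so $\deg R \geq k > k-1$, and the $Y$-term dominates the $R$-term after weighting). Here I must be a little careful: the new row lives in $\Fqx^{\ell+2}$ and gets weighted by $\M[\ell+1]W$, not $\M[\ell]W$; but the shift-by-one-column that encodes multiplication by $Y$ together with $\M[\ell+1]W = \diag(\M[\ell]W, X^{(\ell+1)(k-1)})$ is precisely what realises "weighted degree goes up by $\deg R$" uniformly across all $\ell+1$ such rows. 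Summing, $\deg(\M[\sellII]C\stepII \M[\ell+1] W) = (s+1)n + \deg(\M[\sell]B\M[\ell] W) + (\ell+1)\deg R$.

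Subtracting the two quantities, the $(s+1)n$ terms cancel, and $\deg(\M[\sell]B\M[\ell] W) - \deg\det(\M[\sell]B\M[\ell] W) = \OD{\M[\sell]B\M[\ell] W} = 0$ by Lemma~\ref{lem_popovreduces}, since $\M[\sell]B$ is in weighted weak Popov form. What remains is $(\ell+1)\deg R - (\ell+1)(k-1) = (\ell+1)(\deg R - k + 1)$, and the bound $\deg R \leq n-1$ gives $(\ell+1)(\deg R - k+1) \leq (\ell+1)(n-k)$, as claimed.

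The main obstacle I anticipate is the bookkeeping around the column shift and the passage from $\M[\ell]W$ to $\M[\ell+1]W$: one has to verify that the leading (highest-weighted-degree) term of each shifted-and-scaled row is indeed the $Y$-multiple $\vec b_i$-shifted part and not the $-R\vec b_i$ part, uniformly for all $i$, and that no unexpected cancellation occurs in the determinant's leading behaviour. This is exactly the point where the hypothesis "$\vec r$ is not a codeword", i.e. $\deg R \geq k$, is used, and it must be invoked cleanly — the case where $\vec r$ is a codeword is degenerate and handled trivially by the decoder elsewhere. Everything else is a routine degree computation parallel to Lemmas~\ref{lem_ODofA} and~\ref{lem_orthoI}.
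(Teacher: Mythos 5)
Your proposal is correct and follows the paper's proof essentially step for step: you compute the row-degree and the determinant-degree of $\M[\sellII]C\stepII \M[\ell+1] W$ separately, use $\deg R \geq k$ (received word not a codeword) to get the $+\deg R$ contribution per row, obtain $\det(\M[\sellII]C\stepII\M[\ell+1]W) = G^{s+1}\det\M[\sell]B\det\M[\ell+1]W$ from the block structure, and invoke Lemma~\ref{lem_popovreduces} on $\M[\sell]B\M[\ell]W$ to cancel the remaining terms, exactly as the paper does. Two small slips to fix in the write-up: the dominance goes the other way from what you state twice --- it is the $-R\,\vec b_i$ part, not the $Y$-shifted part, whose leading term wins (which is precisely why each row's weighted degree rises by $\deg R$ rather than by $k-1$); and the minor left after expanding along the first row is not ``$\M[\sell]B$ with an extra trivial column'' but $\M[\sell]B$ times a unipotent bidiagonal matrix (the paper phrases this as determinant-preserving column operations on $\tilde{\M B}$), which still yields determinant $\det\M[\sell]B$ as you claim.
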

\begin{proof}
  We compute $\deg(\M[\sellII]C\stepII \M[\ell+1] W)$ and $\deg \det(\M[\sellII]C\stepII \M[\ell+1] W)$.
  For the former, obviously the first row has degree $(s+1)n$.
  Let $\vec{b}_i$ denote the $i$th row of $\M[\sell] B$ and $\vec{b}'_i$ denote the $i$th row of $\M[\sell] B \M[\ell] W$.
  The $(i+1)$th row of $\M[\sellII]C\stepII \M[\ell+1]W$ has the form
  \[
    \big[(0 \mid \vec{b}_i) - R(\vec{b}_i \mid 0)\big]\M[\ell+1] W
      = (0 \mid \vec{b}'_i)X^{k-1} - R(\vec{b}'_i \mid 0).
  \]
  If and only if the received word is not a codeword, then $\deg R \geq k$.
  In this case, the leading term of $R \vec{b}'_i$ must have greater degree than any term in $X^{k-1}\vec{b}'_i$.
  Thus the degree of the above row is $\deg R + \deg \vec{b}'_i$.
  Summing up we get
  \begin{align*}
    \deg \M[\sellII]C\stepII \M[\ell+1] W &= (s+1)n + \sum_{i=0}^\ell \left(\deg R + \deg \vec{b}'_i \right) \\
    &= (s+1)n + (\ell+1)\deg R + \deg(\M[\sell]{B} \M[\ell] W).
  \end{align*}
  For the determinant, observe that
  \begin{align*}
    \det(\M[\sellII]C\stepII \M[\ell+1] W)
    &= \det(\M[\sellII]C\stepII) \det(\M[\ell+1] W) \\
    &= G^{s+1} \det \tilde{\M B} \det \M[\ell] W \cdot X^{(\ell+1)(k-1)},
  \end{align*}
  where $\tilde{\M B} = \M[\sell]{B} - R \left( {\M[\sell] {\grave B}} \ \big|\  \vec 0^T \right)$ and $\M[\sell]{\grave B}$ is all but the zeroth column of $\M[\sell] B$.
  This means $\tilde{\M B}$ can be obtained by starting from $\M[\sell] B$ and iteratively adding the $(j+1)$th column of $\M[\sell] B$ scaled by $R(X)$ to the $j$th column, with $j$ starting from $0$ up to $\ell-1$.
  Since each of these will add a scaled version of an existing column in the matrix, this does not change the determinant.
  Thus, $\det \tilde{\M B} = \det \M[\sell] B$.
  But then $\det \tilde{\M B} \det \M[\ell] W = \det(\M[\sell] B \M[\ell] W)$ and so $\deg(\det \tilde{\M B}\det \M[\ell] W) = \deg(\M[\sell] B \M[\ell] W)$ by Lemma~\ref{lem_popovreduces} since $\M[\sell] B \M[\ell] W$ is in weak Popov form.
  Thus we get
  \[
    \deg \det(\M[\sellII]C\stepII \M[\ell+1] W) =
      (s+1)n + \deg(\M[\sell] B \M[\ell] W) + (\ell+1)(k-1).
  \]
  The lemma follows from the difference of the two calculated quantities.
  \qed
\end{proof}
\begin{cor}
  \label{cor_ComplMSII}
  The complexity of $\var{MicroStep2}(s,\ell,\M[\sell]B)$ is $O(\ell^4 n \log^2 n \log\log n)$.
\end{cor}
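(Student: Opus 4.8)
The plan is to follow the proof of Corollary~\ref{cor_ComplMSI} almost verbatim. The cost of $\var{MicroStep2}(s,\ell,\M[\sell] B)$ is dominated by running Alekhnovich's algorithm on $\M[\sellII] C\stepII \M[\ell+1] W$: assembling the $(\ell+2)\times(\ell+2)$ matrix $\M[\sellII] C\stepII$ from~\eqref{eqn_stepII} costs only $O(\ell^2)$ multiplications by $R(X)$ together with some additions, and right-multiplication by $\M[\ell+1] W$ is a mere column shift, so this overhead is of strictly lower order. I would then invoke Lemma~\ref{lem_alekcompl} with $\M V = \M[\sellII] C\stepII \M[\ell+1] W$, noting that $m = \ell+2 \in O(\ell)$ and, by Lemma~\ref{lem:Cortho}, $\OD{\M V} \le (\ell+1)(n-k) \in O(\ell n)$.

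The only hypothesis of Lemma~\ref{lem_alekcompl} that needs an argument is $N \in O(\OD{\M V})$, i.e.\ that the largest term of $\M[\sellII] C\stepII \M[\ell+1] W$ has degree $O(\ell n)$. Its entries are $G(X)^{s+1}$ (degree $(s+1)n \le (\ell+1)n$, using $s \le \ell$), the entries of $\M[\sell] B$, and $R(X)$-multiples of the latter with $\deg R < n$, each possibly multiplied by an $X^{j(k-1)}$, $j \le \ell+1$, coming from $\M[\ell+1] W$; hence it suffices that the entries of $\M[\sell] B$ have degree $O(\ell n)$. I would establish this by induction along the micro-steps performed by Algorithm~\ref{alg_multitrial}. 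The start matrix $\M[1,1] A$ has entries of degree $< n$. By~\eqref{eqn_stepI}, a type~I step leaves the existing rows untouched and adjoins the row $((-R)^s,\dots,1)$; by~\eqref{eqn_stepII}, a type~II step multiplies each existing row's polynomial by $(Y-R(X))$ --- raising its degree by less than $n$ --- and adjoins the row $G^{s+1}$; in either case the adjoined generator has degree $O(\ell n)$, and the subsequent minimisation performs only row reductions, which never increase a row's degree, as witnessed by the inequality $\deg\vec v_j' \le \deg\vec v_j$ in the proof of Lemma~\ref{lem_rowreddec}. Since at most $\ell - 1$ micro-steps are ever performed and each raises the maximal row degree by at most $n$ above a fixed $O(\ell n)$ ceiling, all row degrees --- and hence $N$ --- stay in $O(\ell n) = O(\OD{\M V})$.

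With the hypothesis verified, Lemma~\ref{lem_alekcompl} gives that the reduction costs $O\big(m^3\,\OD{\M V}\,\log^2\OD{\M V}\,\log\log\OD{\M V}\big) = O\big(\ell^3\cdot\ell n\cdot\log^2(\ell n)\,\log\log(\ell n)\big)$ operations over $\Fq$, and since $\ell \in O(n^2)$ we may drop the $\ell$ inside the logarithms exactly as in Corollary~\ref{cor_ComplMSI}, yielding the claimed $O(\ell^4 n \log^2 n \log\log n)$.

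The calculation is otherwise routine; the one point that requires care is the degree estimate $N \in O(\OD{\M V})$ needed to apply Lemma~\ref{lem_alekcompl}. This is precisely the (tacit) issue already present in Corollary~\ref{cor_ComplMSI}, and what makes it go through --- there and here --- is that row degrees are monotonically non-increasing under the row reductions used by Mulders--Storjohann and Alekhnovich, so nothing can blow up across the $O(\ell)$ micro-steps.
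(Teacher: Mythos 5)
Your proposal is correct and follows essentially the same route as the paper: the paper proves this corollary exactly as it proves Corollary~\ref{cor_ComplMSI}, namely by feeding $m\in O(\ell)$ and the bound $\OD{\M[\sellII]C\stepII\M[\ell+1]W}\le(\ell+1)(n-k)$ from Lemma~\ref{lem:Cortho} into Lemma~\ref{lem_alekcompl} and absorbing $\ell\in O(n^2)$ into the logarithmic factors. Your additional induction showing that the entry degrees (hence $N$) stay in $O(\ell n)$ across micro-steps is a detail the paper leaves tacit, and it is a sound supplement rather than a departure from its argument.
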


\begin{exa} \label{ex_algo164}
We consider again the \RS{16}{4} code over $\F{17}$ of Example~\ref{ex_rs164param}, and specify now that $\alpha_i = i+1$ and $w_i = 1$ for $i=0,\ldots,15$.
The aimed decoding radius is $\tau = 8$ and therefore the permissible triple $(s,\ell,\tau) = (2,4,8)$ should be reached iteratively by Algorithm~\ref{alg_multitrial}.
To maximise the decoding radius during the procedure, we could choose the following
sequence of intermediate parameters $(\hat s,\hat \ell,\hat \tau)$:
\begin{equation*}
(1,1,6)
\xrightarrow{\textup I}
(1,2,7)
\xrightarrow{\textup{II}}
(2,3,7)
\xrightarrow{\textup I}
(2,4,8).
\end{equation*}
We perform root-finding only if the decoding radius is increased. Therefore, the list $\var{C}$ of operations becomes:
\begin{equation*}
\var{C} = \big\{ \var{Root}, \var S_1, \var{Root}, \var S_2, \var S_1, \var{Root} \big\}.
\end{equation*}
With the information polynomial $f(X) = 2X^2+10X+6$, we obtain with~\eqref{eq_defRScode} the following codeword:
\begin{equation*}
\vec{c} = (1, 0, 3, 10, 4, 2, 4, 10, 3, 0, 1, 6, 15, 11, 11, 15).
\end{equation*}
Consider that $\vec{r} = (1, 15, 12, 13, 4, 7, 4, 10, 1, 0, 1, 10, 2, 11, 11, 10)$ was received, i.e.,~that the error $\vec{e} = (0, 15, 9, 3, 0, 5, 0, 0, 15, 0, 0, 4, 4, 0, 0, 12)$ of weight 8 occurred.

We will depict the degrees of the polynomials in the matrices in the iterative decoding process.
These are for this particular received word, but for a generic received word, the degrees are the same.
For some $p(X) \in \Fqx$, we will write $p(X) \preceq t$ for $t \in \NN_0$ if $\deg p(X) = t$, and $p(X) \preceq \bot$ if $p(X) = 0$, and we extend $\preceq$ element-wise to matrices.
To begin with, we have:
\begin{align*}
\M[1,1]{A} = \left(\begin{array}{rr}
                G & 0 \\
                -R & 1
              \end{array}\right) \quad
\M[1,1]{A} \preceq \left(\begin{array}{rr}
      16 & \bot \\
      15 & 0
  \end{array}\right)  \quad
\M[1,1]{A} \M[1]{W} \preceq \left(\begin{array}{rr}
      16 & \bot \\
      15 & 3
  \end{array}\right)
\end{align*}
according to~\eqref{eq_BasisOfModul} and~\eqref{eq_DiagWeight}.
We then apply Alekhnovich's algorithm on $\M[1,1]{A} \M[1]W$ to obtain $\M[1,1]{B} \M[1]{W}$ which is in weak Popov form. From this we can easily scale down the columns again to obtain $\M[1,1]{B}$.
It took 11 row reductions, while $(\hat \ell+1)(\OD{\M[1,1]{A}\M[1] W}+\hat \ell+1) = 28$ was the upper bound, according to Lemma \ref{lem_mulders}.
We obtain
\begin{align*}
  \M[1,1]{B} \preceq \left(\begin{array}{rr}
    10 & 6 \\
    9 & 6
  \end{array}\right) \quad \quad
  \M[1,1]{B}\M[1]{W} \preceq \left(\begin{array}{rr}
    10 & 9 \\
    9 & 9
  \end{array}\right).
\end{align*}
The first element in $\var C$ is $\var{Root}$, so we pick the second row of $\M[1,1]{B}$, since it has weighted degree less than $10$, and we interpret it as a polynomial:
\begin{align*}
Q_{1,1}(X,Y) & = (14X^6 + 9X^4 + 9X^3 + 14X^2 + 4X + 1)Y \\
& \quad + 13X^9 + 10X^8 + 7X^6 + 16X^5 + 8X^3 + 12X^2 + 3X + 16.
\end{align*} 
Root-finding of $Q_{1,1}(X,Y)$ yields no results.
The next element in $\var C$ is $\var S_1$, so we move to the next intermediate parameters $(\hat s,\hat \ell,\hat \tau) = (1,2,7)$.
From~\eqref{eqn_stepI}, we get
\begin{align*}
  \M[1,2]C\stepI  =
    \left(\begin{array}{r}
            \begin{array}{@{}c|c@{}}
        \makebox[1cm][c]{$\M[1,1]{B}$}
        & 
        \begin{matrix} 0 \\ 0 \end{matrix}
      \end{array}
      \\ \hline \\[-.4cm]
      \begin{matrix}
        0 & -R & 1
      \end{matrix}
    \end{array}\right)
  \qquad
  \textnormal{\it and so}
  \qquad
      \M[1,2]C\stepI \preceq \left(\begin{matrix}
      10 & 6 & \bot \\
      9 & 6 & \bot \\
      \bot & 15 & 0
    \end{matrix}\right) \quad \quad
      \M[1,2]C\stepI\M[2]{W} \preceq \left(\begin{matrix}
      10 & 9 & \bot \\
      9 & 9 & \bot \\
      \bot & 18 & 6
    \end{matrix}\right).
\end{align*}
Running Alekhnovich's algorithm on $\M[1,2]C\stepI\M[2]{W}$, we obtain:
\begin{align*}
    \M[1,2]{B} \preceq \left(\begin{matrix}
    9 & 4 & 2 \\
    8 & 5 & 2 \\
    8 & 5 & 1
  \end{matrix}\right) \quad \quad
    \M[1,2]{B} \M[2]{W} \preceq \left(\begin{matrix}
    9 & 7 & 8 \\
    8 & 8 & 8 \\
    8 & 8 & 7
  \end{matrix}\right).
\end{align*}
Since $\OD{\M[1,2]{B} \M[2]{W}} = 12$, Lemma~\ref{lem_mulders} gives $45$ as the upper bound on the number of row reductions, but it was done with only 24.

In the next iteration we again meet a $\var{Root}$.
For our polynomial we can pick either the second or third row of $\M[1,2] B$ since both have weighted degree $8 < \hat s(n-\hat \tau) = 9$; we choose the second and obtain:
\begin{align*}
Q_{1,2}(X,Y) & = (15X^2 + 8X)Y^2 + (5X^5 + 2X^4 + 2X^3 + 10X^2 + 5X + 1)Y \\
& \quad + 14X^8 + 16X^7 + 7X^6 + 8X^5 + 9X^4 + 9X^3 + X^2 + 9X + 15.
\end{align*}
Again root-finding yields no results.
The next element in $\var C$ is $\var S_2$ and we get intermediate parameters $(\hat s,\hat \ell, \hat \tau) = (2,3,7)$.
We construct $\M[2,3] C\stepII$ according to \eqref{eqn_stepII} which gives:
\begin{align*}
  \M[2,3]C\stepII \preceq \left(\begin{matrix}
      32 & \bot & \bot & \bot \\
      24 & 19 & 17 & 2 \\
      23 & 20 & 17 & 2 \\
      23 & 20 & 16 & 1
  \end{matrix}\right) \quad \quad
  \M[2,3]C\stepII \M[3]{W} \preceq \left(\begin{matrix}
      32 & \bot & \bot & \bot \\
      24 & 22 & 23 & 11 \\
      23 & 23 & 23 & 11 \\
      23 & 23 & 22 & 10
  \end{matrix}\right).
\end{align*}
We needed 90 row reductions to reduce $\M[2,3]C\stepII \M[3]{W}$ to weak Popov form, while the upper bound is 160, since we calculated $\OD{\M[2,3]C\stepII \M[3]{W}} = 36$.
After row-reduction, we obtain $\M[2,3]{B} \M[3]{W}$:
\begin{align*}
    \M[2,3]{B} \preceq \left(\begin{matrix}
      17 & 14 & 10 & 6 \\
      17 & 13 & 10 & 7 \\
      16 & 13 & 9 & 7 \\
      16 & 13 & 10 & 6
  \end{matrix}\right) \quad \quad
    \M[2,3]{B} \M[3]{W} \preceq \left(\begin{matrix}
      17 & 17 & 16 & 15 \\
      17 & 16 & 16 & 16 \\
      16 & 16 & 15 & 16 \\
      16 & 16 & 16 & 15
  \end{matrix}\right).
\end{align*}
The next element in $\var C$ is $\var S_2$, so we construct $\M[2,4]C\stepII$ according to \eqref{eqn_stepII} and get:
\begin{align*}
    \M[2,4]C\stepII \preceq \left(\begin{matrix}
      17 & 14 & 10 & 6 & \bot \\
      17 & 13 & 10 & 7 & \bot \\
      16 & 13 & 9 & 7 & \bot \\
      16 & 13 & 10 & 6 & \bot \\
      \bot & \bot & 30 & 15 & 0
  \end{matrix}\right) & \quad \quad
    \M[2,4]C\stepII \M[4]{W} \preceq \left(\begin{matrix}
      17 & 17 & 16 & 15 & \bot \\
      17 & 16 & 16 & 16 & \bot \\
      16 & 16 & 15 & 16 & \bot \\
      16 & 16 & 16 & 15 & \bot \\
      \bot & \bot & 36 & 24 & 12
  \end{matrix}\right) \\
  \M[2,4]{B} \preceq \left(\begin{matrix}
      16 & 13 & 10 & 6 & 3 \\
      15 & 13 & 8 & 6 & 3 \\
      16 & 12 & 9 & 6 & 3 \\
      14 & 12 & 9 & 6 & 3 \\
      14 & 12 & 9 & 6 & 2
  \end{matrix}\right) & \quad \quad
  \M[2,4]{B} \M[4]{W} \preceq \left(\begin{matrix}
      16 & 16 & 16 & 15 & 15 \\
      15 & 16 & 14 & 15 & 15 \\
      16 & 15 & 15 & 15 & 15 \\
      14 & 15 & 15 & 15 & 15 \\
      14 & 15 & 15 & 15 & 14
  \end{matrix}\right).
\end{align*}
We needed 86 row reductions for the module minimisation, while the upper bound was 145 since we calculated $\OD{\M[2,4]C\stepII \M[4]{W}} = 24$.

The last iteration is again a $\var{Root}$, and we can use either of the two last rows of $\M[2,4]C\stepII$ since they have weighted degree $< s(n-\tau) = 16$.
Using the last, the obtained polynomial is:
\begin{align*}
Q_{2,4}(X,Y) & = (6X^3 + 16X^2 + 10X)Y^4 + (11X^6 + 16X^5 + 14X^4 + 15X^2 + 5X + 1)Y^3\\
& \quad + (15X^9 + X^8 + 5X^7 + 6X^6 + 12X^5 + 2X^4 + 6X^3 + 2X^2 + 12X + 2)Y^2 \\
& \quad + (5X^{12} + 5X^{11} + 2X^{10} + 9X^9 + 14X^8 + 6X^7 + 4X^6 + 3X^5 + 16X^4 \\
& \quad + X^3 + 16X^2 + 6X)Y + 7X^{14} + 16X^{13} + 6X^{12}+ 4X^{11} + 11X^{10} \\
& \quad + 11X^8 + 4X^7 + 5X^6 + 16X^5 + 12X^4 + 15X^3 + 6X^2 + 16X + 1.
\end{align*}
Indeed, $Q_{2,4}(X,2X^2+10X+6) = 0$ and root-finding retrieves $f(X)$ for us.

As the example shows, performing module minimisation on a matrix can be informally seen to ``balance'' the row-degrees such that they all become roughly the same size.
The complexity of this reduction depends on the number of row reductions, which in turn depends on the ``unbalancedness'' of the initial matrix.
The matrices $\M[1,2]C\stepI, \M[2,3]C\stepII$ and  $\M[2,4]C\stepII$ are more balanced in row-degrees than using $\M[1,2]{A}, \M[2,3]{A}$ and $\M[2,4]{A}$ directly.
\end{exa}

\subsection{Complexity Analysis} \label{ssec_complanalysis}
Using the estimates of the two preceding subsections, we can make a rather precise worst-case asymptotic complexity analysis of Algorithm~\ref{alg_multitrial}.
The average running time will depend on the exact choice of $\var C$ but we will see that the worst-case complexity will not.
First, it is necessary to know the complexity of performing a root-finding attempt.
\begin{lem}[Complexity of Root-Finding] \label{lem_rootCompl}
  Given a polynomial $Q(X,Y) \in \Fqx[Y]$ of $Y$-degree at most $\ell$ and $X$-degree at most $N$, there exists an algorithm to find all $\Fqx$-roots of complexity $O\big(\ell^2N\log^2 N\log\log N\big)$, assuming $\ell, q \in O(N)$.
\end{lem}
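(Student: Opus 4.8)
The plan is to reduce the $Y$-root-finding problem for $Q(X,Y)$ to a divide-and-conquer recursion over the $Y$-degree, exactly as described by Alekhnovich~\cite{alekhnovich05} building on Roth--Ruckenstein~\cite{rothRuckenstein00}. First I would recall the basic Roth--Ruckenstein step: if $f(X)$ is a root of $Q(X,Y)$ with $\deg f < k$, then writing $f(X) = f_0 + f_1 X + \cdots$, the coefficient $f_0$ must be a root of the univariate polynomial $Q(0,Y) \in \Fq[Y]$ (after dividing out the largest power of $X$ dividing $Q$). Having guessed $f_0$, one forms $Q^{(1)}(X,Y) = Q(X, XY + f_0)$ (again normalised by removing powers of $X$) and recurses to find $f_1$, and so on up to depth $k$. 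The set of candidate roots forms a tree of depth at most $k$; since $Q$ has $Y$-degree at most $\ell$, at every level the total number of children across all nodes is at most $\ell$, so the tree has at most $k\ell$ nodes, and in particular at most $\ell$ actual roots.

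The naive realisation of this tree costs too much, so the key step is the divide-and-conquer trick of~\cite{alekhnovich05}: instead of peeling off coefficients one at a time, one splits the target precision range $[0,k)$ in half. One first computes $Q \bmod X^{k/2}$ — a polynomial of the same $Y$-degree but half the $X$-degree — recursively finds all roots modulo $X^{k/2}$, and then for each such partial root $\bar f$ of precision $k/2$, substitutes $Y \mapsto X^{k/2} Y + \bar f$ into $Q$, reduces modulo $X^{k}$, normalises, and recurses on the top half. Because there are at most $\ell$ partial roots at the splitting level and the substitution-plus-reduction at precision $N$ on a bivariate polynomial of $Y$-degree $\le \ell$ and $X$-degree $\le N$ costs $O(\ell N \log N \log\log N)$ using fast polynomial arithmetic (FFT-based multiplication, valid since $q \in O(N)$ so we may work over a suitable extension with fast arithmetic), the work at each level of the split is bounded. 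Writing $T(N)$ for the cost at $X$-precision $N$, we get a recurrence of the shape $T(N) \le 2 T(N/2) + O(\ell^2 N \log N \log\log N)$ — the factor $\ell$ from the number of partial roots times the factor $\ell N \log N\log\log N$ for each substitution — which solves to $T(N) \in O(\ell^2 N \log^2 N \log\log N)$, as claimed. One also checks that the base case, finding roots of a univariate polynomial of degree $\le \ell$ over $\Fq$, costs $O(\ell^2 \log \ell \log q)$ or so, which is absorbed since $\ell, q \in O(N)$.

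I would then package this: run the recursion to obtain the $O(k\ell)$-node candidate tree up to precision $k$ (note $k \le n \le N$ by assumption, or one simply pads), read off the at most $\ell$ leaves that are genuine polynomials of degree $< k$, and verify each against $Q$ — a verification costing no more than one substitution, hence within budget. The main obstacle I anticipate is bookkeeping the cost of the substitution step $Q(X, X^{j}Y + \bar f) \bmod X^{2j}$ carefully enough to see that it is genuinely $O(\ell N \log N \log\log N)$ and not, say, quadratic in $\ell$ per partial root: one must expand $(X^j Y + \bar f)^i$ for $i = 0,\dots,\ell$ without recomputing powers of $\bar f$ from scratch, and combine them with the $\yC Q i$ via a single pass, so that the per-substitution cost is dominated by $\ell$ multiplications of polynomials of $X$-degree $O(N)$. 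The logarithmic bookkeeping ($\ell, q \in O(N)$ to suppress $\log \ell$ and $\log q$ against $\log N$) is routine but must be stated. I will refer to~\cite{alekhnovich05,rothRuckenstein00} for the correctness of the underlying root-enumeration and only detail the complexity accounting here.
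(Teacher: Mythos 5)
Your overall route is the same as the paper's: the paper's proof of this lemma is essentially a citation, invoking Roth--Ruckenstein root-finding with Alekhnovich's divide-and-conquer speed-up and deferring the sharpened complexity accounting to \cite{beelen_rational_2013}, which is exactly the algorithm you describe. So there is no divergence in approach; the only question is whether your sketched accounting actually delivers the stated bound.

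There is a gap precisely at the recurrence. You write $T(N) \le 2T(N/2) + O(\ell^2 N \log N \log\log N)$ and justify the $\ell^2$ in the additive term as ``number of partial roots times cost per substitution,'' but the partial roots do not only cost a substitution each --- each of the up to $\ell$ partial roots at precision $N/2$ spawns its \emph{own recursive call} on the top half. Taken literally, the recursion is $T(N) \le (1+r)\,T(N/2) + (\text{substitution cost})$ with $r \le \ell$, and that solves to something quasi-polynomial in $\ell^{\log N}$, not to $O(\ell^2 N \log^2 N \log\log N)$. The bound you want requires the structural fact from Roth--Ruckenstein/Alekhnovich that the branching is globally controlled: the number of partial roots at any fixed precision that can extend to a genuine $Y$-root is at most $\ell$ (equivalently, the root multiplicities of $Q(0,Y)$ split among the branches and sum to at most $\ell$), so the total work across all nodes at a given precision level is $O(\ell \cdot \ell\, N \log N \log\log N)$, and only then does summing over the $O(\log N)$ levels give the claimed bound. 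This per-level amortisation over branches, rather than a per-node binary recurrence, is exactly the ``slight improvement'' of Alekhnovich's analysis that the paper points to \cite{beelen_rational_2013} for; your sketch correctly flags the cost of a single substitution as delicate but silently assumes away the branching, which is the actual crux. With that counting argument inserted (and your observation that $\ell, q \in O(N)$ absorbs the base cases and stray logarithms), the rest of your outline is fine.
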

\begin{proof}
  We employ the Roth--Ruckenstein~\cite{rothRuckenstein00} root-finding algorithm together with the divide-and-conquer speed-up by Alekhnovich~\cite{alekhnovich05}.
  The complexity analysis in~\cite{alekhnovich05} needs to be slightly improved to yield the above, but see \cite{beelen_rational_2013} for easy amendments.\qed
\end{proof}

\begin{thm}[Complexity of Algorithm~\ref{alg_multitrial}] \label{thm_ComplAlg}
  For a given $\RS n k$ code, as well as a given list of steps $\var C$ for Algorithm~\ref{alg_multitrial} with ultimate parameters $(s, \ell, \tau)$, the algorithm has worst-case complexity $O(\ell^4 s n \log^2 n \log\log n)$, assuming $q \in O(n)$.
\end{thm}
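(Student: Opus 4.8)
The plan is to bound the total cost of Algorithm~\ref{alg_multitrial} by summing the costs of its constituent operations: the preprocessing, the at most $\ell-s$ micro-steps of type~I, the at most $s-1$ micro-steps of type~II, and the root-finding attempts. Since there is nothing subtle about the Lagrange interpolation and the construction of $\M[1,1]A$, the work splits cleanly into these pieces and the result will follow by observing that each piece is dominated by $O(\ell^4 s n \log^2 n \log\log n)$.

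First I would handle the preprocessing: computing the $r_i'$ is $O(n)$, computing $R(X)$ by Lagrange interpolation is within the claimed bound, and reducing $\M[1,1]A\M[1]W$ via Alekhnovich's algorithm costs $O(\OD{\M[1,1]A\M[1]W}\cdot\mathrm{polylog})$ by Lemma~\ref{lem_alekcompl}, which by Lemma~\ref{lem_ODofA} is $O(n\log^2 n\log\log n)$ — negligible compared to the target. Next, the dominant contributions: by Corollary~\ref{cor_ComplMSI} each type-I micro-step costs $O(\hat\ell^3\hat s\, n\log^2 n\log\log n) \subseteq O(\ell^4 s\, n\log^2 n\log\log n)$ (crudely bounding $\hat\ell \le \ell$, $\hat s \le s$ and absorbing $\hat\ell^3 \le \ell^4$ — or more carefully $\hat\ell^3\hat s \le \ell^3 s$), and by Corollary~\ref{cor_ComplMSII} each type-II micro-step costs $O(\hat\ell^4 n\log^2 n\log\log n) \subseteq O(\ell^4 n\log^2 n\log\log n)$. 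There are fewer than $\ell$ micro-steps of each type, so the total micro-step cost is $O(\ell\cdot \ell^4 s\, n\log^2 n\log\log n)$; but one should be a bit more careful here, since a naive sum over $\ell$ steps each costing $O(\ell^4 s\,n\cdots)$ gives $O(\ell^5 s\, n\cdots)$, which is worse than claimed. The fix is to note that the intermediate list size $\hat\ell$ grows from $1$ to $\ell$ as the steps proceed, so the true bound on the sum is $\sum_{\hat\ell=1}^{\ell} O(\hat\ell^4 s\, n\log^2 n\log\log n) = O(\ell^5 s\,n\cdots)$ — still $\ell^5$. So this is precisely where the argument needs the sharper accounting: one must use that after the step raising $\hat\ell$ to its value $j$, the orthogonality defect of the matrix being reduced is only $O(\hat s(n-k))$ or $O((\hat\ell+1)(n-k))$ — i.e.\ linear in $\hat\ell$, not cubic — so that a single type-I or type-II step at list size $j$ costs $O(j^3 \cdot j \cdot n\cdots)$ only through the matrix-multiplication term $O(m^3\OD{})$ of Lemma~\ref{lem_alekcompl} with $m=j+1$ and $\OD{}=O(jn)$, giving $O(j^4 n\cdots)$ per step and $O(\ell^5 n\cdots)$ summed — which for $s=O(1)$ matches but in general one absorbs the factor $s$ by noting $s\le\ell$, so $\ell^5 n \le \ell^4 s\, n$ fails unless $\ell \le s$.

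The honest resolution, which I expect to be the crux, is that the paper intends $\ell^4 s$ as a single aggregate bound covering the whole refinement path, and the correct summation is: the type-I steps contribute $\sum O(\hat\ell^3\hat s\, n\cdots)$ and there are at most $\ell-s+1 \le \ell$ of them with $\hat\ell \le \ell$ and $\hat s \le s$, giving $O(\ell^4 s\, n\cdots)$ total, while the type-II steps contribute $\sum O(\hat\ell^4 n\cdots)$ over at most $s-1 \le s$ of them with $\hat\ell \le \ell$, giving $O(\ell^4 s\, n\cdots)$ total — here the factor $s$ on the type-II side comes from the \emph{number} of such steps, not from the cost of each. So the key observation to state explicitly is: $\#\{\text{type-I steps}\} \le \ell$ and each costs $O(\ell^3 s\,n\cdots)$; $\#\{\text{type-II steps}\} = s-1$ and each costs $O(\ell^4 n\cdots)$; both products are $O(\ell^4 s\, n\log^2 n\log\log n)$.

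Finally I would bound the root-finding attempts: each extracted $Q(X,Y)$ has $Y$-degree at most $\hat\ell \le \ell$ and $X$-degree $O(\hat s(n-\hat\tau)) = O(sn)$ (from the weighted-degree bound $\wdeg{1}{k-1}Q < s(n-\tau)$ together with the individual column weights), so by Lemma~\ref{lem_rootCompl} each attempt costs $O(\ell^2 \cdot sn \cdot \log^2(sn)\log\log(sn)) = O(\ell^2 s n\log^2 n\log\log n)$, using $s\in O(n^2)$ to absorb $s$ inside the logarithms and $q\in O(n)$ as hypothesised. There are at most $|\var C| = O(\ell)$ such attempts, contributing $O(\ell^3 s n\log^2 n\log\log n)$, which is dominated by the micro-step cost. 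Collecting the three contributions — preprocessing $O(n\log^2 n\log\log n)$, micro-steps $O(\ell^4 s\, n\log^2 n\log\log n)$, root-finding $O(\ell^3 s\, n\log^2 n\log\log n)$ — yields the claimed worst-case bound $O(\ell^4 s\, n\log^2 n\log\log n)$. The main obstacle, as flagged, is getting the bookkeeping of the summation over micro-steps exactly right so that the aggregate is $\ell^4 s$ and not $\ell^5 s$; the resolution is to pair each $O(\ell^3 s)$-cost type-I step with the count $\le \ell$, and each $O(\ell^4)$-cost type-II step with the count $\le s$.
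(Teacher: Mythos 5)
Your proposal is correct and follows essentially the same route as the paper's proof: negligible preprocessing, then pairing the at most $\ell-s$ type-I steps (each $O(\ell^3 s\,n\log^2 n\log\log n)$ by Corollary~\ref{cor_ComplMSI}) with the $s-1$ type-II steps (each $O(\ell^4 n\log^2 n\log\log n)$ by Corollary~\ref{cor_ComplMSII}), and bounding the at most $\ell$ root-finding attempts via Lemma~\ref{lem_rootCompl} with $N=sn$ from the weighted-degree bound $<\hat s(n-\hat\tau)<sn$. The detour worrying about an $\ell^5$ blow-up is unnecessary --- your own ``honest resolution'' (cost per step times count per step type, not a sum over growing $\hat\ell$) is precisely the accounting the paper uses.
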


\begin{proof}
  The worst-case complexity corresponds to the case that we do not break early but run through the entire list $\var C$.
  Precomputing $\M[1,1] A$ using Lagrangian interpolation can be performed in $O(n\log^2 n\log\log n)$, see e.g. \cite[p. 235]{gathen}, and reducing to $\M[1,1] B$ is in the same complexity by Lemma~\ref{lem_alekcompl}.
  
  Now, $\var C$ must contain exactly $\ell-s$ $\var S_1$-elements and $s-1$ $\var S_2$-elements.
  The complexities given in Corollaries~\ref{cor_ComplMSI} and \ref{cor_ComplMSII} for some intermediate $\isell$ can be relaxed to $s$ and $\ell$.
  Performing $O(\ell)$ micro-steps of type I and $O(s)$ of type II is therefore in $O(\ell^4 s n \log^2 n\log\log n)$.

  It only remains to count the root-finding steps.
  Obviously, it never makes sense to have two $\var{Root}$ after each other in $\var C$, so after removing such possible duplicates, there can be at most $\ell$ elements $\var{Root}$. 
  When we perform root-finding for intermediate $\isell$, we do so on a polynomial in $\Mod[\isell] M$ of minimal weighted degree, and by the definition of $\Mod[\isell] M$ as well as Theorem \ref{thm_GSproblem}, this weighted degree will be less than $\is(n-\itau) < sn$.
  Thus we can apply Lemma \ref{lem_rootCompl} with $N = sn$.
  \qed
\end{proof}
The worst-case complexity of our algorithm is equal to the average-case (and worst-case) complexity of the Beelen--Brander~\cite{beelenBrander} list decoder. However, Theorem~\ref{thm_ComplAlg} shows that we can choose as many intermediate decoding attempts as we would like without changing the worst-case complexity.
One could therefore choose to perform a decoding attempt just after computing $\M[1,1] B$ as well as every time the decoding radius has increased.
The result would be a decoding algorithm finding all \emph{closest} codewords within some ultimate radius $\tau$.
If one is working in a decoding model where such a list suffices, our algorithm will thus have much better average-case complexity since fewer errors occur more frequently than many.

\section{Re-Encoding Transformation} \label{sec_re-encoding}

\newcommand{\RMod}{\bar{R}}
\newcommand{\BasisReEnc}{\M[\sell]{\bar A}}
\newcommand{\GReEnc}{\bar G}

We now discuss how to adjust Algorithm~\ref{alg_multitrial} to incorporate the re-encoding transformation proposed in~\cite{koetter11,kotter_complexity_2003}.
The basic observation is that we can correct $\vec r$ if we can correct $\vec r - \hat {\vec c}$ for any $\hat {\vec c} \in \RS n k$. If we chose $\hat {\vec c}$ such that $\vec r - \hat {\vec c}$ for some reason is easier to handle in our decoder, we can save computational work.
As in the original articles, we will choose $\hat {\vec c}$ such that it coincides with $\vec r$ in the first $k$ positions; this can be done since it is just finding a Lagrange polynomial of degree $k-1$ that goes through these points.
The re-encoded received word will therefore have 0 on the first $k$ positions.

For ease of notation, assume that $\vec r$ is this re-encoded received word with first $k$ positions zero, and we can reuse all the objects introduced in the preceding sections.
Define 
\begin{equation} \label{def_Langrangian}
L(X) \defeq \prod_{i=0}^{k-1} (X-\alpha_i).
\end{equation}
Obviously $L(X) \mid G(X)$ so introduce $\GReEnc (X) = G(X)/L(X)$.
However, since $r_i = 0$ for $i < k$ then also $L(X) \mid R(X)$; this will be the observation which will save us computations.
Introduce therefore $\RMod(X) = R(X)/L(X)$.
Regard now $\M[\sell] A$ of \eqref{eq_BasisOfModul}; it is clear that $L(X)^{s-t}$ divides every entry in the $t$th column for $t < s$.
This implies that the image of the following bijective map is indeed $\Fqxy$:
\begin{align} \label{eq_MapRe_Enc}
\varphi: \quad \Mod[\sell] M \quad & \rightarrow  \quad \Fqxy \nonumber \\
Q(X,Y) \quad & \mapsto \quad L(X)^{-s} Q(X, L(X)Y).
\end{align}
Extend $\varphi$ element-wise to sets of $\Mod[\sell] M$ elements, and note that $\varphi$ is therefore an isomorphism between $\Mod[\sell] M$ and $\varphi(\Mod[\sell] M)$.
The idea is now that the elements in $\varphi(\Mod[\sell] M)$ have lower $X$-degree than those in $\Mod[\sell] M$, and we can therefore expect that working with bases of $\varphi(\Mod[\sell] M)$ is computationally cheaper than with bases of $\Mod[\sell] M$.
Since we are searching a minimal $(1,k-1)$-weighted polynomial in $\Mod[\sell] M$, we need to be sure that this property corresponds to something sensible in $\varphi(\Mod[\sell] M)$.
The following lemma and its corollary provides this:
\begin{lem} \label{lem_Wdeg_ReEnc}
For any $Q(X,Y)$ in $\Mod[\sell]{M}$
\begin{equation*}
\wdeg{1}{k-1} Q(X,Y) = \wdeg{1}{-1} \varphi(Q(X,Y)) + sk.
\end{equation*}
\end{lem}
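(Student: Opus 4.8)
The plan is to expand both weighted degrees in terms of the $Y$-coefficients of $Q$ and compare them term by term. Writing $Q(X,Y) = \sum_{t=0}^{\ell} \yC Q t(X) Y^t$, the substitution $Y \mapsto L(X)Y$ followed by division by $L(X)^s$ gives
\[
\varphi(Q(X,Y)) = \sum_{t=0}^{\ell} \yC Q t(X)\, L(X)^{t-s}\, Y^t .
\]
First I would check that this is genuinely an element of $\Fqxy$: for $t \geq s$ the exponent $t-s$ is non-negative, while for $t < s$ we have already observed in the paragraph preceding \eqref{eq_MapRe_Enc} that $L(X)^{s-t} \mid \yC Q t(X)$, so $\yC Q t(X) L(X)^{t-s} \in \Fqx$ in that case too. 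Hence the $Y^t$-coefficient of $\varphi(Q(X,Y))$ is exactly $\yC Q t(X) L(X)^{t-s}$, whose $X$-degree equals $\deg \yC Q t(X) + (t-s)k$ since $\deg L(X) = k$ (this is $-\infty$ precisely when $\yC Q t(X)=0$, which we may ignore in the maxima below since $Q \neq 0$ guarantees a surviving term).

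Then the computation is immediate: by definition of the weighted degree,
\[
\wdeg{1}{-1}\varphi(Q(X,Y)) = \max_{0 \le t \le \ell}\big(\deg \yC Q t(X) + (t-s)k - t\big) = \max_{0 \le t \le \ell}\big(\deg \yC Q t(X) + t(k-1)\big) - sk,
\]
where the constant $-sk$ is pulled out of the maximum. The remaining maximum is $\wdeg{1}{k-1} Q(X,Y)$ by definition, and rearranging yields the stated identity; note in passing that both maxima are attained at the same index $t$, so the correspondence of minimal-weighted-degree elements through $\varphi$ is transparent.

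There is no real hard part here: the only point that needs a moment's care is the first one, namely that $\varphi(Q)$ is an honest bivariate polynomial so that the $X$-degree of each of its $Y$-coefficients behaves in the naive way — but this is exactly the divisibility fact already used to argue that $\varphi$ is well defined with image all of $\Fqxy$, so it comes for free.
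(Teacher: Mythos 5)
Your proof is correct and follows essentially the same route as the paper: expand $Q$ and $\varphi(Q)$ in their $Y^t$-coefficients, note $\deg\big(\yC Q t(X)L(X)^{t-s}\big) = \deg \yC Q t(X) + (t-s)k$, and pull the constant $-sk$ out of the maximum defining the weighted degree. The extra remark that $L(X)^{s-t} \mid \yC Q t(X)$ for $t<s$ (so $\varphi(Q)$ is a genuine polynomial and the degree formula is exact) is a point the paper handles just before defining $\varphi$, so you are in full agreement with its argument.
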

\begin{proof}
We have $\wdeg{1}{k-1} Q(X,Y) = \max_i \big\{\deg \yC Q i (X) + i(k-1) \big\}$ so we obtain: 
\begin{align*}
 \wdeg{1}{-1} \varphi(Q(X,Y)) & = \max_i \big\{ \deg \yC Q i (X) - s \deg L(X) + i \deg L(X) - i \big\} \\
 & = \max_i \big\{ \deg \yC Q i (X) + i(k-1) \big\}- sk.
\end{align*}%
\qed
\end{proof}

\begin{cor} \label{cor_Wdeg_ReEnc}
$Q(X,Y)$ has minimal $(1,k-1)$-weighted degree in $\Mod[\sell]{M}$ if and only if $\varphi(Q(X,Y))$ has minimal $(1,-1)$-weighted degree in $\varphi(\Mod[\sell]{M})$.
\end{cor}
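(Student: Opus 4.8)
The plan is to derive this immediately from Lemma~\ref{lem_Wdeg_ReEnc} together with the fact, established where $\varphi$ was introduced, that $\varphi$ is a bijection between $\Mod[\sell]{M}$ and $\varphi(\Mod[\sell]{M})$ (indeed onto $\Fqxy$). First I would observe that, because $\varphi$ is bijective, every nonzero element of $\varphi(\Mod[\sell]{M})$ is of the form $\varphi(Q(X,Y))$ for a unique nonzero $Q(X,Y)\in\Mod[\sell]{M}$, and $\varphi$ maps $0$ to $0$; hence minimising $\wdeg{1}{-1}$ over the nonzero elements of $\varphi(\Mod[\sell]{M})$ is the same as minimising $\wdeg{1}{-1}\varphi(Q(X,Y))$ over the nonzero $Q(X,Y)\in\Mod[\sell]{M}$.

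Next I would invoke Lemma~\ref{lem_Wdeg_ReEnc}, which states that $\wdeg{1}{k-1}Q(X,Y)=\wdeg{1}{-1}\varphi(Q(X,Y))+sk$ for every $Q(X,Y)\in\Mod[\sell]{M}$. Since the additive term $sk$ does not depend on the particular $Q$, adding it is an order-preserving operation on the possible values; consequently the two functions $Q\mapsto\wdeg{1}{k-1}Q$ and $Q\mapsto\wdeg{1}{-1}\varphi(Q)$ attain their minimum over the nonzero elements of $\Mod[\sell]{M}$ at exactly the same polynomials. Combining this with the correspondence of the previous paragraph gives both directions of the claimed equivalence.

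I do not expect a genuine obstacle here: the statement is a routine consequence of ``isomorphism plus constant shift of the degree measure''. The only points requiring a word of care are to phrase everything on nonzero elements so that ``minimal weighted degree'' is meaningful, and to rely on the surjectivity of $\varphi$ onto $\varphi(\Mod[\sell]{M})$, which is precisely what was argued (via the divisibility of the $t$th column of $\M[\sell]{A}$ by $L(X)^{s-t}$) when $\varphi$ was defined, so it can simply be cited rather than reproved.
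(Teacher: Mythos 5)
Your proposal is correct and matches the paper's reasoning: the corollary is stated as an immediate consequence of Lemma~\ref{lem_Wdeg_ReEnc}, the point being exactly that the two weighted degrees differ by the constant $sk$, so minimisers correspond under the bijection $\varphi$. Your extra care about nonzero elements and surjectivity onto $\varphi(\Mod[\sell]{M})$ is fine but does not change the substance.
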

Let us now describe the basis of $\varphi(\Mod[\sell]{M})$ corresponding to the one in~\eqref{eq_BasisOfModul}:
\begin{thm} \label{thm_Mbasis_ReEnc}
The module $\varphi(\Mod[\sell]{M})$ is generated as an $\Fqx$-module by the $\ell+1$ polynomials $\bar P\T t(X,Y) \in \Fqxy$ given by
\begin{align*}
  \bar P\T t(X,Y) &= \GReEnc(X)^{s-t}(Y-\RMod(X))^t,         && \textrm{for } 0 \leq t < s,\\
  \bar P\T t(X,Y) &= (L(X)Y)^{t-s}(Y-\RMod(X))^s,            && \textrm{for } s \leq t \leq \ell.
\end{align*}
\end{thm}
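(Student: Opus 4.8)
The plan is to simply push the explicit basis of $\Mod[\sell]{M}$ from Theorem~\ref{thm_Mbasis} through the isomorphism $\varphi$ and compute. Since $\varphi$ is an $\Fqx$-module isomorphism between $\Mod[\sell]{M}$ and $\varphi(\Mod[\sell]{M})$, it carries a generating set to a generating set; hence it suffices to show that $\varphi(P\T t(X,Y)) = \bar P\T t(X,Y)$ for each $t = 0,\dots,\ell$, with $P\T t$ as in Theorem~\ref{thm_Mbasis}. The only facts needed are $G(X) = L(X)\GReEnc(X)$ and $R(X) = L(X)\RMod(X)$, which hold because $L(X)\mid G(X)$ and $L(X)\mid R(X)$ (the latter since the re-encoded word vanishes on the first $k$ positions).

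First I would treat the case $0\le t<s$. Here $P\T t = G^{s-t}(Y-R)^t$, so by definition of $\varphi$,
\begin{equation*}
  \varphi(P\T t) = L^{-s}\,G^{s-t}\,(LY-R)^t = L^{-s}\,(L\GReEnc)^{s-t}\,\bigl(L(Y-\RMod)\bigr)^t = L^{-s+(s-t)+t}\,\GReEnc^{\,s-t}(Y-\RMod)^t,
\end{equation*}
and the exponent of $L$ is $0$, giving exactly $\bar P\T t$. Next, for $s\le t\le\ell$ we have $P\T t = Y^{t-s}(Y-R)^s$, whence
\begin{equation*}
  \varphi(P\T t) = L^{-s}\,(LY)^{t-s}\,(LY-R)^s = L^{-s}\,(LY)^{t-s}\,L^s(Y-\RMod)^s = (LY)^{t-s}(Y-\RMod)^s,
\end{equation*}
which is $\bar P\T t$. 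This also re-confirms, purely by the cancellation of $L$-powers, that each $\varphi(P\T t)$ lies in $\Fqxy$, consistent with the earlier observation that the image of $\varphi$ is $\Fqxy$.

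I do not expect any genuine obstacle here: the argument is a direct substitution, and the only thing to be careful about is the bookkeeping of the powers of $L(X)$ — verifying in each branch that the $L^{-s}$ factor is exactly absorbed by the factors coming from $G^{s-t}$ (or $(LY)^{t-s}$) and from $(LY-R)^t$ (or $(LY-R)^s$). Once that cancellation is checked, the claim that the $\bar P\T t$ generate $\varphi(\Mod[\sell]{M})$ is immediate from $\varphi$ being an isomorphism applied to the generating set of Theorem~\ref{thm_Mbasis}. \qed
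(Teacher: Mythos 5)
Your proposal is correct and follows exactly the route the paper takes: its proof is the one-line remark that the claim ``follows directly from Theorem~\ref{thm_Mbasis} and the mapping as defined in~\eqref{eq_MapRe_Enc}'', and you have simply written out the cancellation of the powers of $L(X)$ that this remark leaves implicit. Your observation that $\varphi$ is an $\Fqx$-module isomorphism onto its image, hence carries the generating set of Theorem~\ref{thm_Mbasis} to a generating set, is precisely the intended argument.
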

\begin{proof}
Follows directly from Theorem~\ref{thm_Mbasis} and the mapping as defined~\eqref{eq_MapRe_Enc}.
\end{proof}
We can represent the basis of $\varphi(\Mod[\sell]{M})$ by the $(\ell+1) \times (\ell+1)$ matrix over $\Fqx$ (compare to~\eqref{eq_BasisOfModul}):
\begin{equation} \label{eq_BasisOfModul_ReEnc}
\BasisReEnc \defeq
  \left(
    \begin{array}{ccccccccc}
\GReEnc^s                        &                            &                            &                           &                           &                                                      &  &     \\
\GReEnc^{s-1}(-\RMod)            & \GReEnc^{s-1}              &                            &                           &                           & \multicolumn{2}{c}{\multirow{2}{*}{\mbox{\Large 0}}} &        \\
\GReEnc^{s-2}(-\RMod)^2     & 2 \GReEnc^{s-2}(-\RMod)            &  \GReEnc^{s-2}                          &                           &                           & &        \\
\vdots                           &                            &                     \ddots &                            &                           &                                                      &        \\
(-\RMod)^s                       & \binom{s}{1}(-\RMod)^{s-1} & \dots                      & 1                         &                           &                                                      &        \\
                                 & L(-\RMod)^{s}              & \multicolumn{2}{c}{\dots}  & \hspace*{-2em}L                         &                           &                                                               \\
                                 &                            & L^2 (-\RMod)^{s}           & \multicolumn{2}{c}{\dots} & L^2                       &                                                      &        \\
\multirow{2}{*}{\mbox{\Large 0}} &                            & \multicolumn{2}{c}{\ddots} &                           &                           &                                                      & \ddots \\
                                 &                            &                            & L^{\ell-s}(-\RMod)^{s}    & \multicolumn{2}{c}{\dots} &                                                      &  & L^{\ell-s} 
    \end{array}
  \right).
\end{equation}
We need an analogue of Corollary \ref{cor_WeightSol} for the $(1,-1)$-weighted degree, i.e.,~we should find a diagonal matrix to multiply on $\BasisReEnc$ such that when module minimising the result, we will have a row corresponding to a polynomial in $\varphi(\Mod[\sell] M)$ with minimal $(1,-1)$-weighted degree.
We cannot use $\diag(1, X^{-1},\ldots,X^{-\ell})$, since multiplying with negative powers of $X$ might cause us to leave the polynomial ring; however, we can to this add the same power to all the diagonal elements such that they become non-negative:
\begin{equation} \label{eq_WeightedReEnc}
\M[\ell]{\bar W}  = \diag \Big(X^{\ell},X^{\ell-1}, \dots, 1 \Big).
\end{equation}
Therefore, for a vector $\vec q = (Q_0(X)X^\ell,\ldots,Q_\ell(X)X^0)$ in the row-space of $\BasisReEnc \M[\ell]{\bar W}$ corresponds a polynomial $Q(X,Y) = \sum_{t=0}^\ell Q_t(X) Y^t$, and we will have the identity $\deg \vec q = \wdeg{1}{-1} Q + \ell$.
Obviously then, a minimal degree vector in $\BasisReEnc \M[\ell]{\bar W}$ is a minimal $(1,-1)$-weighted polynomial in $\varphi(\Mod[\sell] M)$.

Finally, we need adjusted variants of the micro-steps I and II.
The necessary adaptions of Algorithm~\ref{alg_multitrial} are summarised in the following lemma:
\begin{lem} \label{lem_MicroStepsReEnc}
Let $\M {\bar B} \in \Fqx^{(\ell+1) \times (\ell+1)}$ be the matrix representation of a basis of $\varphi(\Mod[s,\ell] M)$.
If $\M {\bar B}  \M[\ell]{\bar W}$ is in weak Popov form, then one of the rows of $\M {\bar B}$ corresponds to a polynomial in $\varphi(\Mod[s,\ell] M)$ with minimal $(1,-1)$-weighted degree.

Modified micro-steps of type I and II can be obtained from the following.
Let $\bar B\T 0(X,Y), \ldots, \bar B\T \ell(X,Y) \in \Fqxy$ be a basis of $\Mod[\sell]{M}$.
Then the following is a basis of $\Mod[\sellI] M$:
\begin{equation}
  \label{eqn_reencB1}
  \bar B\T 0(X,Y),\ \ldots\ ,\ \bar B\T \ell(X,Y), (L(X)Y)^{\ell-s+1}(Y-\RMod(X))^s.
\end{equation}
Similarly, the following is a basis of $\Mod[\sellII]M$:
\begin{equation}
  \label{eqn_reencB2}
\GReEnc^{s+1}(X),\ \bar B\T 0(X,Y)(Y-\RMod(X)),\ \ldots\ ,\ \bar B\T \ell(X,Y)(Y-\RMod(X)).
\end{equation}
\end{lem}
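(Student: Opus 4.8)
The plan is to deduce all three assertions from results already in hand, using that the re-encoding map $\varphi$ of~\eqref{eq_MapRe_Enc} is an $\Fqx$-module isomorphism.

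For the first assertion I would repeat the argument of Corollary~\ref{cor_WeightSol} almost verbatim, with the $(1,-1)$-weighted degree replacing the $(1,k-1)$-weighted degree and the diagonal matrix $\M[\ell]{\bar W} = \diag(X^\ell,\dots,1)$ of~\eqref{eq_WeightedReEnc} replacing $\M[\ell]{W}$. Right-multiplication by $\M[\ell]{\bar W}$ is an injective $\Fqx$-linear map from the row space of $\M{\bar B}$ onto the row space of $\M{\bar B}\M[\ell]{\bar W}$, and — as recorded in the paragraph preceding the lemma — a vector $\vec q$ in the row space of $\M{\bar B}$ representing $Q(X,Y) = \sum_t Q_t(X)Y^t \in \varphi(\Mod[\sell] M)$ satisfies $\deg(\vec q\, \M[\ell]{\bar W}) = \wdeg{1}{-1} Q + \ell$. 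Since $\M{\bar B}\M[\ell]{\bar W}$ is square and in weak Popov form, Lemma~\ref{lem_minrow} supplies a row of it of minimal degree in its row space; by the displayed identity the corresponding row of $\M{\bar B}$ then represents a polynomial of minimal $(1,-1)$-weighted degree in $\varphi(\Mod[\sell] M)$. Equivalently one may transport Corollary~\ref{cor_WeightSol} across $\varphi$ and invoke Corollary~\ref{cor_Wdeg_ReEnc}.

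For the two micro-step bases I would simply apply $\varphi$ to Lemmas~\ref{lem_MicroStepOne} and~\ref{lem_MicroStepTwo}. Put $B\T t = \varphi^{-1}(\bar B\T t)$, so that $B\T 0,\dots,B\T\ell$ is a basis of $\Mod[\sell] M$. By Lemma~\ref{lem_MicroStepOne}, $B\T 0,\dots,B\T\ell, Y^{\ell-s+1}(Y-R)^s$ is a basis of $\Mod[\sellI] M$; applying the multiplicity-$s$ isomorphism $\varphi$ — whose formula $Q \mapsto L^{-s}Q(X,LY)$ is unchanged whether the $Y$-degree bound is $\ell$ or $\ell+1$ — yields a basis of $\varphi(\Mod[\sellI] M)$, and a one-line substitution gives $\varphi\big(Y^{\ell-s+1}(Y-R)^s\big) = L^{-s}(LY)^{\ell-s+1}(LY-R)^s = (LY)^{\ell-s+1}(Y-\RMod)^s$, the last generator of~\eqref{eqn_reencB1}. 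For type~II the target module $\Mod[\sellII] M$ carries multiplicity $s+1$, so one must instead apply the re-encoding isomorphism $\varphi'(Q) = L^{-(s+1)} Q(X, LY)$ for that module to the basis $G^{s+1}, B\T 0(Y-R),\dots,B\T\ell(Y-R)$ of $\Mod[\sellII] M$ furnished by Lemma~\ref{lem_MicroStepTwo}. One then computes $\varphi'(G^{s+1}) = (G/L)^{s+1} = \GReEnc^{s+1}$ and $\varphi'\big(B\T t(Y-R)\big) = L^{-(s+1)} B\T t(X,LY)\, L\, (Y-\RMod) = \varphi(B\T t)\,(Y-\RMod) = \bar B\T t(Y-\RMod)$, which is precisely~\eqref{eqn_reencB2}.

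The only delicate point — the main obstacle, such as it is — is this dependence of the re-encoding map on the multiplicity: in the type-II step one must not reuse $\varphi$ but the multiplicity-$(s+1)$ version $\varphi'$, and then observe that the factor $L$ picked up when $Y \mapsto LY$ acts inside the new $(Y-R)$ factor cancels exactly against the extra $L^{-1}$ in the normalisation of $\varphi'$, leaving $\bar B\T t(Y-\RMod)$. Beyond this the computations are routine substitutions, and the generating property transfers automatically because $\varphi$ (respectively $\varphi'$) is an $\Fqx$-linear bijection of the relevant modules and hence carries generating sets to generating sets.
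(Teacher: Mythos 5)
Your proof is correct, and it takes a slightly different route than the paper. The paper disposes of both parts in two lines: the first claim is said to follow from the discussion preceding the lemma analogously to Corollary~\ref{cor_WeightSol} (exactly your argument via Lemma~\ref{lem_minrow} and the identity $\deg(\vec q\,\M[\ell]{\bar W}) = \wdeg{1}{-1}Q + \ell$), while the bases \eqref{eqn_reencB1} and \eqref{eqn_reencB2} are obtained by re-running the proofs of Lemmas~\ref{lem_MicroStepOne} and~\ref{lem_MicroStepTwo} verbatim on the re-encoded generators $\bar P\T t$ of Theorem~\ref{thm_Mbasis_ReEnc}, for which one checks that $\bar P_{\sellI}\T t$ for $t\le\ell$ are the generators of the smaller module and that $\bar P_{\sellII}\T t = \bar P_{\sell}\T{t-1}(Y-\RMod)$ for $t>0$. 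You instead bypass Theorem~\ref{thm_Mbasis_ReEnc} and transport the unbarred Lemmas~\ref{lem_MicroStepOne} and~\ref{lem_MicroStepTwo} through the re-encoding isomorphism, computing the images of the extra generators by substitution. The two approaches are of comparable length, but yours has the merit of making explicit the one genuinely delicate point that the paper's ``completely analogous'' glosses over: the map $\varphi$ of \eqref{eq_MapRe_Enc} depends on the multiplicity, so for the type-II step one must switch to the multiplicity-$(s+1)$ map $\varphi'(Q)=L^{-(s+1)}Q(X,LY)$ and observe the cancellation $L^{-(s+1)}B\T t(X,LY)(LY-R) = \varphi(B\T t)(Y-\RMod)$ — a naive reuse of $\varphi$ would not produce \eqref{eqn_reencB2}. (Implicit in your argument, and worth a half-sentence, is that $\varphi$, respectively $\varphi'$, maps the enlarged modules into $\Fqxy$; this follows either from the divisibility $L^{s-t}\mid\yC Q t$ as in the paper's discussion, or simply because the images of your generating sets are polynomials and the maps are $\Fqx$-linear.) The paper's route avoids worrying about $\varphi$'s dependence on $s$ altogether by working only with the barred generators; your route is more structural and carries the earlier results over wholesale.
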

\begin{proof}
  The first part follows from the previous discussion and analogously to Corollary \ref{cor_WeightSol}.
  The recursive bases of \eqref{eqn_reencB1} and \eqref{eqn_reencB2} follow completely analogous to Lemmas \ref{lem_MicroStepOne} and \ref{lem_MicroStepTwo}, given Theorem \ref{thm_Mbasis_ReEnc}.\qed
\end{proof}

\begin{exa}
In the case of the \RS{16}{4} code over $\F{17}$ with final decoding radius $\tau=8$ as shown in Example~\ref{ex_algo164}, then $\deg L(X) = 4$ and the initial matrix satisfies:
\begin{align*}
\M[1,1]{\bar A} = \left(\begin{array}{rr}
                \bar G & 0 \\
                -\bar R & 1
              \end{array}\right) \quad
\M[1,1]{\bar A} \preceq \left(\begin{array}{rr}
      12 & \bot \\
      11 & 0
  \end{array}\right)  \quad
\M[1,1]{\bar A} \M[1]{\bar W} \preceq \left(\begin{array}{rr}
      13 & \bot \\
      12 & 0
  \end{array}\right).
\end{align*}
\end{exa}

\begin{rem}
  Brander briefly described in his thesis \cite{brander10} how to incorporate re-encoding into the Beelen--Brander interpolation algorithm by dividing out common powers of $L(X)$ in the first $s$ columns of $\M[\sell] A$.
  Here we construct instead $\M[\sell]{\bar A}$ where powers of $L(X)$ are also multiplied on the latter $\ell-s$ columns, since we need the simple recursions of bases of $\varphi(\Mod[\sell] M)$ which enables the micro-steps.

  However, before applying module minimisation, we could divide away this common factor from those columns and just adjust the weights accordingly (i.e., multiplying $X^{k(t-s)}$ on the $t$th element of $\M[\ell]{\bar W}$); this will further reduce the complexity of the minimisation step.
  The micro-steps would then need to be modified; the simplest way to repair this is to multiply back the powers of $L(X)$ before applying a micro-step, and then remove them again afterwards.
  With a bit more care one can easily do this cheaper, though the details become technical.
\end{rem}
In asymptotic terms, the computational complexity of the iterative interpolation method stays exactly the same with re-encoding as without it, since $O(n - \deg L) = O(n-k) = O(n)$ under the usual assumption of $n/k$ being constant.
The same is true for the original re-encoding scheme of K\"otter--Vardy~\cite{koetter03,koetter11}.
However, most of the polynomials that are handled in the matrix minimisation will be of much lower degree than without re-encoding; for relatively high-rate codes this will definitely be noticeable in real computation time.

In~\cite[Thm. 10]{koetter11}, it was shown that the root-finding procedure of Roth--Ruckenstein~\cite{rothRuckenstein00}, or its divide-\&-conquer variant by Alekhnovich~\cite{alekhnovich05} can be directly applied to an interpolation polynomial in $\varphi(\Mod[s,\ell] M)$, 
so we can avoid to construct and work on the larger polynomial in $\Mod[s,\ell] M$.
Instead of finding $f(X)$, one will find the power series expansion of $f(X)/L(X)$.
The fraction in reduced form can be retrieved from the power series expansion using Pad\'e approximation: e.g.~by the Berlekamp--Massey algorithm or by module minimising a certain $2\times 2$ matrix.
See e.g.~\cite[Section 2.5]{nielsenThesis} for a general description of the latter.
From the reduced fraction, $f(X)$ can be obtained by re-extending the fraction.

Interestingly, one can easily calculate that the orthogonality defects stays the same, i.e.,~$\OD{\M[\sellI] C\stepI \M[\ell+1]W} = \OD{\M[\sellI] {\bar C}\stepI \M[\ell+1]{\bar W}}$, where $\M[\sellI] {\bar C\stepI}$ is the matrix corresponding to a micro-step of type I in the re-encoded version.
The analogue equality holds for type II.
This means that, roughly, the number of row operations carried out by the module minimisation algorithm is unchanged.

\section{Simulation Results} \label{sec_simulation}
The proposed algorithm has been implemented in Sage, Version 5.13~\cite{sage2013}, using the Mulders--Storjohann algorithm for module minimisation, and the Roth--Ruckenstein root-finding procedure \cite{rothRuckenstein00}.
For comparison, we also implemented construction of $\M[\sell] A$, leading immediately to the Lee--O'Sullivan algorithm~\cite{leeOSullivan08}.

Figure~\ref{fig_sim164} shows the total number of finite field multiplications performed for complete runs of the decoding algorithms, using the \RS{16}{4} code over $\F{17}$ as considered in Example~\ref{ex_rs164param}.
For each algorithm, and for each number of errors $\varepsilon \leq \tau$, 1000 random codewords were generated  and subjected to a random error pattern of weight precisely $\varepsilon$.
The solid line gives the number of operations of the proposed multi-trial algorithm for any number of errors.
For the Lee--O'Sullivan decoder, one chooses the maximal decoding radius initially, and the figure depicts choosing both $\tau=7,8$ as dashed lines.
The minimum-distance choice of $\tau=6$ coincides completely with the multi-trial algorithm since $\M[\sell] A$ for $\varepsilon \leq 6$ and so is not shown.

The figure demonstrate that the multi-trial algorithm provides a huge gain whenever there are fewer errors than Lee--O'Sullivan's target, while not having a disadvantage in the matching case.
The right-hand graph shows the complexity when using the re-encoding transformation, and we can observe a speedup of between 30\% and 50\%.
\begin{figure}[htb]
 \centering 
 \hspace*{-1em}%
 \subfigure[Without Re-Encoding]{\includegraphics[width=.505\textwidth]{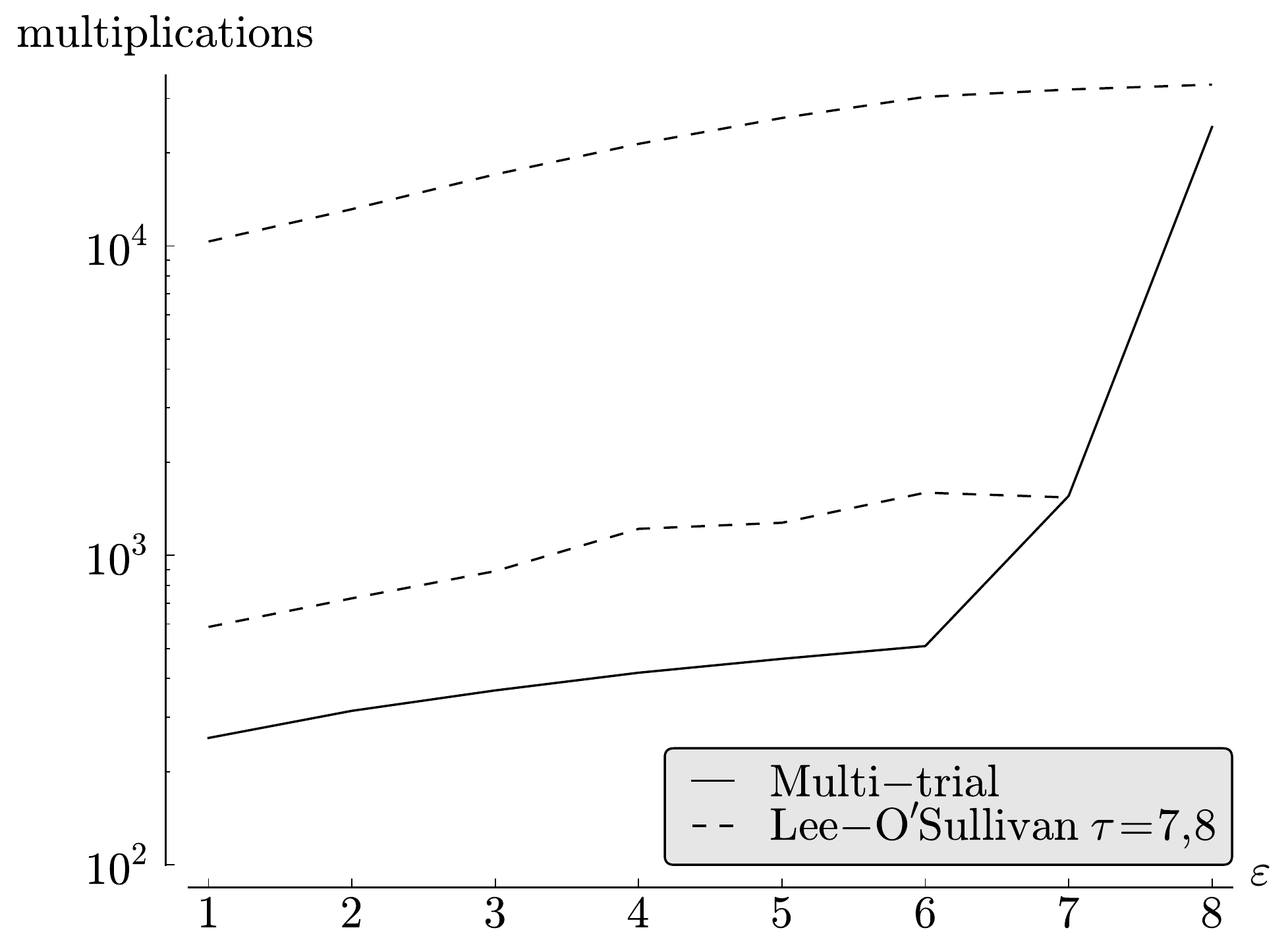}}
 \subfigure[With Re-Encoding]{\includegraphics[width=.505\textwidth]{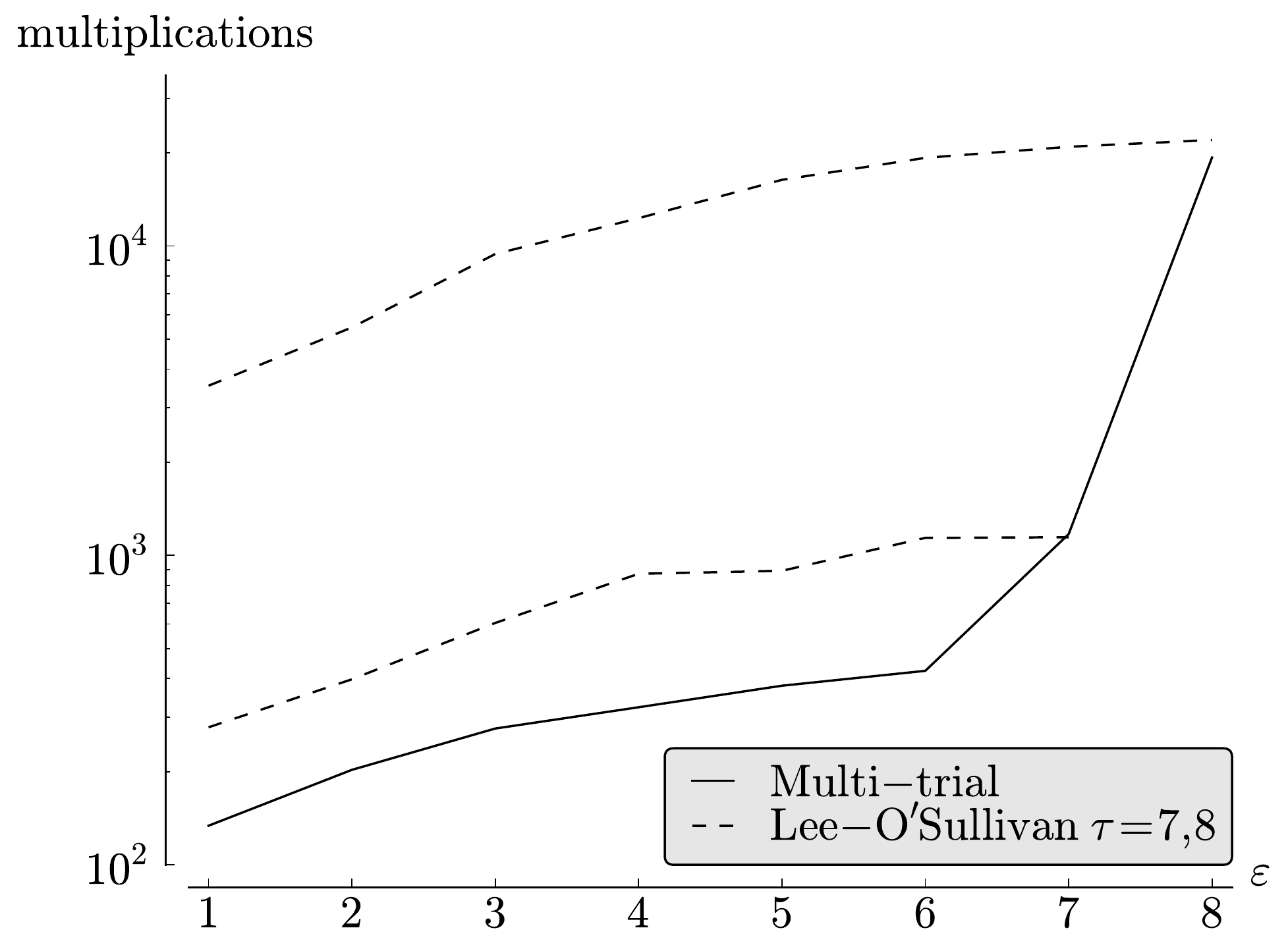}}%
 \hspace*{-1em}
 \caption{Comparison of the number of field multiplications for list decoding of an \RS{16}{4} code over $\F{17}$. The operations of the Lee--O'Sullivan~\cite{leeOSullivan08} algorithm with different aimed decoding radii and the proposed multi-trial algorithm are illustrated.
   Note the logarithmic $y$-axis.
   Subfigure (a) illustrates the number of operations without re-encoded points. The gain due to the re-encoding procedure is visible in Subfigure (b).}
\label{fig_sim164}
\end{figure}
The number of operations spent in constructing the matrices and for the root-finding step are relatively small compared to the number of operations needed for the row-reduction (for the \RS{16}{4} code, less than 5\%). 

Unfortunately, the performance of our operation-counting implementation for the decoding algorithm does not allow to run simulations with much larger codes.
Without counting, however, we can use optimised data structures in Sage~\cite{sage2013}
which run much faster. We have observed for the short \RS{16}{4} code that the system-clock time spent with these data structures correspond very well to the operations
counted. Extrapolating, we can, albeit with larger uncertainty, make comparisons
on larger codes using system-clock measurements.
Doing so, we have observed behaviour resembling that of the \RS{16}{4} code.
For example, decoding a \RS{64}{25} code up to 23 errors (requiring $(s,\ell) = (4,6)$) showed the multi-trial being slightly faster than Lee--O'Sullivan in the worst
case, while of course still giving a large improvement for fewer errors.
When decoding a \RS{255}{120} code up to 74 errors (requiring $(s,\ell) = (4,5)$), Lee--O'Sullivan~\cite{leeOSullivan08} was slightly faster by about 15\% in the worst case.
It should also be noted that root-finding took up significantly more time for these larger codes, around 15\% of the total time for Lee--O'Sullivan and 20\% for the multi-trial.

As with any simulation, there are caveats to these results.
In truth, only the wall-clock time spent by highly optimised implementations of various approaches can be fairly compared.
Also, we did not test the asymptotically fast Alekhnovich's method for module minimisation.
Investigations performed by Brander in Magma indicated that the gain in using this algorithm in place of Mulders--Storjohann might only be present once the code length exceeds about 4000 \cite{brander10}.

The implementation of the algorithm, including the simulation setup, is freely available via \url{http://jsrn.dk/code-for-articles}.

\section{Conclusion} \label{sec_concl}
An iterative interpolation procedure for list decoding GRS codes based on Alekhnovich's module minimisation was proposed and shown to have the same worst-case complexity as Beelen and Brander's \cite{beelenBrander}.
We showed how the target module used in Beelen--Brander can be minimised in a progressive manner, starting with a small module and systematically enlarging it, performing module minimisation in each step.
The procedure takes advantage of a new, slightly more fine-grained complexity analysis of Alekhnovich's algorithm, which implies that each of the module refinement steps runs fast.

We furthermore incorporated the re-encoding transformation of K\"otter and Vardy \cite{koetter03} into our method, which provides a noticeable, if not asymptotic, gain in computational complexity.

The main advantage of the algorithm is its granularity which makes it possible to perform fast multi-trial decoding: we attempt decoding for progressively larger decoding radii, and therefore find the list of codewords closest to the received.
This is done without a penalty in the worst case but with an obvious benefit in the average case.

The Beelen--Brander approach for interpolation is not the asymptotically fastest: using the module minimisation algorithm by Giorgi~\textit{et al.}~\cite{giorgi03}, one gains a factor $\ell$. 
By a completely different approach, Chowdhury et al.~\cite{chowdhury_faster_2014} further beat this by a factor $\ell/s$, achieving $O(\ell^2s^2 n \log^{O(1)}(n))$.
It is unclear for which sizes of the parameters these asymptotic improvements have concrete benefits, and whether a multi-trial approach can be developed for them.

\subsection*{Acknowledgement}
The authors thank Daniel Augot for fruitful discussions. This work has been supported by German Research Council “Deutsche Forschungsgemeinschaft” (DFG) under grant Bo867/22-1 and Ze1016/1-1.


\begin{thebibliography}{10}
\providecommand{\url}[1]{{#1}}
\providecommand{\urlprefix}{URL }
\expandafter\ifx\csname urlstyle\endcsname\relax
  \providecommand{\doi}[1]{DOI~\discretionary{}{}{}#1}\else
  \providecommand{\doi}{DOI~\discretionary{}{}{}\begingroup
  \urlstyle{rm}\Url}\fi

\bibitem{alekhnovich05}
Alekhnovich, M.: {Linear Diophantine Equations Over Polynomials and Soft
  Decoding of Reed--Solomon Codes}.
\newblock {IEEE} Trans. Inform. Theory \textbf{51}(7) (2005)

\bibitem{bassalygo_new_1965}
Bassalygo, L.A.: New upper bounds for error-correcting codes.
\newblock Probl. Inf. Transm. \textbf{1}(4), 41--44 (1965)

\bibitem{beelenBrander}
Beelen, P., Brander, K.: {Key equations for list decoding of Reed--Solomon
  codes and how to solve them}.
\newblock J. Symbolic Comput. \textbf{45}(7), 773--786 (2010)

\bibitem{beelen_rational_2013}
Beelen, P., H{\o}holdt, T., Nielsen, J.S.R., Wu, Y.: {On Rational
  Interpolation-Based List-Decoding and List-Decoding Binary Goppa Codes}.
\newblock {IEEE} Trans. Inform. Theory \textbf{59}(6), 3269--3281 (2013)

\bibitem{bernstein11}
Bernstein, D.J.: {List Decoding for Binary Goppa Codes}.
\newblock In: IWCC, pp. 62--80 (2011)

\bibitem{brander10}
Brander, K.: {Interpolation and List Decoding of Algebraic Codes}.
\newblock Ph.D. thesis, Technical University of Denmark (2010)

\bibitem{cassuto_average_2013}
Cassuto, Y., Bruck, J., {McEliece}, R.: {On the Average Complexity of
  Reed--Solomon List Decoders}.
\newblock {IEEE} Trans. Inform. Theory \textbf{59}(4), 2336--2351 (2013)

\bibitem{chowdhury_faster_2014}
Chowdhury, M.F.I., Jeannerod, C.P., Neiger, V., Schost, E., Villard, G.: Faster
  algorithms for multivariate interpolation with multiplicities and
  simultaneous polynomial approximations.
\newblock {arXiv:1402.0643}  (2014).
\newblock \urlprefix\url{http://arxiv.org/abs/1402.0643}

\bibitem{gathen}
von~zur Gathen, J., Gerhard, J.: {Modern Computer Algebra}.
\newblock Cambridge Univ Press (2003)

\bibitem{giorgi03}
Giorgi, P., Jeannerod, C., Villard, G.: {On the Complexity of Polynomial Matrix
  Computations}.
\newblock In: Proceedings of International Symposium on Symbolic and Algebraic
  Computation, pp. 135--142. ACM (2003)

\bibitem{guruSudan99}
Guruswami, V., Sudan, M.: {Improved Decoding of Reed--Solomon Codes and
  Algebraic Geometry Codes}.
\newblock {IEEE} Trans. Inform. Theory \textbf{45}(6), 1757--1767 (1999)

\bibitem{johnson62}
Johnson, S.M.: {A New Upper Bound for Error-Correcting Codes}.
\newblock {IEEE} Trans. Inform. Theory \textbf{46}, 203--207 (1962)

\bibitem{koetter11}
K\"otter, R., Ma, J., Vardy, A.: {The Re-Encoding Transformation in Algebraic
  List-Decoding of Reed--Solomon Codes}.
\newblock {IEEE} Trans. Inform. Theory \textbf{57}(2), 633--647 (2011)

\bibitem{kotter_complexity_2003}
K\"otter, R., Vardy, A.: {A Complexity Reducing Transformation in Algebraic
  List Decoding of Reed--Solomon Codes}.
\newblock In: {IEEE} Information Theory Workshop ({ITW)}, pp. 10--13. Paris,
  France (2003)

\bibitem{koetter03}
K\"otter, R., Vardy, A.: {Algebraic Soft-Decision Decoding of Reed-Solomon
  Codes}.
\newblock {IEEE} Trans. Inform. Theory \textbf{49}(11), 2809--2825 (2003)

\bibitem{leeOSullivan08}
Lee, K., O'Sullivan, M.E.: {List Decoding of Reed--Solomon Codes from a
  Gr\"o{}bner Basis Perspective}.
\newblock J. Symbolic Comput. \textbf{43}(9), 645--658 (2008)

\bibitem{lenstra85}
Lenstra, A.: {Factoring Multivariate Polynomials over Finite Fields}.
\newblock J. Comput. Syst. Sci. \textbf{30}(2), 235--248 (1985)

\bibitem{mulders03}
Mulders, T., Storjohann, A.: {On Lattice Reduction for Polynomial Matrices}.
\newblock J. Symbolic Comput. \textbf{35}(4), 377--401 (2003)

\bibitem{nielsenThesis}
Nielsen, J.S.R.: List decoding of algebraic codes.
\newblock Ph.D. thesis, Technical University of Denmark (2013)

\bibitem{nielsen13}
Nielsen, J.S.R., Zeh, A.: {Multi-Trial Guruswami--Sudan Decoding for
  Generalised Reed--Solomon Codes}.
\newblock In: Workshop on Coding and Cryptography (2013)

\bibitem{rothRuckenstein00}
Roth, R., Ruckenstein, G.: {Efficient Decoding of Reed--Solomon Codes Beyond
  Half the Minimum Distance}.
\newblock {IEEE} Trans. Inform. Theory \textbf{46}(1), 246 --257 (2000)

\bibitem{sage2013}
Stein, W., et~al.: {S}age {M}athematics {S}oftware ({V}ersion 5.13, Release
  Date: 2013-12-15).
\newblock The Sage Development Team (2013).
\newblock {\tt http://www.sagemath.org}

\bibitem{sudan97}
Sudan, M.: {Decoding of Reed--Solomon Codes beyond the Error-Correction Bound}.
\newblock J. Complexity \textbf{13}(1), 180--193 (1997)

\bibitem{siyun12}
Tang, S., Chen, L., Ma, X.: {Progressive List-Enlarged Algebraic Soft Decoding
  of Reed-Solomon Codes}.
\newblock {IEEE} Commun. Lett. \textbf{16}(6), 901--904 (2012)

\bibitem{zeh11interpol}
Zeh, A., Gentner, C., Augot, D.: {An Interpolation Procedure for List Decoding
  Reed--Solomon Codes Based on Generalized Key Equations}.
\newblock {IEEE} Trans. Inform. Theory \textbf{57}(9), 5946--5959 (2011)

\end{thebibliography}
\end{document}